\documentclass[aps,prx,print,normalem]{revtex4-2}

\usepackage{dcolumn}   
\usepackage{bm}        

\usepackage{hyperref}
\usepackage{physics}
\usepackage{capt-of}
\usepackage{mathtools}

\usepackage{amsmath,amsthm,amssymb,amsfonts,enumerate,color}
\usepackage{graphicx,psfrag}
\usepackage{listings}
\usepackage{epstopdf}
\usepackage{float}
\usepackage{amsmath}

\usepackage{geometry}
\newgeometry{vmargin={20mm}, hmargin={16mm,16mm}} 

\usepackage{fancyhdr} 
\usepackage{extramarks} 
\usepackage{color} 
\usepackage{tabularx}

\usepackage{array}
\usepackage{accents}
\usepackage{geometry}
\usepackage{courier} 
\usepackage{listings}
\usepackage{float}
\usepackage{varwidth}
\usepackage{verbatim} 
\usepackage{comment} 
\usepackage[utf8]{inputenc}
\usepackage[english]{babel}
\usepackage{tikz}
\usetikzlibrary{arrows,arrows}
\usetikzlibrary{positioning}
\usepackage{centernot}
\usepackage{enumitem}
\usepackage{lipsum}

\tikzstyle{block} = [draw, fill=blue!20, rectangle, 
    minimum height=3em, minimum width=6em]
\tikzstyle{sum} = [draw, circle, node distance=1cm]
\tikzstyle{input} = [coordinate]
\tikzstyle{output} = [coordinate]
\tikzstyle{pinstyle} = [pin edge={to-,thin,black}]

\def\be{\begin{equation}}
\def\ee{\end{equation}}
\def\bea{\begin{eqnarray}}
\def\eea{\end{eqnarray}}

\def\v{{\bf v}}

\def\({\left(}
\def\){\right)}

\def\sfu{\mbox{\tiny{fus}}}
\def\sfi{\mbox{\tiny{fis}}}

\def\red{\textcolor{red}}
\def\blue{\textcolor{blue}}

\usepackage{marginnote}

\usepackage{dsfont}
\usepackage{bbold}
\usepackage{mathbbol}
\usepackage{newtxmath}

\usepackage{fancyhdr} 
\usepackage{extramarks} 
\usepackage{color} 
\usepackage{array}
\usepackage{accents}
\usepackage{ulem}
\usepackage{geometry}
\usepackage{courier} 
\usepackage{listings}
\usepackage{float}
\usepackage{varwidth}
\usepackage{verbatim} 
\usepackage{comment} 
\usepackage[utf8]{inputenc}
\usepackage[english]{babel}
\usepackage{tikz}
\usetikzlibrary{arrows,arrows}
\usetikzlibrary{positioning}
\usepackage[version=4]{mhchem}
\usepackage{tcolorbox}
\usepackage{enumitem}

\newcounter{glossarycount}

\newtcolorbox{glossarybox}[2][]{
  float,
  floatplacement=tbp, 
  #1,  
  code={\refstepcounter{glossarycount}},
  title={Glossary \theglossarycount: #2},
  width=\textwidth,
  boxrule=0.5pt,
  colback=white,
  colframe=gray!60!white,
  colbacktitle=gray!15!white,
  coltitle=black,
  fonttitle=\bfseries\large,
  halign title=center,
  sharp corners
}

\usepackage{xr}
\def\be{\begin{equation}}
\def\ee{\end{equation}}
\def\bea{\begin{eqnarray}}
\def\eea{\end{eqnarray}}

\def\v{{\bf v}}

\def\({\left(}
\def\){\right)}

\def\v{\textsubscript{v}}
\def\red{\textcolor{red}}
\def\blue{\textcolor{blue}}

\usepackage{amsthm}
\theoremstyle{plain}
\newtheorem{prop}{Proposition}
\newtheorem{corollary}{Corollary}

\theoremstyle{definition} 
\newtheorem{remark}{Remark}

\makeatletter\AtBeginDocument{\let\@elt\relax}\makeatother
\usepackage{graphicx}

\begin{document}

 \title{
 Building and maintaining a System of Intracellular Compartments
}
\author{Amit Kumar$^{1,2}$}
\author{Madan Rao$^{2}$}
\email{rao.madan@gmail.com}
\affiliation{
$^{1}$Raman Research Institute, C.V. Raman Avenue, Bengaluru 560080, India\\
$^{2}$Centre for Living Machines, National Centre for Biological Sciences (TIFR), GKVK Campus, Bellary Road, Bengaluru 560065, India}






\begin{abstract}
 Organelle patterning and its heritability remain central mysteries in cell biology, highlighting the fundamental tension between genetic inheritance and self-assembly. Here, we explore the nonequilibrium assembly and emdedded size control of the Golgi cisternae and endosomes, amid a continuous flux of membrane traffic, within a stochastic framework of mechanochemical fusion-fission cycles that violate detailed balance. Using a dynamical systems approach, we identify distinct, robust  regimes, ranging from fixed points to limit cycles with definite phase relations between cisternae. We identify these dynamical regimes with diverse phenotypes, from stable cisternae to periodic, cell-cycle-dependent dissolution/reassembly of cisternae to  cisternal progression.  We analyse its dynamic response to systematic perturbations or driving protocols and make definite predictions that may be tested experimentally. Our analysis reveals that the two competing models of Golgi organization—vesicular transport and cisternal progression—are, in fact, two phases of the same underlying nonequilibrium process. We see that cisternal size homeostasis is brought about by
a size-dependent embedded control
system driven by fusion-fission kernels. Finally, our framework offers a strategy for controlling cisternal number and chemical identity by modulating the interplay between glycosylation enzymes and membrane fission-fusion dynamics.

\end{abstract}

\maketitle

\section{Introduction} 

Understanding how subcellular structures of definite size, shape and chemical identity are assembled and maintained under nonequilibrium conditions remains a fundamental challenge in cell biology~\cite{Alberts2008,Misteli2001,Marshall2020a}. The assembly and homeostatic control of organelles, such as the Golgi complex and Endosomes, are of particular interest~\cite{chan}; these membrane-bound compartments must reliably maintain their identity despite a continuous vectorial flux of vesicles driven by active transport, fusion and fission.
Despite detailed  knowledge of the molecular players involved in their membrane remodeling, 
the key {\it physical principles} underlying organelle 
assembly in such {\it open} systems
remain unclear~\cite{Glick2009}.
The objective of this paper is to arrive at a quantitative description that clearly lays out the organising principles governing the assembly of the system of Golgi cisternae 
and make testable predictions. 
In doing so, we arrive at some surprising  insights into intracellular organisation.



Over the years, many  descriptive models 
 of Golgi organisation have been proposed, such as vesicular transport~\cite{rothemanves} and cisternal progression~\cite{losev,bonfanti,Matsuura-Tokita},
 often portrayed as being contesting~\cite{Sens2013,rotheman}. 
Interpretations of these models using computational and physics-based approaches to qualitatively and quantitatively confront with experiments, by and large do not explore the
space of steady states as a function of fusion-fission parameters or their
 dynamical stability,  nor do they study feedback control mechanisms that maintain the nonequilibrium steady state; see however~\cite{Sachdeva2016,Vagne2018,Sens2013}. Yet it is precisely these features — the structure of the steady state and its stability — that we believe are central to understanding Golgi organisation. In our view, the hallmark of Golgi nonequilibrium assembly is its dynamical steady state — a system of cisternae sustained by a finite vesicular flux — that remains robust to cellular noise.

 


Our analysis draws inspiration from recent discussions
on the nonequilibrium assembly and size control of the diversity of filamentous organelles in eukaryotic cells, such as 
flagella, cilia, filopodia, which are built from the  assembly-disassembly of microtubule or actin based structures~\cite{Mohapatra2016, banerjee, mukherjee, debashish, Marshall2020b}. 
The takeaway from these studies is that homeostasis of structures at a fixed size can be achieved only with size-dependent assembly/disassembly rates~\cite{Mohapatra2016}. 

 In this paper, we study  the stochastic time evolution of the  sizes  of the Golgi cisternae subject to nonequilibrium fusion and fission modelled as  
discrete Markov cycles.
By integrating over the cycle time of the ``fast'' fusion-fission processes, we derive dynamical mean field equations for the cisternal sizes. 
The dynamical system for multiple cisternae shows additional nonequilibrium steady states, including 
 limit cycles with definite phase relations between cisternae.
We identify such nonequilibrium steady states with distinct phenotypic outcomes such as vesicle transport, cisternal progression, or the periodic dissolution and reformation of cisternae across the cell cycle. 
We analyse its dynamic response to systematic perturbations or driving protocols and make definite predictions that may be tested experimentally. Our work applies to cisternal dynamics and their organisation in a variety of cellular contexts, both in  health and in disease.
Finally, we study the robustness of this system of cisternae 
to extrinsic (influx) and intrinsic (chemical) noise, using analytic approaches and Gillespie simulations. 
Taken as a whole,  our study argues that the Vesicular Transport and Cisternal Progression models and their variants are different phenotypic outcomes of the same underlying physical description that 
must occupy different locations in a nonequilibrium phase diagram. 
We see that the active mechanochemistry of the fusion-fission cycles naturally 
gives rise to a size-dependent {\it embedded feedforward control}  via fusion-fission kernels, that
successfully
maintains stable cisternae in the presence of a net vectorial flux of vesicles.
For mathematical derivations and analyses of the geometry of flows of dynamical systems, the reader may consult \ref{sec:si00}-\ref{sec:robust_si} of the 
{\it Supplementary Information} (SI). For convenience, we display the numerical values of parameters used in the study in Table\,\ref{tab:tabl1} and a compact Glossary\,\ref{box:definitions} of dynamical systems nomenclature.







\section{Stochastic description of the nonequilibrium assembly of cisternae}
\label{sec:physical_basis}


The buildup and maintenance of Golgi cisternae is a consequence of three major nonequilibrium processes, namely, the continual anterograde-retrograde flux of spherical ($50$\,nm diameter) and/or tubular transport vesicles injected from the endoplasmic reticulum (ER) towards the plasma membrane (PM) and the active processes of fusion and fission of these vesicles to and from cisternae (Fig.\,\ref{fig:golgi_flux}(a)) mediated by specific GTP binding enzymes and specialised proteins, such as Rab proteins, tethering proteins, v-SNARE and t-SNARE for fusion~\cite{Grosshans,kliesh} and Arf and COP proteins for fission~\cite{glick1,mal1,weiland}. 
We study the kinetics of nonequilibrium assembly at time scales longer than enzymatic cycle time scales \cite{arf1,allin} and ignore details of the various molecules involved in fusion and fission -- we will collectively refer to these enzymes and their associated proteins as {\it fusogens} and {\it fisogens}, respectively. Each fusion (fission)  event 
adds (removes) membrane area and lumenal volume to (from) the cisternae,  referred to as addition (removal) of cisternal ``mass'' or ``size''~\cite{Sachdeva2016,Sens2013}.




We first describe the kinetics of nonequilibrium assembly of a {\it single} Golgi cisterna of size $M$ (in units of typical transport vesicle size) in terms of a stochastic master equation that incorporates the physics of fusion and fission via 
 discrete Markov cycles (for the master equation for $n$-cisternae, see \ref{sec:si0} and \ref{sec:Mastereqtwocisternae}). These cycles,
defined by  internal states 
and inter-state transitions, Fig.\,\ref{fig:golgi_flux}(b,c), are as follows: 
Upon activation from an {\it external} state (labelled $\vert 0\rangle$), the local cisternal membrane configuration associated with fusogens ($+$ species) or fisogens ($-$ species) could transition into an ordered  sequence of internal states that are indexed by $i\in\{1,2,3,\dots, \aleph_\pm\}$, where $\aleph_\pm$ is the number of internal states,
before winding back to the external state $\vert 0\rangle$, terminating this cycle and priming for the next.

We make the following assumptions, supported empirically -- (i) 
Vesicles of unit size are {\it injected} from the ER at a specified rate, constituting a net vectorial current
from the ER (left) to the PM (right)~\cite{dmitrieff, wang2011golgi, altanbonnet2004molecular, Glick2009, lowe2011recent} (Fig.\,\ref{fig:golgi_flux}(a)). (ii) Vesicles injected from the ER fuse at a preassigned {\it nucleation} site with a given rate to initiate a growing cisterna~\cite{loweg}. 
(iii) Initiation of the 
    Fusion and Fission cycles are {\it independent} stochastic events, catalysed by distinct sets of enzymes. (iv)
    There is a finite cisternal pool of fusogens and fisogens, set by expression levels of the cell.
    (v)
    Internal state transitions in the fusion-fission cycle are {\it fast} (milliseconds-seconds~\cite{arf1,allin}) compared to the time scale over which cisternal size is updated (seconds-minutes~\cite{Matsuura-Tokita}). (vi)
    Fusion and Fission are {\it fast} and so we assume that the fusion or fission events are discrete and non-overlapping.
    In \ref{sec:si0}, we consider the case of multiple fusion-fission events in a time interval $\Delta t$ over which the cisternal size is updated.

With these assumptions, we write down a master equation describing the time evolution of the
probability distribution
$P(M, i, k, t \,\vert \,N\v,N_{+},N_{-})$
of a cisterna of size $M$ at specific internal states
 $i \in \{1, \ldots, \aleph_{+}\}$ and $k \in \{1, \ldots, \aleph_{-}\}$, 
 of the fusion and fission cycle, respectively, at time $t$,
conditioned on the availability of $N\v$ vesicles,  $N_{+}$ fusogens, and $N_{-}$ fisogens. 
For concreteness, we will take the fusion and fission Markov cycles to have  4-states, $\aleph_{+}=\aleph_{-}=4$
  (Fig.\,\ref{fig:golgi_flux}(b,c)). The change in cisternal size $M$ is actuated by transitions between internal states (Fig.\,\ref{fig:golgi_fluxSI}(e)), resulting in the dynamics of the conditional distribution above, abbreviated $P(M, i, k, t)$
  
  
\begin{figure*}[t!]
\centering
\includegraphics[width=\textwidth]{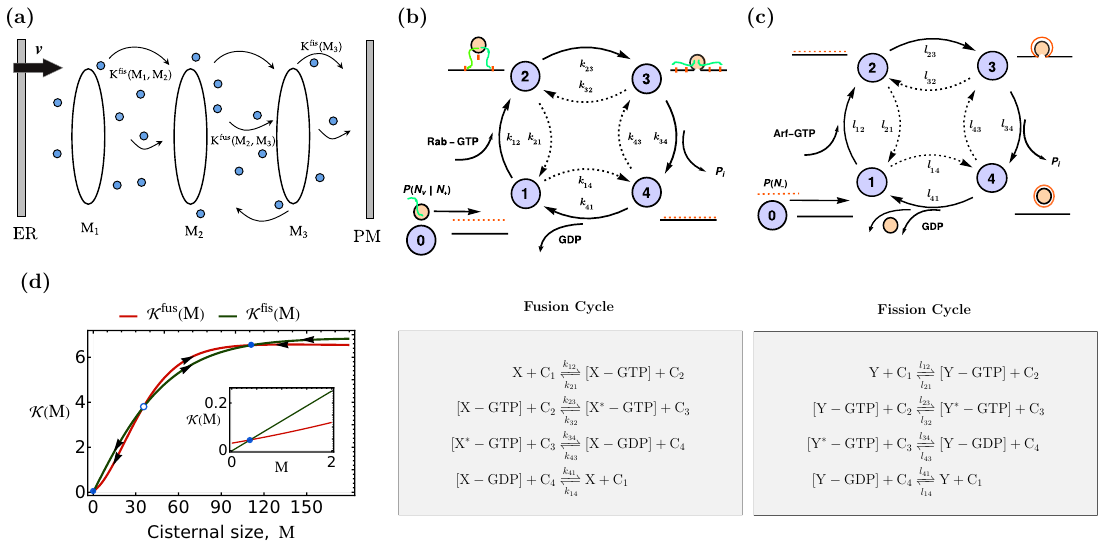}
\caption{{\bf Nonequilibrum assembly of Golgi cisternae via fusion-fission.} (a) Schematic showing Golgi cisternae of sizes $M_1, M_2, M_3$ maintained by the active fusion and fission of transport vesicles (blue filled circles) that form a continual net vectorial flux  from the endoplasmic reticulum (ER) to the plasma membrane (PM), with influx rate $v$ and  fusion-fission flux kernels $K^{\sfu}, K^{\sfi}$. 
 (b) Active fusion described by a nonequilibrium 4-state Markov cycle, is activated by Rab-GTPase, tethering proteins and other fusogens such as v-SNARE at the transport vesicle and its cognate t-SNARE proteins at the cisterna that mark the docking site~\cite{Alberts2008}. 
 The external
state $|0\rangle$ represents the fusion-competent transport vesicle+fusogen.  The internal states denoted by $|i\rangle$, $i=1,\ldots,4$, represent the {\it chemical} coordinate of the enzyme-fusogen-vesicle complex and the {\it configurational} coordinate of the cisternal membrane.
Thus, $|1\rangle$ $\eqqcolon$ (X, C$_1$), represents Rab-GDP-fusogen-vesicle
  and the undeformed cisternal membrane. 
Upon binding to GTP, $|1\rangle \to |2\rangle$ $\eqqcolon$ (X-GTP, C$_2$), which represents  Rab-GTP-fusogen-vesicle bound to the cisternal membrane
at the docking site.
This drives the formation of a membrane fusion intermediate, while undergoing a conformation change -  $|3\rangle$ $\eqqcolon$ (X$^*$-GTP, C$_3$) -  enroute to complete fusion  $|4\rangle$ $\eqqcolon$ (X$^*$-GDP, C$_4$) and the conversion to the inactivated Rab-GDP which then unbinds from the cisternal membrane, thus completing the fusion cycle and making the fusogens available for another round of fusion. The backward transition rates (dashed line-arrows) are slower, the energy consuming rates and the consequent breaking of detailed balance ensure that the enzymatic cycle predominantly proceeds in the forward direction.
On completion of this fusion cycle, the cisternal size increases by one unit. 
(c) Active fission described by a nonequilibrium 4-state Markov cycle, is activated by Arf-GTPase and other fisogens such as coat proteins COPII~\cite{Alberts2008}. The external state $|0\rangle$ 
 denotes the recruitment of fisogens (dashed red line) proximal to the cisternal membrane. 
  State    $|1\rangle$ $\eqqcolon$ (Y, C$_1$) represents Arf-GDP-fisogen and the undeformed cisternal membrane. Upon binding to GTP, $|1\rangle \to |2\rangle$
$\eqqcolon$ (Y-GTP, C$_2$), which represents Arf-GTP-fisogens bound to the budding cisternal membrane and marks the exit site. This drives the formation of a complete bud, while undergoing a conformation change -  $|3\rangle$ $\eqqcolon$ (Y$^*$-GTP, C$_3$) - enroute to complete fission
   $|4\rangle$ $\eqqcolon$ (Y-GDP, C$_4$), and the conversion to the inactive GDP-bound form. The Y-GDP then unbinds from the vesiculated bud, and returns to $|0\rangle$. At the end of this fission cycle, the cisternal size decreases by one unit. 
  (d)  An example of the profile of  fusion (red line) and fission (green line) flux kernels as a function of $M$, obtained from the microscopic model. By varying the microscopic rates, one gets three qualitatively distinct scenarios, corresponding to the topology of intersections of the fusion and fission kernels, which dictate the number of stable (filled circle) and unstable (open circle) fixed points (arrows denote the flows), see Fig.\,\ref{fig:golgi_flux_123}. (Inset) For de novo biogenesis the fusion flux rate should be larger than the fission rate at small $M$.}
\label{fig:golgi_flux}
\end{figure*}

\begin{equation}
{\dot P}(M, i, k, t) =   \sum_{j \in  \mathfrak{I}_{\hspace{-0.5pt}i}^{\hspace{-0.1pt}+} }  \; \left( - K_{ij}^{+} \; P(M, i, k, t) + K_{ji}^{+} \; P(M + \mathbb{M}^+_{ji}, j, k, t) \right)  +   \sum_{l \in \mathfrak{I}_{\hspace{-0.5pt}k}^{\hspace{-0.1pt}-} } \; \left(-K_{kl}^{-} \; P(M, i, k, t)  + K_{l k}^{-} \; P(M - \mathbb{M}^-_{lk}, i, l, t) \right) \,, 
\label{eq:master_all1}
\end{equation}
where the overdot represents the time derivative,  and  the index sets $\mathfrak{I}_{\hspace{-0.5pt}i}^{\hspace{-0.1pt}+}$ and $\mathfrak{I}_{\hspace{-0.5pt}k}^{\hspace{-0.1pt}-}$ contain all the internal states of the fusion-fission cycle which are directly connected to states $i$ and $k$ respectively; for example, the states $\vert 1\rangle$ and $\vert 0\rangle$ in   Fig.\,\ref{fig:golgi_flux}(b) have $\mathfrak{I}_{\hspace{-0.5pt}1}^{\hspace{-0.15pt}+} = \left\{0,2,4\right\}$ and $\mathfrak{I}_{\hspace{-0.5pt}0}^{\hspace{-0.15pt}+} = \left\{1\right\}$, respectively. $\mathbb{M}^+_{ji}$ (and $\mathbb{M}^-_{lk}$) denotes the amount of ``virtual''  mass exchanged in a transition from the state $j$ to $i$ (and $l$ to $k$) in the $\pm$-cycle (``virtual''  mass added being positive (negative) for forward (backward) transitions in the $+$ -cycle, and “virtual” mass added being negative (positive) for forward (backward) transitions in $-$ -cycle, see Fig.\,\ref{fig:golgi_fluxSI}(e)).
At the end of the cycle, exactly one unit of ``real'' mass of a single vesicle is transferred, thus, $\sum^{4}_{j=1}\sum_{i\hspace{0.5pt}\in\hspace{1pt}\mathfrak{I}_{\hspace{-1pt}j}^{\hspace{-0.15pt}+}}^{i>j} \mathbb{M}^+_{ji} = - \sum^{4}_{l=1}\sum_{k\hspace{0.5pt}\in\hspace{1pt}\mathfrak{I}_{\hspace{-1pt}k}^{\hspace{-0.15pt}-}}^{k>l} \mathbb{M}^-_{lk} = 1$. The fusion-fission rates     $K_{ij}^{\pm}$ depend on the availability of  vesicles $N\v$, fusogens $N_+$ and fisogens $N_-$ at the cisterna, and the mechanochemistry underlying enzyme-membrane interactions,
which we will see later, depend on $M$.


We now invoke the assumptions:
(i) the dynamics through the internal states is ``fast'', allowing us to integrate over  the chemical cycle time; (ii) the dynamics governing the availability of vesicles and the $\pm$-species equilibrates fast, allowing us to use their steady state distributions $P_{ss}(N\v,N_{+}), \, P_{ss}(N_{-})$, and (iii) 
the fusion and fission events are independent. With these, the dynamics of the marginal distribution $P(M,t)$ takes the form (see \ref{sec:si0} for derivation),  
\begin{equation}
\label{eqn:markov_eff}
\dot{P}(M,t) =  K^{\sfu}(M-1,\boldsymbol{\lambda})P(M-1,t) + K^{\sfi}(M+1, \boldsymbol{\mu})P(M+1,t) -(K^{\sfu}(M, \boldsymbol{ \lambda}) + K^{\sfi}(M, \boldsymbol{\mu})) P(M,t) \, ,
\end{equation}
where the {\it nonequilibrium fusion and fission flux kernels}  $K^{\sfu}(M,\boldsymbol{\lambda})$ and $K^{\sfi}(M, \boldsymbol{\mu})$ are functions of cisternal size $M$ and  parameter vectors $\boldsymbol{\lambda}$ and $\boldsymbol{\mu}$ describe the microscopic transition rates of the fusion and fission processes, respectively. The flux kernels and the steady-state distribution of vesicles and the $\pm$-species, are formally obtained from 
the steady state currents evaluated over the fusion and fission cycles (\ref{sec:si0}).
Our task now is to provide tractable analytical forms for these fusion-fission kernels, based on a physico-chemical model of the fusion-fission  cycles, Fig.\,\ref{fig:golgi_flux}(b).
\\

\noindent
{\it Physico-chemical basis for nonequilibrium flux kernels}:
We obtain
$P_{ss}(N\v,N_{+}), \, P_{ss}(N_{-})$ by assuming that the vesicle production, fusogen and fisogen availability are independent Poisson processes  (see \ref{subsec:markovcycle} for details). This helps us determine the probability of being in state $|1\rangle$, when either the vesicles are fusion competent (Fig.\,\ref{fig:golgi_flux}(b)), or when the cisterna is primed for fission (Fig.\,\ref{fig:golgi_flux}(c)).
In \ref{subsec:markovcycle} we model the internal state dynamics in the fusion-fission cycles as a set of enzymatic chemical reactions with cooperativity, that describe the microscopic dynamics of fusogens and fisogens on the membrane, resulting in effective transition rates $k_{ij}, \ell_{ij}$ between the states, as shown in Fig.\,\ref{fig:golgi_flux}(b,c). 
We find that these transition rates depend on cisternal size $M$ on account of -- (i)  limited
availability of fusogens/fisogens, tethering proteins etc. that mark the  docking/exit sites at the cisterna; and 
(ii)
dependence on composition (e.g., lipid specificity in a phase segregated domain) and mechanical properties (e.g., tension or curvature \cite{goud, sens1, sens,  dai, stone, bigay, has,Rautu2024}) of the cisternal membrane, the latter
 either directly via mechanochemistry or actuated via a chemical pathway~\cite{joseph}. This model for the internal state dynamics, allows us to compute the dependence of $K^{\sfu}(M, \boldsymbol{\lambda})$ and $K^{\sfi}(M, \boldsymbol{\mu})$ on $M$ (see Fig.\,\ref{fig:golgi_flux}(d) as example and \ref{subsec:markovcycle} for details). 
 \\
\\
\noindent
{\it Reduction to a low dimensional dynamical system}: 
Taking the first moment of Eq.\,\eqref{eqn:markov_eff}, and using a {\it mean field decoupling} (see \ref{sec:si2}), we find that the mean cisternal size, for which we continue to use the symbol $M$, satisfies 
\begin{equation}
\label{eq:markov_mean}
\dot{M}  =  \; \; {\cal K}^{\sfu}(M, \boldsymbol{\lambda}) - {\cal K}^{\sfi}(M, \boldsymbol{\mu}) \, ,
\end{equation}
a nonlinear dynamical system, where ${\cal K}^{\mbox{\tiny{fus/fis}}}$ related to $K^{\mbox{\tiny{fus/fis}}}$,  now has dimensions of mass current. 
While it might appear that there are many microscopic parameters that go to define the steady state currents (\ref{sec:si0}), we see that they can be combined into a probability of initiation of the rate limiting enzymatic reaction which depends on the enzyme availability and a net current associated with the enzyme catalysed reactions, leading to a dimensional reduction (see \ref{subsec:dimredux}), a consequence of the timescales separation.



 This dimensional reduction can also be understood 
 from the generic form of microscopic ${\cal K}^{\sfu}$ and ${\cal K}^{\sfi}$ -- their asymptotic behaviour and the nature of intersections when plotted against 
 $M$.  Owing to the limiting levels of the fusogens and fisogens, the availability of fusion-fission sites,
 and the dependence of the fusion-fission rates on membrane tension~\cite{dai, goud}, we find that the microscopic kernels increase with $M$ at first before saturating. 
Furthermore, varying the microscopic rates gives three qualitatively distinct scenarios associated with the topology of intersections of the fusion and fission kernels, which determines the number of stable and unstable fixed points, see Fig.\,\ref{fig:golgi_flux}(d) as example. This can be represented to arbitrary accuracy by the low dimensional forms (see \ref{sec:si2}, in particular Eqs.\,\eqref{eq:kernels_analyt_form2_Ckz},\eqref{eq:kernels_analyt_form2_C1kz}), 
\begin{eqnarray}
{\cal K}^{\sfu}  =  a +  \frac{b \,M^\alpha }{C_1 +  \, M^\alpha }  \,\,\,\,\,\,\,\,\,\,\,\,
{\cal K}^{\sfi}  = \frac{d\,M^\beta }{C_2 +  \, M^\beta }
\label{eq:Rfisfus}
\end{eqnarray}
Here, $b,d$ are the effective 
fusion (which includes the influx rate from the ER) and fission rates; $a$ (where $a/b \ll 1$) is the effective nucleation rate;
and $C_1, C_2$ are the Hill saturation constants.  
The parameters  $b, d, C_1, C_2$, are increasing functions of 
the availability of fusogens/fisogens~\cite{keener},
while the Hill-exponents $\alpha,\beta$, express the cooperativity of the fusogen and fisogens, respectively.
This Michaelis-Menten form of the fusion-fission kernels has indeed been used in previous studies~\cite{Sachdeva2016,Vagne2018}.
The dimensional reduction strategy outlined here 
when extended to the case of multiple cisternae, 
leads to a tractable mean field dynamical system for the sizes of individual cisternae (see \ref{sec:si2}).


 
\section{Nonequilibrium assembly of a single cisterna}
\label{sec:single_cist}
Scaling time in units of fission rate $d$, Eqs.\,\eqref{eq:markov_mean},\,\eqref{eq:Rfisfus} together can be rewritten as, 
\begin{eqnarray}
\dot{M} =   v \, \left(a_0 +   \frac{M^\alpha}{C_1 + M^\alpha} \right)- \left(\frac{ M^\beta}{C_2 + M^\beta}\right)\, .
\label{eq:veq_Hd}
\end{eqnarray}
 with dimensionless  parameters $v= b/d$ and $a_0 = a/b$. For this dynamical equation to describe stable de novo biogenesis, we should have $\alpha > \beta$ (see \ref{sec:si4}).
Our analysis of the nullcline (\ref{sec:si4})
reveals that regardless of the values of $\alpha$ and $\beta$ (as long as they are positive definite), the topology of intersections  (root structure)  is maintained and always
admits at most three fixed points.
Thus without loss of generality, we may fix 
$\alpha=2$ and $\beta=1$ (an argument for such cooperativity was also made in \cite{Misteli2001}).
Using parameter values extracted from experiments (Table\,\ref{tab:tabl1}), sets the unit of time $t$ to be $1$s.

\begin{glossarybox}[label=box:definitions,float=!h]{Dynamical systems and Bifurcations \cite{strogatz,GH,Izhikevich:2007}}
\hspace{0.25cm} A {\it continuous  dynamical system} can be represented by
\bea
\frac{d \bf{X}}{dt} = \mathbf{f}(\mathbf{X},\boldsymbol{\lambda},t)
\eea
where $\mathbf{X} \in \mathbb{R}^n$ is the state vector whose time ($t$) evolution is given by a smooth vector field  $\mathbf{f}: \mathbb{R}^n \rightarrow \mathbb{R}^n$ along with  $\boldsymbol{\lambda}$, the system parameters. 

\hspace{0.25cm} The \textit{state space} comprises all possible states $\mathbf{X}$. Within this space, we identify \textit{invariant sets}—such as \textit{fixed points} (FP) given by $\mathbf{f(\mathbf{X},\boldsymbol{\lambda},t)}=0$) and \textit{periodic orbits} (limit cycles, LC) — whose stability is defined by their \textit{basin of attraction}, the set of initial conditions asymptotically approaching them. An invariant set, e.g. an isolated fixed point,  is \textit{locally stable} if its basin covers a local neighborhood, \textit{globally stable} if it spans the entire state space, and \textit{unstable} if generic nearby trajectories diverge. Consequently, we categorize the system's behaviour in the parameter space $\boldsymbol{\lambda}$ into distinct phases: a \textit{stable phase} (globally stable attractor), a \textit{multistable phase} (multiple coexisting attractors), and a \textit{semistable phase} (where the state space is partitioned into regions that either converge to a bounded solution or diverge to infinity). We call these different qualitative behaviours that the system admits as different \textit{solution classes.}

\hspace{0.25cm} \textit{Bifurcations} describe qualitative changes in a system's behaviour, e.g., number or stability of equilibria/periodic orbits, as a parameter varies. Bifurcations can be local, driven by stability changes at a specific fixed point (detectable via linearization), or global, which reshape larger trajectories in phase space and cannot be detected by local analysis. We will be concerned with the following bifurcations in this work,

\begin{enumerate}[label=\arabic*.]
    \item \textit{Saddle-Node} (SN) \textit{bifurcation} is  a local bifurcation, that happens when two fixed points, that differ in stability along at least one direction, coalesce and vanish (or equivalently are created) as one varies the bifurcation parameter. It is accompanied by critical slowing down of system trajectory near the vanished fixed points, known as \textit{saddle-node ghost delay} (Fig.\,\ref{fig:bif_ex1}(a)) \cite{strogatz}. 
    
\item \textit{Hopf bifurcation} manifests when a fixed point changes stability as a pair of complex-conjugate eigenvalues cross the imaginary axis, and it gives rise to a periodic orbit (limit cycle) locally  (Fig.\,\ref{fig:bif_ex1}(b))  \cite{strogatz}.

\item \textit{Saddle-Node on an Invariant Circle} (SNIC) \textit{bifurcation} is a global bifurcation that involves a saddle-node bifurcation occurring directly on an invariant circle (Fig.\,\ref{fig:bif_ex1}(c)). At onset, the SNIC bifurcation produces a limit cycle of infinite period due to the saddle-node ghost; the period $\tau$ decreases continuously as the bifurcation parameter, $\mu$ moves away from criticality. Near the bifurcation point $\mu_0$, the period scales as: $\tau \sim 1/\sqrt{\mu - \mu_0}$, a dependence inherited from the saddle-node ghost \cite{strogatz}.


\item \textit{Saddle-Node of limit cycles} (SNLC) \textit{bifurcation} is also a  global bifurcation in which a stable limit cycle and an unstable limit cycle coalesce and annihilate as the bifurcation parameter varies, leaving the system to relax to a coexisting stable attractor ((Fig.\,\ref{fig:bif_ex1}(d)).

\end{enumerate}

  In \ref{sec:nullc2} and \ref{sec:denovo_tau}, we perform a detailed analysis of these bifurcations that demarcate the onset of the phases. The nature of these bifurcations determine the time scales of assembly of structures (Figs.\,\ref{fig:phases_1d}(d, f)\,and\,\ref{fig:fp_lc_sol}(d)).
 
.

  \vspace{0.1cm} 
  \centering
  \includegraphics[width=0.95\textwidth]{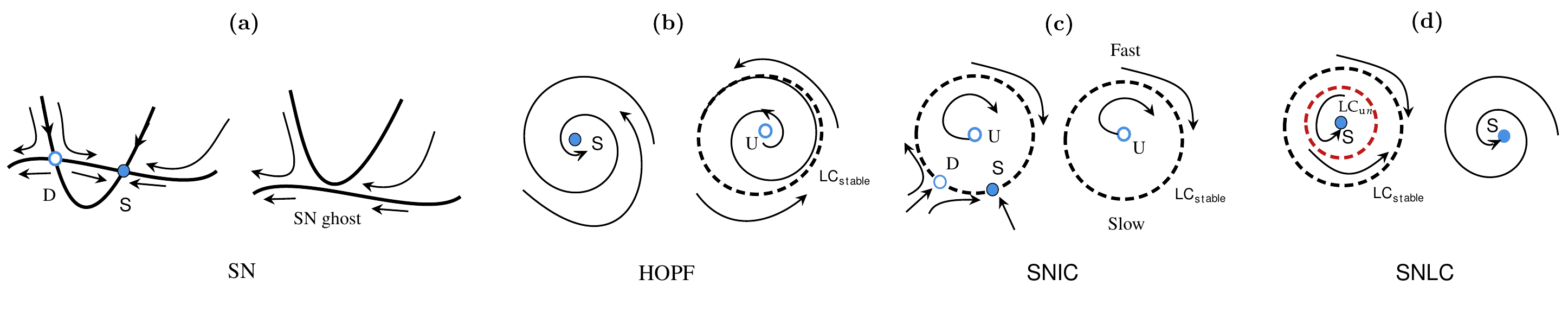}
  
      \captionof{figure}{{\bf Typology of bifurcations and flows.} (a-d) Change in flows as well as in fixed points (FPs) before (left) and after (right) the different bifurcations : $\mathbf{S}$ (stable FP), $\mathbf{D}$ (saddle FP), $\mathbf{U}$ (unstable FP), thick lines (nullclines), lines with arrowheads (flows), dashed lines (limit cycles), slow (SN ghost region), fast (rapid region).}
  \label{fig:bif_ex1}

\end{glossarybox}

We study the nature of intersections of the kernels by varying the effective parameters $\nu$,  $C_1, C_2$ and $a_0$. The fixed points of Eq.\,\eqref{eq:veq_Hd} with $\alpha=2$, $\beta=1$ are obtained from the cubic equation,
\begin{equation}
a_0 C_1 C_2 v + C_1 M (a_0 v -1) +C_2 M^2 (a_0+1)+ M^3 (a_0  + v - 1 ) = 0\, .
\label{eq:cubicpoly}
\end{equation}
By analysing the positive real roots of $M$ in Eq.\,\eqref{eq:cubicpoly} and their stability, we obtain the phase diagrams displayed in Fig.\,\ref{fig:phases_1d}(a,b) and  \ref{sec:si4}. These show a stable cisterna phase at intermediate values of the influx rate (keeping all other parameters fixed); this is consistent with the flux analysis done in~\cite{HimaniPRE}. The cisternal size increases with influx rate till it blows up when the fission flux rate goes beyond the fusion flux rate. 
The cubic polynomial loses (or gains) a pair of roots through a \textit{saddle-node} (SN) \textit{bifurcation},
leading to critical slowing down near the bifurcation point (see Glossary\,\ref{box:definitions}).
This bifurcation changes the number of stable fixed points (1 $\leftrightarrow$ 2), resulting in the creation or destruction of a stable equilibrium. In the case of two stable fixed points, the steady state cisternal size depends on initial conditions.
The robustness of the phase diagram Fig.\,\ref{fig:phases_1d}(a,b) 
 to changes in the rest of the parameters is ensured by the property that Eq.\,\eqref{eq:cubic} represents a {\it stable unfolding} (see~\ref{sec:si4} and~\cite{thom,golubitzky}).
Bounds on the robustness of the phase diagram to extrinsic and intrinsic noise are discussed in \ref{sec:robust_si}. 
 
\begin{figure}[t!]
\centering
\includegraphics[width=.95\textwidth]{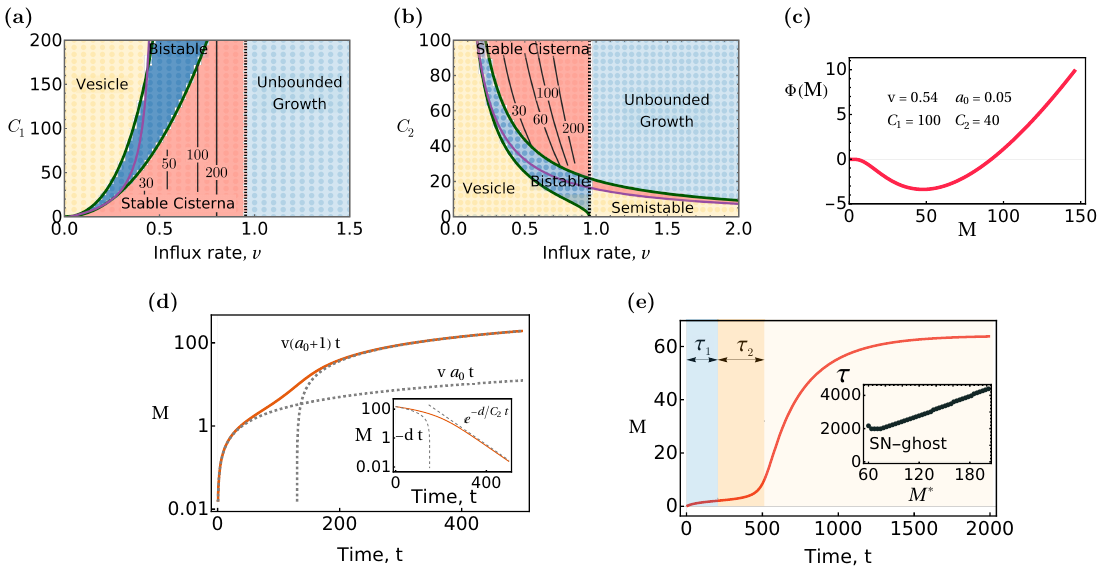}
\caption{{\bf Phase diagram for single cisterna.} 
(a) Phase diagram in the $C_1$ (fusogen levels) - $v$ (influx rate) plane shows two stable phases -- vesicle ($M \leq 1$) and cisterna ($M > 1$), together with their coexistence ({\it bistable}).
Cisternal size increases with influx rate $v$ (steady state sizes  are shown as black contour lines) till there are no fixed points and the cisterna grows unboundedly.  
Here $C_2=40$ and $a_0=0.05$. The phase boundaries are derived analytically from Eq.\,\eqref{eq:cubicpoly} -- the 
bistability region is obtained from the cubic discriminant $\Delta$ (thick green line), and 
the onset of cisterna formation (i.e. for a given $C_1$, there is a minimum value of influx rate $v$ required to form a stable cisterna, purple line), can be extracted from the 
coefficients of cubic polynomial Eq.\,\eqref{eq:cubicpoly} resulting from the transformation $M\rightarrow M+1$, Eq.\,\eqref{eq:transform_cf}. 
(b) Phase diagram upon varying $C_2$ (fisogen levels) and $v$ with $C_1 = 100$ and $a_0=0.05$.  The {\it semistable} phase corresponds to the situation where the system converges to a stable fixed point or grows unboundedly depending on the initial condition (Fig.\,\ref{fig:golgi_flux_123}(c)) and in most physical situations show up as a stable vesicle phase, ($M < 1$) for parameter values used here.
 (c) Stable fixed points of the single cisterna dynamics can also be interpreted as minima of a potential $\Phi(M)$. (d) If fission is stopped, the cisterna swells linearly at a rate set by the peak influx rate, $v(a_0+1) \approx v$. (inset) On the other hand, if  influx from the ER is stopped, the cisterna initially decays as $M(t)\sim M(0) - d \, t$, 
 and  later as $M(t)\sim M(0)\exp(- d/C_2 \, t)$ (parameter values $v=0.5$, $C_1=100$, $C_2=40$, $a_0=0.05$, $d=1$).
 (e) Dynamics of de novo cisternal formation shows a slow initial growth before it  rapidly increases to the steady state value $M^*$, plotted here for $v=0.6,\,C_1=100,\,C_2=40,\,a_0 = 0.05$. The slowing down is due to the presence of {\it saddle-node ghost}. Inset shows the non-monotonic dependence of the de novo Golgi formation time, $\tau=\tau_1+\tau_2$, at lower values of steady state cisternal size $M^*$~\cite{langhans,pypaert}. Glossary\,\ref{box:definitions} explains the nomenclature from dynamical systems theory.}
\label{fig:phases_1d}
\end{figure}

Equation\,\eqref{eq:veq_Hd} can be interpreted as the overdamped dynamics of a particle in a potential $\Phi(M)$ (Fig.\,\ref{fig:phases_1d}(c)). The minimum of $\Phi(M)$ locates the fixed point value of the cisternal size $M^*$, allowing us to address four relevant questions regarding the fate of a single cisterna -- {\it (i)  What sets the cisternal size $M^*$ at steady state?} This is primarily set by $C_1, C_2$, the Hill saturation constants for fusion and fission flux kernels, which in turn is set by the levels of fusogens and fisogens (Fig.\,\ref{fig:1cist_Master_sim}(e,g)) through the mean enzymatic rates  $k_z$ (see~\ref{sec:si2}). Further, the steady state size will 
increase with the fusion rate, $v$ (Fig.\,\ref{fig:phases_1d}(a)), and decrease with the fission rate $d$.
{\it (ii) What happens when 
cisternal fission is stopped?} This is commonly achieved by treatment with the drug Brefeldin-A~\cite{lippincott2000,helms1992,niu2005}. We see that this leads to cisternal swelling that grows linearly at a rate set by the influx rate $v$ (Fig.\,\ref{fig:phases_1d}(d)).
{\it (iii) What happens when the influx from the ER is stopped?} This clearly leads to
cisternal degradation, with a size that decays linearly at first followed by exponentially fast over a time scale  set by the fission rate $d$ and the levels of the fisogen $C_2$ (Fig.\,\ref{fig:phases_1d}(d,inset)). 
{\it (iv) What sets the time scale for de novo cisternal formation?} 
The time taken for de novo cisternal formation $\tau$ (Fig.\,\ref{fig:phases_1d}(e))is determined by the sum of a slow lag time ($\tau_1$), a consequence of the initial flatness of the potential $\Phi(M)$ (Fig.\,\ref{fig:phases_1d}(c)) and the faster saturation time ($\tau_2$), the characteristic time scale at the final steady state $M^*$ (given by the eigenvalue of the Jacobian of the dynamical system at $M^*$). Although the time scale can be computed from a straightforward numerical integration of Eq.\,\eqref{eq:veq_Hd}, one may obtain an approximate analytical estimate from dynamical systems theory~\cite{strogatz}. The slowing down of the trajectory is attributed to the presence of the saddle-node bifurcation that influences dynamical trajectories in the vicinity of the bifurcation point (known as a \textit{saddle-node ghost}~\cite{strogatz}, details in~\ref{sec:1cist_tau}). Using simplified forms  for the fusion and unsaturated fission kernels (see \ref{sec:1cist_tau}), we obtain estimates for steady state cisternal size $M^* = (1+a_0) C_2 \, v/(d-a_0-a_0\, v)$ and $\tau = \tau_1(\delta) + \tau_2(M^*) \ \sim (C_1 + C_2)/(d-a_0-a_0\, v) - C_2 \, d/(d-a_0-a_0\, v)^2 \log (1-\delta)$, 
in the vicinity $\delta= (d-a_0\,v)C_1/(a_0\, C_2\, v)$ of  the bifurcation point.
This provides a numerical estimate for the de novo cisterna formation time to be $5-10$\,mins (Fig.\,\ref{fig:phases_1d}(e)) consistent with experiments on the recovery time of the cisterna upon Brefeldin-A washout\,\cite{langhans,itocis}.

\section{Nonequilibrium assembly of multiple cisternae}
\label{sec:two_cisternae}

In extending our analysis to multiple cisternae (Fig.\,\ref{fig:golgi_flux}(a)), we immediately realise that the dynamical equations describing cisternal size updates have to be {\it non-local in time}.
This is because the fusion flux into cisterna $i$ from cisternae $i-1$ and $i+1$, which goes to update the size of cisterna $i$ at time $t$, depends on the fission dynamics of the donor
cisternae at an earlier time. To circumvent this difficulty, Refs.\,\cite{Sachdeva2016,Vagne2018,mani} consider the local dynamics of all membrane bound entities of sizes $M \geq 1$ and numerically compute the size distribution at steady state, identifying the distribution at large $M$ with cisternae. The price one pays is that the dynamical system is high (infinite) dimensional and has to be analysed numerically by statistically sampling over many 
vesicle
trafficking networks (or initial conditions, as in~\cite{Sachdeva2016}) with specified fusion and fission rules. 
Here, we will confront the nonlocal dynamics of the $N$-cisternae system head-on, and attempt a simplification which will reduce to an $N$-dimensional autonomous dynamical system. We will see that the nonlocality in time, together with the constraint that there is a fixed pool of cisternal specific fusogens and fisogens,
result
in inter-cisternal correlations in the fission and fusion flux kernels, leading to complex flows in the $N$-dimensional phase space. Details are presented in~\ref{sec:2cist_T}.

We first recall that the intercisternal flux depends on the availability of cognate pairs of v-SNAREs and t-SNAREs at the donor and the target cisterna, respectively~\cite{Munro2004, Bonifacino2003, Robinson2004, Traub2009, Yu2010, Jahn2006, Wickner2008, DSouzaSchorey2006, Stenmark2009, itocis, mani}.
Consider the fission flux $J^{fis}_{i,i+1}$  from the donor cisterna $i$ to target cisternae $i+1$ at time $t$. The appropriate v-SNAREs need to be transported on vesicles  destined to the target cisterna, leading to a depletion of the SNARE-pool and a reduced probability for subsequent fission events at $i$, unless replenished by fusion events at $i$ from $i+1$~\cite{willett}. This implies that the fission flux $J^{fis}_{i,i+1}$ at cisterna $i$ at time $t$ depends on the fusion events at cisterna $i+1$ {\it at earlier times within a transit time window}, which in turn depend on the size $M_{i+1}$ (following the discussion in Sect.\,\ref{sec:physical_basis}). This contributes as a nonlocal kernel in time and cisternal index in the integro-differential equation for $M_i$ (as in, for  example, in Eq.\,\eqref{eq:yx} below),
\bea
\mathcal{J}^{fis}_{i,i+1} (M_i,M_{i+1},t) \equiv \int d\,t_w \, P(t_w) \, J^{fis}_{i,i+1} (M_i,t; M_{i+1},t-t_w)
\label{eq:integral}
\eea
where the intercisternal time window~\cite{storrie,jackson,dmitrieff} is drawn from a probability distribution $P(t_w)$ which we take to have compact support (or an exponential decay).
Similarly, the fusion flux $J^{fus}_{i+1,i}$ at the target cisterna $i+1$ at time $t$ depends on the fission events at the donor cisterna $i$ at earlier times within an intercisternal time window, which enters as a nonlocal kernel in time and cisternal index in the integro-differential equation for $M_{i+1}$.
We may now use a mean field decoupling approximation~(\ref{sec:2cist_T}), 
\bea
\mathcal{J}^{fis}_{i,i+1} (M_i,M_{i+1},t) \propto T \, {\cal K}^{\sfi} ( M_i(t)) \, {\cal K}^{\sfu} (M_{i+1}(t))
\label{eq:algebraic}
\eea
that reduces the integro-differential equation to an ODE, where $T$ is the support of $P(t_w)$.
This implicitly assumes that $T$ is small and that the cisternal sizes do not vary much over this time scale.

This allows us to prove the following proposition and corollary (proof in~\ref{sec:prop_reduced}); for simplicity, we restrict ourselves to the 2-cisternae case, but the results hold more generally:





\begin{prop}\label{prop:prop}
Consider the general flux system,
\begin{eqnarray}
\label{eq:yx}
 \dot{M_1} &=& J_{in}(M_1) - \mathcal{J}^{fis}_{12} (M_1,M_2) + \, \mathcal{J}^{fus}_{21} (M_1,M_2) \\
 \dot{M_2} &=&  \,  \mathcal{J}^{fus}_{12} (M_1,M_2) - \mathcal{J}^{fis}_{21} (M_1,M_2) - J_{ex}(M_2) 
  \label{eq:xy}
\end{eqnarray}
where $J_{in}$ and $J_{ex}$ are the influx and the exit flux, respectively, and the rest are intercisternal fluxes, written in general in its integral form Eq.\,\eqref{eq:integral}, all of which are positive for $M_{1,2} \geq 0$. We consider the following two cases, 
\begin{enumerate}[label=(\roman*)]
\item 
If all the intercisternal fluxes are local, i.e.,
$\mathcal{J}^{fis}_{12}(M_1,M_2) = \mathcal{J}^{fis}_{12}(M_1)$, $\mathcal{J}^{fus}_{21}(M_1,M_2) = \mathcal{J}^{fus}_{21}(M_1)$, $\mathcal{J}^{fus}_{12}(M_1,M_2) = \mathcal{J}^{fus}_{12}(M_2)$ and $\mathcal{J}^{fis}_{21}(M_1,M_2) = \mathcal{J}^{fis}_{21}(M_2)$, then the above flux system can only have real eigenvalues. 
\item Let the intercisternal fluxes
depend on the size of the donor cisterna alone, i.e.,
$\mathcal{J}^{fis}_{12}(M_1,M_2) = \mathcal{J}^{fis}_{12}(M_1)$, $\mathcal{J}^{fus}_{21}(M_1,M_2) = \mathcal{J}^{fus}_{21}(M_2)$, $\mathcal{J}^{fus}_{12}(M_1,M_2) = \mathcal{J}^{fus}_{12}(M_1)$ and $\mathcal{J}^{fis}_{21}(M_1,M_2) = \mathcal{J}^{fis}_{21}(M_2)$. If $ \mathcal{J}^{fus}_{12}(M_1),  \,\mathcal{J}^{fus}_{21}(M_2)$ are co-monotonic as functions of their arguments, then the above flux system can only have real eigenvalues.
\end{enumerate}
\end{prop}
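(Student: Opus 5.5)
The approach is to write down the Jacobian of the planar system \eqref{eq:yx}--\eqref{eq:xy} at an arbitrary point (in particular at any fixed point) under the stated locality assumptions. A real $2\times 2$ matrix $J=\begin{pmatrix} A & B\\ C & D\end{pmatrix}$ has eigenvalues $\tfrac12\bigl(\tr J \pm \sqrt{(A-D)^2+4BC}\bigr)$, so it has only real eigenvalues whenever $BC\ge 0$. The whole proof therefore reduces to showing that the off-diagonal product $BC$ is nonnegative in each case. Primes below denote derivatives of the one-variable flux functions.

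\textbf{Case (i).} Every intercisternal flux entering $\dot M_1$ depends on $M_1$ alone, and every one entering $\dot M_2$ depends on $M_2$ alone. The same holds for $J_{in}(M_1)$ and $J_{ex}(M_2)$. Hence $B=\partial \dot M_1/\partial M_2=0$ and $C=\partial \dot M_2/\partial M_1=0$. The Jacobian is diagonal, with eigenvalues
\[
J_{in}'-\mathcal{J}^{fis\,\prime}_{12}+\mathcal{J}^{fus\,\prime}_{21}
\quad\text{and}\quad
\mathcal{J}^{fus\,\prime}_{12}-\mathcal{J}^{fis\,\prime}_{21}-J_{ex}',
\]
both of which are real. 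Equivalently, the system decouples into two one-dimensional flows, and such flows cannot rotate.

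\textbf{Case (ii).} Now the only cross-dependences are $\mathcal{J}^{fus}_{21}(M_2)$ in the $\dot M_1$ equation and $\mathcal{J}^{fus}_{12}(M_1)$ in the $\dot M_2$ equation. This gives $B=\mathcal{J}^{fus\,\prime}_{21}(M_2)$ and $C=\mathcal{J}^{fus\,\prime}_{12}(M_1)$. Co-monotonicity means both derivatives have the same sign (both nondecreasing or both nonincreasing), so $BC\ge 0$. The discriminant $(A-D)^2+4BC$ is then nonnegative and both eigenvalues are real.

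I would add two remarks. First, this rules out complex eigenvalues, hence Hopf bifurcations, in the local or donor-only regimes. The limit cycles found later must therefore come from the mean-field product form \eqref{eq:algebraic}, whose cross-terms make $BC$ sign-indefinite. Second, the argument extends to $N$ cisternae. The donor-only structure gives a tridiagonal Jacobian with $J_{i,i+1}J_{i+1,i}\ge 0$. Such a matrix is similar, via a diagonal scaling, to a symmetric matrix when all these products are positive; it decomposes into independent diagonal blocks when some vanish. In either case its spectrum is real.

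The main obstacle is interpretive rather than computational. Each flux must be treated as a differentiable function of the stated arguments, which requires that the integral form \eqref{eq:integral} reduce to a function of $(M_1,M_2)$ as assumed. One must also handle the boundary case where co-monotonicity holds only weakly, so that $BC=0$ somewhere; reality of the eigenvalues still follows there. I would state these smoothness assumptions explicitly at the outset.
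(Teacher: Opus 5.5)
Your proof is correct. Case (i) matches the paper's decoupling argument exactly. For case (ii), your discriminant computation $(A-D)^2+4BC\ge 0$ is the alternative proof the paper gives in its remark, $(a+c-k)^2+4dk_1\ge 0$ because $\mathrm{sign}(d)=\mathrm{sign}(k_1)$. The paper's primary argument instead shifts to a nonnegative matrix $\pm J+\beta I$, invokes Perron--Frobenius to get one real eigenvalue, and then uses conjugate pairing.

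Your $N$-cisterna tridiagonal-symmetrization remark is a genuine bonus, since the Perron--Frobenius route guarantees only one real eigenvalue and so does not by itself extend beyond two cisternae. One small correction there: when exactly one of $J_{i,i+1},J_{i+1,i}$ vanishes, the matrix is block-triangular rather than block-diagonal. The spectrum is still the union of the diagonal blocks' real spectra, so your conclusion holds.
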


\begin{corollary}
It follows that the necessary condition for the above flux system to  have complex eigenvalues, and hence limit cycles, is that either or both of the intercisternal flux kernels $\mathcal{J}^{fus}$ and $\mathcal{J}^{fis}$  have to be functions of both $M_1$ and $M_2$, and there is 
no restriction on the sign of  $\partial_{M_2} \left(\mathcal{J}^{fus}_{21}-\mathcal{J}^{fis}_{12}\right)$ and $\partial_{M_1} \left(\mathcal{J}^{fus}_{12}-\mathcal{J}^{fis}_{21}\right)$, i.e., the corresponding Jacobian should have off-diagonal elements with opposite sign.
\end{corollary}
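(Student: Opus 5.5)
The plan is to reduce both statements to a property of the $2\times 2$ Jacobian of the vector field on the right-hand side of Eqs.\,\eqref{eq:yx},\eqref{eq:xy}, evaluated at any point (in particular at a fixed point $(M_1^*,M_2^*)$). I would work with the fluxes after the reduction to an autonomous ODE, i.e.\ with $\mathcal{J}$ regarded as $C^1$ functions of $(M_1,M_2)$, as in Eq.\,\eqref{eq:algebraic}. Write the Jacobian as
\begin{equation*}
\mathsf{J}=\begin{pmatrix} A & B\\ C & D\end{pmatrix},
\end{equation*}
so that its eigenvalues are $\tfrac12\big[(A+D)\pm\sqrt{(A-D)^2+4BC}\big]$. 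The elementary but key observation is that $(A-D)^2\ge 0$. Hence a sufficient condition for real eigenvalues is $BC\ge 0$, and complex eigenvalues require $BC<0$. Both parts of the proposition therefore amount to identifying the off-diagonal entries $B=\partial_{M_2}\dot M_1$ and $C=\partial_{M_1}\dot M_2$ and showing that their product is non-negative.

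For case (i), every term in $\dot M_1$ ($J_{in}$, $\mathcal{J}^{fis}_{12}$, $\mathcal{J}^{fus}_{21}$) depends only on $M_1$, and every term in $\dot M_2$ depends only on $M_2$. Thus $B=C=0$, the Jacobian is diagonal, and the eigenvalues are the real numbers $A$ and $D$. This holds at every point and hence at every fixed point.

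For case (ii), I would compute directly:
\begin{equation*}
B=\partial_{M_2}\big(J_{in}(M_1)-\mathcal{J}^{fis}_{12}(M_1)+\mathcal{J}^{fus}_{21}(M_2)\big)=\mathcal{J}^{fus\,\prime}_{21}(M_2),
\end{equation*}
\begin{equation*}
C=\partial_{M_1}\big(\mathcal{J}^{fus}_{12}(M_1)-\mathcal{J}^{fis}_{21}(M_2)-J_{ex}(M_2)\big)=\mathcal{J}^{fus\,\prime}_{12}(M_1).
\end{equation*}
Co-monotonicity, meaning both functions are non-decreasing or both are non-increasing on $M\ge 0$, gives derivatives of the same sign at any arguments. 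Hence $BC\ge0$, the discriminant is non-negative, and the eigenvalues are real.

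The corollary is then the contrapositive, together with the Poincar\'e--Andronov--Hopf picture. A limit cycle born locally from a fixed point requires a complex-conjugate pair of eigenvalues, which by the discriminant requires $BC<0$, i.e.\ off-diagonal entries of opposite sign. By (i) and (ii) this is impossible if the fluxes are purely local, or donor-only with co-monotone fusion kernels. So some flux must depend on both $M_1$ and $M_2$, as is the case for the product form Eq.\,\eqref{eq:algebraic}, or else the sign structure of $\partial_{M_2}(\mathcal{J}^{fus}_{21}-\mathcal{J}^{fis}_{12})$ and $\partial_{M_1}(\mathcal{J}^{fus}_{12}-\mathcal{J}^{fis}_{21})$ must be left unconstrained.

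The calculations themselves are routine. I expect the main obstacle to be the interpretive step. One must justify that the integral kernels in Eq.\,\eqref{eq:integral} may be treated as differentiable functions of the instantaneous sizes, which holds under the small-$T$ decoupling. One must also state co-monotonicity pointwise, as same-sign derivatives, rather than only globally. A further caveat is that the corollary's ``hence limit cycles'' is a statement about the local (Hopf) mechanism only, since global bifurcations such as SNIC need not pass through complex eigenvalues. I would therefore phrase the corollary as necessary for Hopf-generated cycles.
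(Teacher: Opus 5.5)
Your argument is correct and is essentially the paper's own. The paper proves (i) by decoupling and (ii) by applying Perron--Frobenius to the non-negative matrix $\pm J+\beta I$, but its accompanying remark gives exactly your discriminant computation, $(a+c-k)^2+4dk_1\ge 0$ since $\mathrm{sign}(d)=\mathrm{sign}(k_1)$; the corollary is then read off as the contrapositive, and a supplementary corollary notes that an anterograde flux $H(M_1,M_2)$ co-monotone in $M_1$ and $M_2$ produces the opposite-sign off-diagonal pattern.

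Your hedges are well placed. Only $BC<0$ is genuinely necessary, since donor-only kernels that are not co-monotone already give it, and real eigenvalues alone do not exclude closed orbits. For cases (i) and (ii) you could drop the Hopf restriction: those systems are decoupled or planar monotone (cooperative, after $M_2\to -M_2$ if both kernels decrease), and such systems have no nonconstant periodic orbits.
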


This unrestricted sign of the off-diagonal elements is supported by the finding that oversaturation of the trans-Golgi network with anterograde cargo reduces the efficiency and kinetics of retrograde transport. This, in turn, depletes early cisternae of the fusion and fission machinery essential for anterograde transport from the cis and medial Golgi~\cite{hirata2015,griffiths} — a bottleneck that arises because intercisternal flux depends critically on the availability of cognate SNARE pairs~\cite{willett,Blackburn2019}.


We now propose an explicit realisation of intercisternal fluxes of the 2-cisternae system Eqs.\,\eqref{eq:yx},\eqref{eq:xy}, that must satisfy the following reasonable criteria: (i) De novo biogenesis, implying  presence of a nucleation seed for each cisterna. 
(ii) Mass action structure, implying
${\cal J}^{fis}_{12}\rightarrow 0$ as $M_1 \rightarrow 0$ and $ {\cal J}^{fis}_{21} \rightarrow 0$ as $M_2 \rightarrow 0$. 
(iii) 
For small influx, the system stays in a stable vesicle phase, implying that near $M_1,M_2 \sim 0$, the intersection of fission and fusion kernels should give rise to a stable fixed point $(M_1^*,M_2^*)$. Therefore, in the immediate neighborhood, for $(M_1,M_2) > (M_1^*,M_2^*)$, the fission rate must dominate the fusion rate and vice versa (see Fig.\,\ref{fig:2cist_roots}(a) and Fig.\,\ref{fig:golgi_flux}(d), inset). (iv) In addition, we impose two physical conditions -- for $M_2 \rightarrow 0$, system should reduce to $1$-cisterna dynamics and for $M_1\rightarrow 0$, $M_2$ should also vanish. The second condition ensures that the mass must accumulate in the first cisterna before the second cisterna starts growing and along with the first condition endows the two-cisternae system with a hierarchical structure, where the dynamics of the second cisterna is conditioned on the fixed point structure of the first cisterna. Any system of equations with these properties will give rise to a qualitative similar phase diagram.

With these considerations, together with the mean field decoupling Eq.\,\eqref{eq:algebraic}, we study a specific realisation of Eqs.\eqref{eq:yx},\eqref{eq:xy}, 
\begin{eqnarray}
\label{eq:retrograde1}
\dot{M}_1 &=& \underbrace{v \left(a_1 + \frac{ M_1^2}{C_{11} + M_1^2}\right)}_{\text{J}_{\text{in}}} -\underbrace{\frac{d_1 \,M_1}{C_{12}  + M_1}}_{\text{J}_{\text{leak}}}  -  \underbrace{\frac{d_{12} \, M_1}{C_{12}  + M_1}\left(a_2 + \frac{ M_2^2}{C_{21}+M_2^2}\right)}_{\mathcal{J}_{12}^{\text{fis}}} + \underbrace{\frac{d_{21}\,M_2}{C_{22} + M_2}\left(a_1 + \frac{ M_1^2}{C_{11} + M_1^2}\right)}_{\mathcal{J}_{21}^{\text{fus}}}   \\ 
\dot{M}_2 &=&   \underbrace{\frac{d_{12} \,M_1}{C_{12}  + M_1}\left(a_2 + \frac{M_2^2}{C_{21}+M_2^2}\right)}_{\mathcal{J}_{12}^{\text{fus}}}- \underbrace{\frac{ d_{21}  \, M_2}{C_{22} + M_2}\left(a_1 + \frac{ M_1^2}{C_{11} + M_1^2}\right)}_{\mathcal{J}_{21}^{\text{fis}}}   - \underbrace{\frac{ d_{2} \, M_2}{C_{22} + M_2}}_{\text{J}_{\text{ex}}}
\label{eq:retrograde2}
\end{eqnarray}
where we have included a leakage flux $\text{J}_{\text{leak}}$ from the first cisterna~\cite{dmitrieff}. As before, the Hill-exponents for fusion and fission have been chosen to be $\alpha=2$ and $\beta=1$, respectively.
The intercisternal fluxes $\mathcal{J}_{12}^{\text{fus/fis}}$ and $\mathcal{J}_{21}^{\text{fus/fis}}$ are parameterised by $d_{12}, d_{21}$ and, we have scaled time w.r.t. the peak leakage rate $d_1$ at cisterna $1$. 
The above dynamical system can be extended to multiple cisternae $i=1, 2, \ldots N$ by defining an \textit{intercisternal flux matrix} ${\cal J}$ whose $i^{th}$ row 
has terms corresponding to fission, fusion, leakage and inter-cisternal transfer $i-1 \rightleftarrows  i  \rightleftarrows i+1$ at cisterna $i$ (except for $i=1$ and $N$) (\ref{sec:2cist_T}).

\begin{figure}[t!]
\centering
\includegraphics[width=\textwidth]{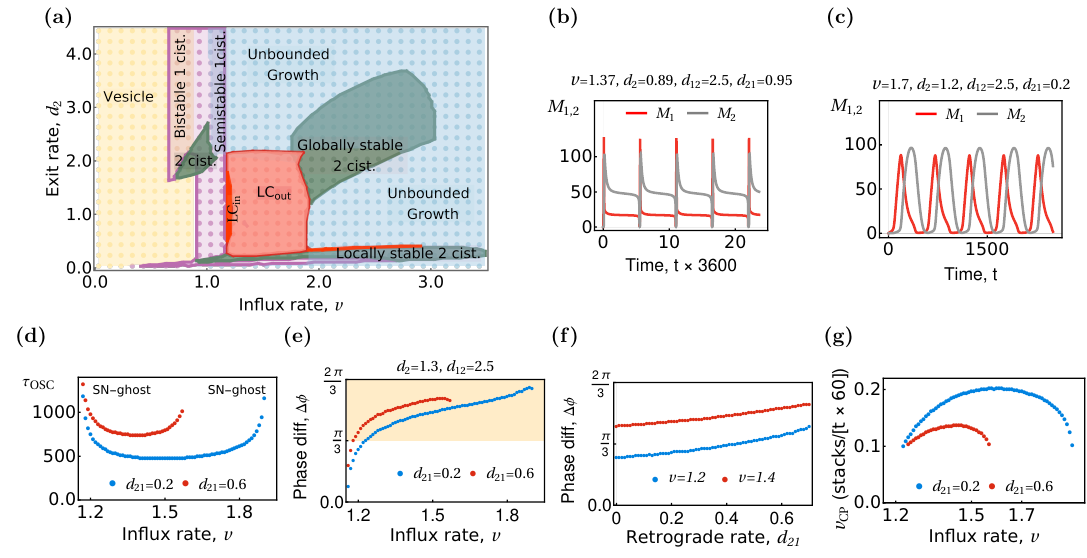}
\caption{{\bf Phase diagram for the two cisternae system.} (a) Numerical phase diagram for $2$-cisternae system with stable fixed point (FP), Limit cycle (LC) and unbounded (Growing) states, in ($v,d_{2}$) plane. Phase diagrams in ($v,d_{12}$) and ($v,d_{21}$) planes are shown in Fig.\,\ref{fig:overlap}.
This shows regions of multistability, similar to Fig.\,\ref{fig:phases_1d}(b),  where the configuration converges to one phase or the other depending on the initial condition (see Fig.\,\ref{fig:overlap}). Note there are three branches for the 2-cisterna phase -- $M_1\gg M_2 > M_{th}$ (threshold size) at low $v$ and high $d_2$ (locally stable, light green), $M_2\gg M_1 > M_{th}$ at low $d_2$ and high $v$ (locally stable, light green)  and $M_1\sim  M_2 > M_{th}$ at $v \sim d_2$ (globally stable, dark green).
Phase boundaries or bifurcation lines, can be obtained analytically from the behaviour of the eigenvalues of the Jacobian, and appear as saddle-node, SN bifurcations and Hopf bifurcations~(see Glossary~\ref{box:definitions}), as shown in Fig.\,\ref{fig:overlap}.
Limit cycle solutions are categorized based on the phase difference between times series of $M_1$ and $M_2$ -- in-phase limit cycle (LC$_{in}: 0-60$ deg., shown in dark red), and out-of-phase limit cycle (LC$_{out}: 60-180$ deg., shown in light red). Rest of the parameter values are $C_{11} = C_{21} = 100$, $C_{22} = 20$, $C_{12} = 20$, $d_{12} = 2.5$, $d_{21}=0.2$.
(b) In-phase oscillatory solutions represent spontaneous periodic dissolution and reformation of the two cisternae. (c) Out-of-phase oscillatory solutions 
can be associated with cisternal progression (CP) models and its variants based on the phase difference. 
(d) Time period of the limit cycle solution, $\tau_{\textrm{OSC}}$,  in units of $d_1=1s^{-1}$, varies with influx $v$ and is determined by the saddle-node delay due 
to the closest SNIC bifurcation. 
There are two saddle-node ghost delays for the oscillatory solutions in the system, one at small $v$, when limit cycle solutions appear via SNIC bifurcation and the other at large $v$, when limit cycle solutions disappear either via SNIC bifurcation or SNLC bifurcation (see Fig.\,\ref{fig:overlap}(a,b,c)). (e) Variation of phase difference $\Delta \phi$ between times series of $M_1$ and $M_2$ with influx rate $v$. At lower influx $v$, the saddle-node ghost appears near $M_1,M_2 \sim 0$, i.e. when the cisternae sizes are small (see bifurcation diagram Fig.\,\ref{fig:2cist_roots}(h) and Fig.\,\ref{fig:SNIC_2cist}(d)). Near this saddle-node ghost, the trajectories are slow and the phase lag, $\Delta \phi$ between $M_1,M_2$ does not change much -- we therefore obtain in-phase oscillations. However, as $v$ increases, the trajectories come out of the ghost region and move fast -- the phase lag between $M_1,M_2$ increases, which is further increased by the saddle-node ghost appearing at larger cisternal sizes, leading to out-of-phase oscillations (shaded region). The {\it susceptibility} $\partial_v(\Delta \phi)$  peaks at each ghost (criticality) and is suppressed in the fast-transit region between them.
(f) The phase difference increases with the retrograde rate $d_{21}$, supporting observations that cisternal progression is accompanied by retrograde transport to recycle resident Golgi enzymes against the direction of cisternal flow~\cite{glick,Glick1998,losev,Matsuura-Tokita}.
(g) Phase velocity, computed from the ratio of $\Delta \phi$ and time period $\tau_{\textrm{OSC}}$. Our estimate of $\textit{v}_{CP}\sim 0.1-0.2$ stacks/min agrees with the cisternal progression speed obtained from analysis of experiments~\cite{dmitrieff}. For (d-g), parameter values are $d_2=1.3,\,d_{12} =2.5$.} 
\label{fig:fp_lc_sol}
\end{figure}

\subsection{Results for the 2-cisternae system}
As before, the steady state phase diagram of the $N$-cisternae assembly can be obtained from an analysis of the root structure of the nullclines. 
For a general high-dimensional system, this  is quite involved \cite{fulton};  we will then rely on graphical and numerical analyses of the intersections of the nullclines and the use of bifurcation theory~\cite{strogatz,GH,kuznetsov} to categorize the solutions (\ref{sec:nullc2}). 
In simple cases, such as for 2-cisternae, the nature and stability of the fixed points can be studied analytically using a two-dimensional version of the {\it Routh-Hurwitz criterion}~\cite{hairer}, i.e. analysing the  characteristic polynomial of the Jacobian matrix $J$ of the dynamical system evaluated at fixed points $(M_1^*,M_2^*)$~(\ref{sec:RH}).
Whether a stable solution (a fixed point or a limit cycle) is accessible from  given initial conditions (in the case of multiple locally stable solutions), requires numerical analysis of the asymptotic solution of the dynamical system (\textit{asymptotic time domain analysis}, \ref{sec:RH}).
In this way, we demarcate the various phases of the 2-cisternae system \eqref{eq:retrograde1},\eqref{eq:retrograde2} in $v,\, d_2,\, d_{12}, d_{21}$ space (Fig.\,\ref{fig:fp_lc_sol}(a) and Fig.\,\ref{fig:overlap}). 
Stable fixed point solutions with two finite size cisternae ($M_{1,2}>M_{th}$, a threshold size), which we call the 2-cisternae phase, is accompanied by an in-flux, out-flux and intercisternal flux of cargo vesicles -- we  identify this phase with {\it vesicular transport} (VT). In addition, the dynamical system \eqref{eq:retrograde1},\eqref{eq:retrograde2}, supports limit-cycle solutions with definite phase relations between $M_1$ and $M_2$. The in-phase oscillatory solutions correspond to the spontaneous periodic dissolution and reformation of the two cisternae. The out-of-phase oscillatory solutions correspond to a kind of travelling wave \footnote{Coupling this to space in the manner done in~\cite{dmitrieff}, this out-of-phase regime corresponds to traveling wave solutions.}, which we identify with {\it cisternal progression} (CP), see supplementary movies FPsolution, OutphaseLC and InphaseLC.  The phase velocity (stack speed) is computed from the ratio of the phase difference $\Delta \phi$ and time period $\tau_{\textrm{OSC}}$, Fig.\,\ref{fig:fp_lc_sol}(f), and its numerical value is found to be comparable to the value of the cisternal drift estimated in~\cite{dmitrieff}. Oscillatory solutions with  intermediate phase difference can be identified with variants of the VT and CP models~\cite{glick, Glick1998, Papanikou2014, Glick2009, Pantazopoulou2019, Rothman2010, rothemanves, rotheman, vvquin, losev, bonfanti, Matsuura-Tokita, pfeffer}.
The phase boundaries can be obtained algebraically through a bifurcation analysis (\ref{sec:RH}) and are shown in Fig.\,\ref{fig:overlap}(a,b,c). We note from Fig.\,\ref{fig:fp_lc_sol}(e,f) that the out-of-phase oscillations (CP phase) are separated from the in-phase oscillation as well from the stable cisternal phase by the presence of saddle-node ghosts, leading to a sharp rise of the {\it susceptibility} $\partial_v(\Delta \phi)$ upon approaching each ghost at the phase boundaries. Hence, unless the system is fine tuned to operate near these boundaries,
the limit cycle solutions generically remain in the CP phase.

Returning to Fig.\,\ref{fig:fp_lc_sol}(a), we note that at large exit rate $d_{2}$, the dynamical system has the same fixed point structure as the single cisterna case (see Fig.\,\ref{fig:phases_1d}(b), and consistent with the analysis in~\cite{HimaniPRE}),
going from vesicle $\rightarrow$ bistable $\rightarrow$  stable single cisterna $\rightarrow$  semistable $\rightarrow$  unbounded growth, as the influx rate $v$ increases. As $v$ is increased further, cisterna-1 keeps on growing unboundedly, until enough mass accumulates in the second cisterna, increasing the intercisternal flux from cisterna 1, leading to a stabilisation of the two-cisternae system (via a Hopf bifurcation, see  Fig.\,\ref{fig:overlap}(a)). This follows from Eq.\,\eqref{eq:retrograde1}, as the outflux $\mathcal{J}_{12}^{fis}$ from cisterna $1$ increases with $M_2$. 
For smaller exit rate $d_{2}$, however, both cisternae are immediately formed -- on increasing $v$, the vesicle phase transforms to a $2$-cisternae  phase (locally stable, shown in light green or locally unstable, shown in light violet). 
This can transform to a stable $2$-cisternae phase by going  through a 
a limit cycle (LC) 
via a SNIC bifurcation (see Fig.\,\ref{fig:2cist_roots}(e,f)). 
Subsequently, limit cycle solutions disappear either via {\it saddle-node of limit cycles} (SNLC) or via another SNIC bifurcation (Fig.\,\ref{fig:overlap}(a,b,c)). See Glossary \ref{box:definitions} for dynamical systems nomenclature. 
The nullclines, bifurcations and the phase diagrams for the 2-cisternae system are presented in detail in \ref{sec:nullc2} and \ref{sec:RH}.

\subsection{Response to systematic external and internal variations}
\label{subsect:response}
\begin{figure}[t!]
\centering
\includegraphics[width=0.95\textwidth]{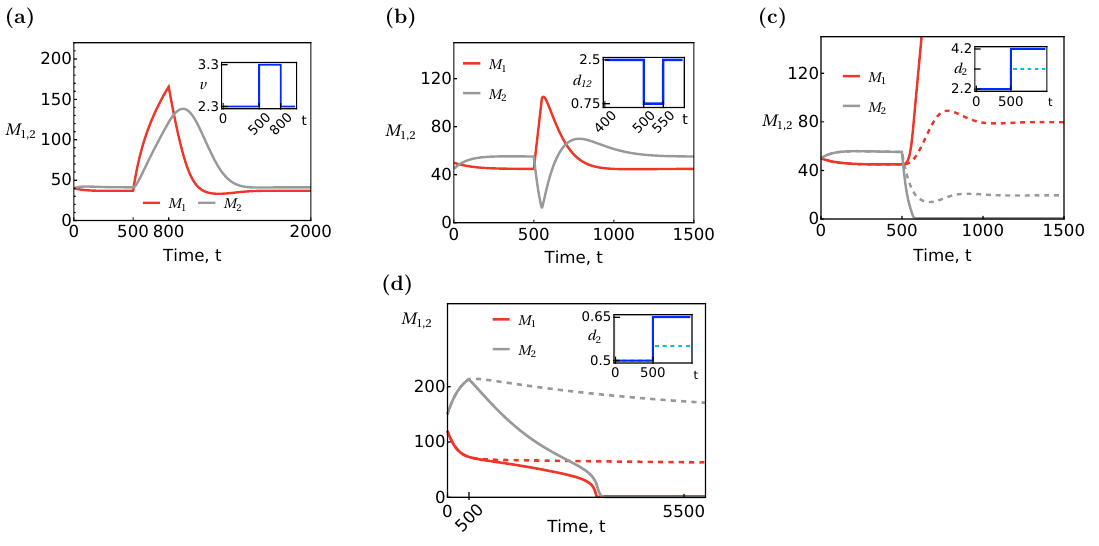}
\caption{{\bf Response to systematic perturbations.}  (a-d) Response of the stable cisternal phase,  
to a sudden step or rectangular pulse, as described in the text.
(a) Dynamical response of the steady state to a sudden positive pulse in influx rate $v$ (inset). The sizes of both the cisternae grow linearly at first and decay after the pulse. Note that initially the rate of increase of $M_1$ is higher than $M_2$, and that after the pulse, $M_1$ reaches its steady state value quicker than $M_2$ (parameter values: $d_{12} = 2.5$, $d_{21} =0.2$, $d_2 = 2.2$).
(b) Dynamical response of the steady state to a sudden negative pulse in intercisternal transfer $d_{12}$ (inset). 
As the intercisternal flux from cisterna $1$ to $2$ goes down, cisterna $1$ grows and cisterna $2$ diminishes. Here again, after the pulse, $M_1$ reaches its steady state value quicker than $M_2$ (parameter values: $v = 2.3$, $d_{21} =0.2$, $d_2 = 2.2$).   (c) Dynamical response of the steady state to a sudden step jump in exit rate $d_{2}$ (inset), at  small retrograde rate, $d_{21}$  (parameter values $v = 2.3$, $d_{12} = 2.5$, $d_{21} =0.2$). At a smaller step jump $d_2$ (inset, cyan dashed line), the size of the second cisterna decreases, leading to reduction in  anterograde flux from the first cisterna,
and a concomitant increase in the size of the first cisterna (dashed red line). At larger step jump $d_2$ (inset blue line), the second cisterna gets rapidly depleted, and the first cisterna grows unboundedly.
(d) Dynamical response of the steady state to a sudden step jump in exit rate $d_{2}$ (inset), at larger retrograde rate, $d_{21}$ (parameter values $v=1.3$, $d_{12}=2$, $d_{21}=1.2$). At a smaller step jump $d_2$ (inset, cyan dashed line), the cisternae resettle into a new stable configuration  with a smaller size (dashed lines). At larger step jump $d_2$ (inset blue line), the second cisterna gets rapidly depleted, reducing the retrograde as well as the anterograde flux. However, in this regime, the net influx ($\text{J}_\text{in} + \mathcal{J}_{21}^{\text{fus}}$) at the first cisterna is not sufficient to balance the net outflux  ($\mathcal{J}_{12}^{\text{fis}} + \text{J}_{\text{leak}}$)  and the system collapses into a vesicle phase (solid lines). This can also be seen from the nullclines and flows in Fig.\,\ref{fig:2cist_retro_fig}(d,e) (see \ref{sec:2cist_retro} for details).}
\label{fig:sypt}
\end{figure}

We now ask how the different nonequilibrium phases, associated with stable fixed points or limit cycles, respond to external perturbations or systematic internal (cellular) variations. We first study the dynamical response of stable cisternal structures to 
sudden perturbations, such as (i) sudden change in influx $v$, (ii) sudden change in intercisternal flux $d_{12}$, and (iii) sudden change in outflux $d_2$, that may be triggered by biochemical agents or being subject to cellular stress. For this analysis, we continue to use Eqs.\,\eqref{eq:retrograde1},\eqref{eq:retrograde2} with initial conditions set at the steady state configurations.

We consider the response of the stable cisternal phases  
to a sudden step or rectangular pulse, say in the influx rate
$v = v_0 + v_p \,\mathcal{W} (t_0,t_w)$, where the perturbation
is switched on at $t_0$ for a time $t_w$ with unit amplitude $\mathcal{W}(t_0,t_w)$,
and similarly for the other parameter rates.
Such perturbations are easily achieved by treatment with specific drugs and subsequent washout. For instance, Brefeldin A (BFA), a fungal metabolite, prevents the formation of COPI vesicles~\cite{helms1992,itocis}, thus affecting fission from the cis-Golgi cisterna and H89 inhibits COPII recruitment to ER exit sites \cite{puri2003}, thus changing the influx rate $v$.  
Ilimaquinone (IQ), a sea sponge metabolite, appears to  specifically target the trans Golgi network (TGN) and induce rapid PKD (Protein Kinase D)-dependent fission reaction at TGN~\cite{vvquin}, thus enhancing the exit rate $d_2$.
Figure\,\ref{fig:sypt}(a-d) displays the 
dynamical responses of the stable 2-cisternae phase to these sudden perturbations.

The dynamical response of the system to these perturbations provides insight into the form of the fusion-fission kernels and feedback structure 
 that determine the solution space, 
and can be monitored using live-cell imaging.
The dynamical response to sudden changes in the influx (Fig.\,\ref{fig:sypt}(a)) is along expected lines.
It is of interest, however, that a sudden negative pulse in the intercisternal flux $d_{12}$ (obtained, for instance, by reducing the fission from cisterna-1) leads to a swelling of cisterna-1, together with a decrease in the size of cisterna-2 (Fig.\,\ref{fig:sypt}(b)). This has potential implications for the fate of trans-Golgi cisterna upon treatment with Brefeldin-A~\cite{lippincott2000,helms1992,niu2005} which directly acts on the cis and medial Golgi cisternae.
On the other hand, the response 
to a sudden step jump in the exit rate $d_2$ (Fig.\,\ref{fig:sypt}(c,d)) is nontrivial and depends on the intercisternal rate 
$d_{21}$ and on the functional form of the intercisternal kernels. Experiments using IQ to disrupt the TGN often find that the preceding cisternae also get disrupted~\cite{vvquin}. Our analysis would suggest that this would be the case when the retrograde rate $d_{21}$ is high. The time for complete fragmentation depends on the magnitude of the step jump in $d_2$. Further, we propose that in situations where the retrograde rate $d_{21}$ is low, the response would be closer to Fig.\,\ref{fig:sypt}(c) where the outer cisterna disrupt while the inner cisterna is maintained at a larger size.


Next we turn our attention to systematic periodic variations, where the nonequilibrium cisternal dynamics couples to internal cellular networks, such as the cell cycle, 
leading to strong cell-state dependent changes. The cell cycle is under the control of circadian clocks~\cite{Matsuo2003},
via a variety of kinases (such as CK1~\cite{komatsu}) that might regulate the synthesis or influx from the ER~\cite{komatsu}.
This results in a periodic influx rate $v$ which, for the 2-cisternae dynamics, we model as 
\begin{eqnarray}
\label{eq:chaos}
\dot{M}_1 = s_D \, sq(\omega_D , t)  + \mathcal{J}_1 (M_1,M_2) \hspace{2cm} \dot{M}_2 =  \mathcal{J}_2 (M_1,M_2)
\end{eqnarray}
where  $s(t)= s_D \, sq(\omega_D,t)$ is a periodic rectangular pulse with driving frequency $\omega_D$ and strength $s_D$, and
$\mathcal{J}_1,\mathcal{J}_2$ are given by R.H.S. of Eqs.\,\eqref{eq:retrograde1},\eqref{eq:retrograde2}, respectively.
To model the cell cycle, we take the time period of the driving signal (cell cycle time) to be $\sim 10$\,hrs, corresponding to $\omega_D \sim 0.00017 \, rad\,s^{-1}$, and the driving strength such that the minimum influx approaches zero during the mitotic phase of the cell. 
Observationally, it is well established that Golgi cisternae in mammalian cells respond by disassembling  at the onset of mitosis and reassembling at its completion~\cite{shima,preuss}.





We first study the effect of the periodic influx on the stable 2-cisternae phase. In this case, provided the amplitude of the perturbation is large enough to cross the phase boundary to the vesicle or limit cycle phase, the cisternae simultaneously fragment and reassemble Fig.\,\ref{fig:arnold_forced}(a) at a period predominantly set by the cell cycle frequency, with nonlinearities exciting higher frequency modes.




We next study the effect of the periodic influx on the in-phase limit cycle phase. 
We find that for small driving amplitude, the observed frequency, $\omega_{obs}$ of the driven system becomes mode-locked to a rational multiple of the driving frequency, $\omega_{D}$, a phenomena known as entrainment (Fig.\,\ref{fig:arnold_forced}(b)). This locking persists over a finite range of detuning (or frequency mismatch) as we vary the system frequency $\omega_0$ by changing the retrograde rate $d_{21}$, creating the regions known as {\it Arnold Tongues} (Fig.\,\ref{fig:arnold_forced}(c))~\cite{strogatz, kuznetsov}. Complete dissolution observed in Fig.\,\ref{fig:arnold_forced}(b) for limit cycle solutions occurs even when the driving amplitudes are small. 
This happens because the system spends significant time near the \textit{SN-ghost}.

It is observed that in mammalian tissue culture cells, but not in plant cells or in S. cerevisiae, protein transport from the ER is blocked during mitosis~\cite{VM1}. Thus  plant cells and S. cerevisiae are not subject to periodic influx rate and so do not undergo  fragmentation~\cite{VM1}. This suggests that the periodic dissolution and reformation of cisterna across the cell cycle should be associated with driving the stable 2-cisternae phase with a
periodic influx, rather than the entrainment of the limit cycle phase.
Nevertheless, it would be interesting to search for evidence of in-phase limit cycle solutions in the Golgi apparatus across the eukaryotic domain, or in synthetic realizations.



\begin{figure}[t!]
\centering
\includegraphics[width=0.95\textwidth]{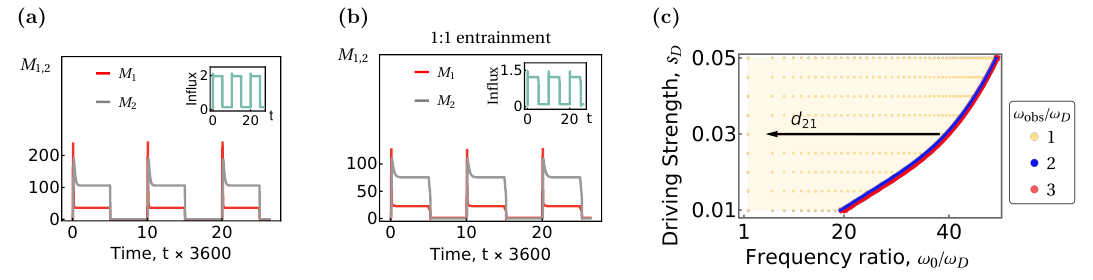}

\caption{{\bf Forced oscillations and entrainment for the $2$-cisternae system.} 
(a) A periodic influx drive (inset) applied to the stable 2-cisternae phase behaves as a periodically forced stable system  (parameter values $v=2$, $d_2=1.8$, $d_{12}=2.5$, $d_{21}=0.2$, $d_1= 1\,s^{-1}$, time period of the pulse $=\, 10 \,hrs$,  $s_D=0.2$).  For large enough amplitude ($s_D=0.2$), the driving force can make the system cross the phase boundary ($2$-cisternae phase $\leftrightarrow$ limit cycle phase, for the parameter values used), leading to periodic dissolution and reassembly of the cisternae.
(b) A periodic influx drive (inset) applied to the in-phase oscillatory phase leads to entrainment
for low driving strength ($s_D \sim 0.02$) and long time period $=\, 10 \,hrs$. 
The parameter values used are, $v=1.37$, $d_2=0.89$, $d_{12}=2.5$, $d_{21}=0.95$, $d_1= 1\,s^{-1}$, which results in a system frequency, $\omega_0 =0.00032\,d_1\, rad$.
(c) Entrainment of the spontaneously oscillating system with the driving frequency showing the characteristic Arnold tongues. On the x-axis is the natural frequency, $\omega_{0}$ (i.e., system frequency without drive) in units of the driving frequency $\omega_D$, while the observed system frequency $\omega_{obs}$ in the presence of drive appears in the adjacent panel.
We see from the Arnold tongue diagram that for non-zero driving strength,  as we vary the system frequency $\omega_{0}$ by varying the retrograde rate $d_{21}$, the observed frequency stays locked in $1:1$ ratio with the driving frequency (yellow region) before moving to other frequency ratios at higher signal frequencies (blue and red strips). Parameter values are $v=1.37$, $d_2=0.89$, $d_{12}=2.5$, $d_{21}=0.7-0.95$, $d_1= 1\,s^{-1}$, with the time period of the driving pulse signal $=\, 10 \,hrs$. At larger retrograde flux rate, the width of the first Arnold tongue shrinks, as shown by the arrow.
 This entrainment is observed even at small driving amplitudes ($s_D \sim 0.01-0.05$). 
}
\label{fig:arnold_forced}
\end{figure}

\subsection{Robustness of the solution space}
So far, our analysis has been mean field, where we have ignored the effect of noise, an inevitable presence in cellular systems. Cellular noise can arise from parametric noise, extrinsic noise (say, influx rate) and intrinsic noise (say, low copy number or chemical). An important requirement of organelle biogenesis in a living cell should be robustness under noise. 
In Sect.\,\ref{sec:single_cist} and \ref{sec:single_cist_si}, we have discussed the \textit{structural stability} of the single cisterna dynamical system Eq.\,\eqref{eq:veq_Hd} under parametric noise, 
i.e. states in the vicinity of a viable solution also converge to the same solution class or phenotype. The structural stability of the $2$-cisternae dynamical system Eqs.\eqref{eq:retrograde1},\,\eqref{eq:retrograde2}, although not as straightforward as the single cisterna, can be established numerically (see Fig.\,\ref{fig:robust_algebraic} and \ref{sec:cist_struct} for details) by checking that the number of distinct roots that the dynamical system admits matches its algebraic capacity (the generic number of roots that the system algebraically supports)
using the Gr\"obner basis~\cite{Cox2015,golubitzky}.

It has been suggested that the structural stability of a dynamical system ensures that the root structure of the system is robust to low amplitude extrinsic noise~\cite{cobb}. 
Changes in the root structure of the $1$-cisterna and $2$-cisternae systems in the presence of extrinsic multiplicative noise can be studied by extending the deterministic system to a Langevin system \cite{kampen}, for instance, by adding a multiplicative noise with noise strength $\sqrt{2\xi_i(M_i)}$, which is a smooth positive function solely dependent on the size of the cisterna $i$. 
For the 1-cisterna case, taking the noise strength $\xi(M) = \epsilon M$, which would be the case if the stochastic addition and removal of vesicles are proportional to the cisternal size,
we see that the nucleation rate is modified $v \, a_0 \rightarrow (v \, a_0-\epsilon)$. Although it can change the root structure, it also suggests that a non-zero nucleation rate $v \, a_0$ provides robustness against noise with strength $\sqrt{\epsilon \, M}$, with the bound for robustness on the noise strength being $\epsilon < a_0 \, v$. For further details, see \ref{sec:robust_si}.

The robustness of the solution space (phases) under intrinsic noise can be studied using  stochastic Gillespie simulations~\cite{gillespieE} with propensities given by the effective fission and fusion rates. The asymptotic mean value from the stochastic trajectories can be used to generate a phase diagram that includes intrinsic noise. For single cisterna system, Fig.\,\ref{fig:robust_struct}(a)  shows that in the presence of intrinsic noise, the cisternal phase apparently occupies a larger region (light orange) in parameter space. However, analysis of the cisternal size shows a large coefficient of variation, the ratio of the standard deviation to the mean (CV $>1$), suggesting that cisternal configurations are dominated by large copy number fluctuations, especially near the phase boundary adjoining the vesicle phase.
Similar results are observed for the two cisternae system,  Fig.\,\ref{fig:robust_struct}(b). Further details in~\ref{sec:robust_si}.

\begin{figure}[t!]
\centering
\includegraphics[width=0.95\textwidth]{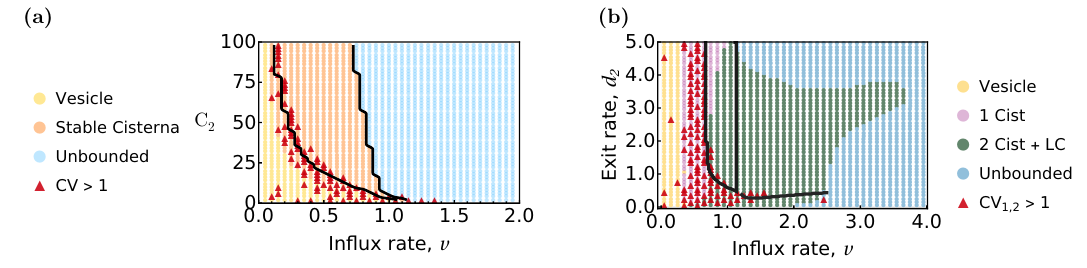}
\caption{{\bf Robustness of the stable cisternal phases to intrinsic noise}. 
We include the effects of intrinsic noise using stochastic Gillespie simulations~\cite{gillespieE} to
generate phase diagrams for (a) single cisterna  (parameter values $C_{1}= 100$, $a_{0}=0.05$, see Eq.\,\eqref{eq:veq_Hd})  and (b) two cisternae (parameter values $C_{11} = C_{21} = 100$, $C_{22} = 20$, $C_{12} = 20$, $d_{12} = 3$, $d_{21}=0$, $a_{1}=a_{2}=0.05$, see Eqs.\,\eqref{eq:retrograde1},\eqref{eq:retrograde2}). In (a), we have plotted the phase boundaries (thick lines) between
stable cisterna-unbounded growth and  vesicle-cisterna, of the deterministic system. See Fig.\,\ref{fig:p_roots1}(b) for a more elaborate phase diagram. 
In (b), we have plotted the phase boundaries (thick lines) between
vesicle and $1$-cisterna, vesicle and $2$-cisternae, $1$-cisterna and $2$-cisternae (here we do not differentiate between stable $2$-cisternae and limit cycle), of the deterministic system. See Fig.\,\ref{fig:overlap}(a) for a more elaborate phase diagram.
We see that the mean field phases are robust under intrinsic noise, except at small values of influx rate, $v$, in the vicinity of the bifurcations.
In these regions, marked by red triangles, the cisternal sizes exhibit higher coefficient of variation (CV) suggesting the system is dominated by copy number variations (also see Fig.\,\ref{fig:robust_crit_in} and \ref{fig:robust_noise_intrinsic}). }
\label{fig:robust_struct}
\end{figure}

\subsection{Size-dependent embedded feedforward control via fusion-fission kernels \label{sec:control_FF}}

The above considerations suggest that the {\it system of Golgi cisternae} is maintained at a homeostatic set point \cite{DorfBishop2017} $(M_1^*, M_2^*, \ldots M_n^*)$ in the presence of a net vectorial flux  of vesicles. Here we see that this homeostasis is brought about by a nested control with the outer loop endowed with fixed levels of cisterna-specific fusogens and fisogens, and the inner loop comprising embedded 
active fusion-fission cycles. Together, this gives rise to a size-dependent embedded control system driven by fusion-fission kernels. 


This nested control architecture, Fig.\,\ref{fig:contol_sys_FF}, has the following elements:\\
\begin{itemize}
\item \textit{Feedforward control}: The continuous flux of COPII-coated vesicles from the ER provides the bulk material to assemble the Golgi cisternae~\cite{Alberts2008}. However, the homeostatic set point of the cisternae is contingent not only on this flux {\it but also on the levels of cisternal-specific fusogens and fisogens}, which in turn determine the other (inter-cisternal) fluxes, as discussed in Sect.\,\ref{sec:two_cisternae}.
This control element is represented as the outer feedforward layer in Fig.\,\ref{fig:contol_sys_FF}. This involves both the anterograde and the retrograde fluxes.

\item \textit{Embedded feedback controls}: However, this on its own does not ensure robust homeostasis of cisternal sizes.
To ensure this, the Golgi cisternae must rely on size-dependent feedback loops, as described in \ref{sec:si00}-\ref{sec:Mastereqtwocisternae}, depicted as the inner regulatory layers in Fig.\,\ref{fig:contol_sys_FF}. This involves a coupling between the outer and the inner loops in the form of a feedback.
\end{itemize}

Taken together, Fig.\,\ref{fig:contol_sys_FF}, depicts that  the Golgi cisternae operate {\it macroscopically} as a  feedforward control system \cite{DorfBishop2017} -- the  vectorial vesicle flux from the ER serves as the input, and the material outflux, alongside the homeostatic set point of cisternae ($M_1^*, M_2^*$) serves as the output. To ensure the robustness of this process, the system relies on a {\it microscopic} embedded feedback elements. These nested elements dynamically regulate cisternal size in response to the instantaneous ER flux and the availability of fusogens and fisogens via cisternal-specific fusion and fission kernels.

\begin{figure*}[t!]
\centering
\includegraphics[width=0.98\textwidth]{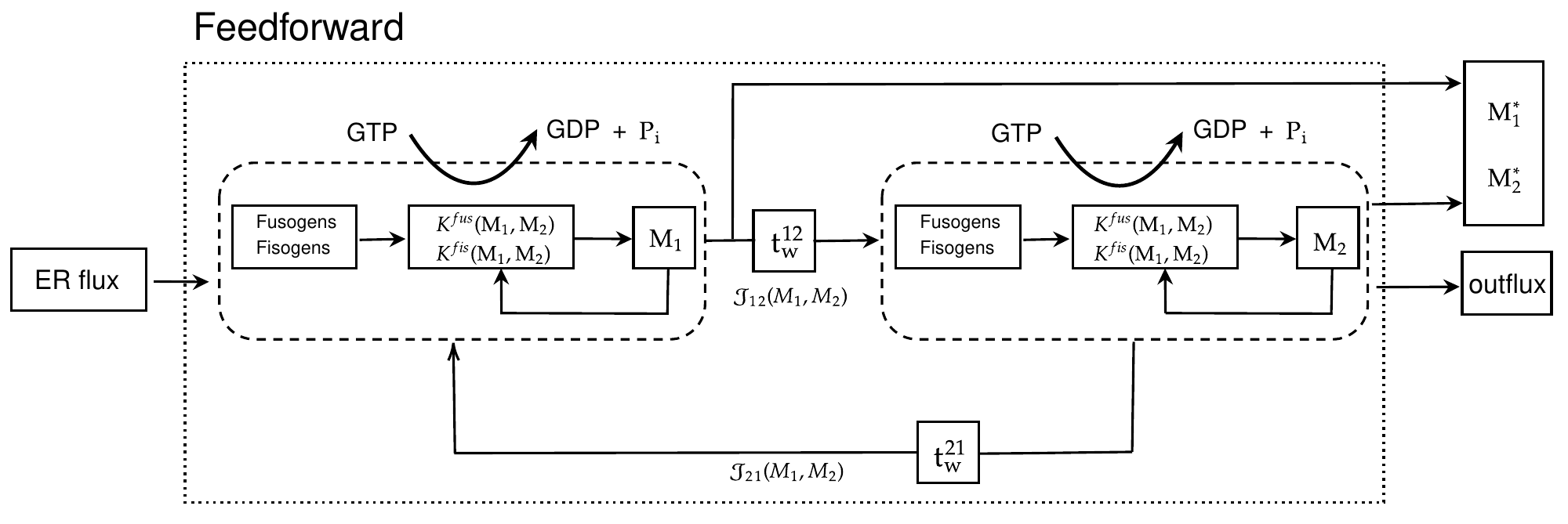}
\caption{{\bf Feedforward control with embedded feedback}. The outer dashed box is the feedforward layer -- input to this layer is the vectorial flux of vesicles from the ER and the output is material outflux along with formation of stable cisternae ($M_1^*,M_2^*$), with no external measurement of $M_1(t),M_2(t)$ feeding back. The inner dashed boxes are embedded feedback controls that regulate the cisternal size for a given ER flux and availability of fusogens and fisogens via fusion and fission kernels, $\mathcal{K}^{fus}(M_1,M_2)$ and $\mathcal{K}^{fis}(M_1,M_2)$ respectively. In the schematic, $\mathcal{K}^{fus}(M_1,M_2)$ and $\mathcal{K}^{fis}(M_1,M_2)$ include the influx from ER, leak fluxes as well as anterograde $\mathcal{J}_{12} (M_1,M_2)$ and retrograde $\mathcal{J}_{21} (M_1,M_2)$ fluxes within intercisternal delay time windows $t_w^{12}$ and $t_w^{21}$, respectively (see Eqs.\,\eqref{eq:integral},\eqref{eq:retrograde1},\eqref{eq:retrograde2}).}
\label{fig:contol_sys_FF}
\end{figure*}

\section{Discussion}
\label{sec:discussion}


In this paper, we have presented a general nonequilibrium framework to study the establishment, control, robustness and dynamical response of a system of intracellular membrane bound organelles, such as the Golgi complex, which are subject to a nonequilibrium flux of material processes through synthesis, transport, fusion and fission. To do this, we have had to derive macroscopic fusion and fission flux kernels from microscopic considerations, that form a size-dependent {\it embedded feedforward control system} that ensure the assembly of multiple Golgi cisternae.
This assembly, depending on the functional form of the flux kernels, admits different solution classes, which we identify with distinct phenotypic outcomes such as vesicular transport, cisternal progression, or the periodic dissolution into tiny vesicles. We have studied how these outcomes may be entrained with the cell cycle, giving rise to cell state dependent cisternal organisation.

We have determined how the different phases of cisternal assembly dynamically respond to systematic perturbations, both external and internal. In future we propose to study how specific dynamical responses to experimentally realisable protocols, may enable the quantitative reconstruction of the fusion and fission kernels and its embedded feedback structure, providing insights into the physical 
principles underlying cisternal assembly. 
We showed  that the deduced effective fusion and fission kernels give rise to stable unfoldings that maintain the solution structure under perturbations with finite strength. We have further elaborated on this issue by subjecting the dynamical system model to intrinsic and extrinsic noise and derive bounds on the noise strengths for which system remains in the same solution class, hence a robust solution structure.

There remain many open issues which we hope to take up in the future -- 

\begin{enumerate}

\item  While the cellular dynamics discussed here, provides a framework to study its biogenesis and the control and maintenance of cisternal size, it says says nothing about what controls cisternal number~\cite{Alkesh}. This could likely arise from additional cisternal attributes not  included in our study of cisternal dynamics. Indeed cisternae are identified not only by their  position relative to the ER and PM, but also by their lumenal or membranal chemical  content, such as cisternal specific enzymes. We therefore proposed  to extend the nonequilibrium framework developed here to include the dynamics of cisternal enzymes and their retention~\cite{Alkesh}. This may be done by coupling the mechanochemical enzymes involved in the fission-fusion cycles to the 
glycosylation enzymes, resulting in a hierarchical control of cisternal size and number,
a preliminary attempt using a spatial model was made in~\cite{Sachdeva2016}. 





\item  An interesting cell biology question is how  organelle size scales with cell size~\cite{chan, marshallS}. In the context of the Golgi apparatus, an increase in cell volume could result in an increase in the size of individual cisternae, an increase in cisternal number or, an increase in Golgi copy number. 
Can we address this within the framework of our 
Golgi assembly model, for instance by looking at how the model parameters, such as rate of influx, and levels of fisogens and fusogens, depend on cell size?

\item Since finite size cellular structures are a consequence of the nonequilibrium dynamics and feedback control, we may ask for the thermodynamic cost 
in creating and maintaining stable cisternae~\cite{vaikuntanathan}. 

\end{enumerate}




Finally, we believe our model for the biogenesis, maintenance and control of the Golgi cisternae can be extended to other intracellular organelles~\cite{marshallS}: {\it open, nonconserved systems} like the Golgi, such as the endosomal and lysosomal systems, and {\it closed, conserved systems} such as the mitochondrial system.

\section{Acknowledgements}
We thank Mukund Thattai and Vivek Malhotra for numerous discussions.
This work was supported by the Department of Atomic Energy, Government of India, Project Identification No.\,RTI 4006.
We acknowledge support from the Simons Foundation (Grant No.\,287975) and DST (India) for a JC Bose Fellowship (JCB/2018/00030) and thank NCBS-TIFR for Computational Facilities.
MR acknowledges the Physics of Life Chair Professorship supported by TTK-Prestige.





\medskip
\normalem
\bibliographystyle{unsrt}

\appendix 

\setcounter{section}{0}
\setcounter{equation}{0} 
\setcounter{figure}{0}
\setcounter{table}{0}
\counterwithout{equation}{section}
\counterwithout{figure}{section}
\counterwithout{table}{section}
    \renewcommand{\thesection}{S\arabic{section}}
    \renewcommand{\thesubsection}{S\arabic{section}.\arabic{subsection}}
    
\renewcommand{\theequation}{S\arabic{equation}}
\renewcommand{\thetable}{S\arabic{table}}
\renewcommand{\thefigure}{S\arabic{figure}}
\makeatletter
\@addtoreset{prop}{section}
\@addtoreset{corollary}{section}
\renewcommand{\p@subsection}{}
\makeatother
\section{Master equation for de novo assembly of Golgi cisternae\label{sec:si00}}
\subsection{Detailed derivation for $n$-enzyme model for  Master equation\label{sec:si0}}

Consider a cisterna of size $M$, with $N_+$ fusion enzymes and $N_-$ fission enzymes. 
The fusion and fission cycles consist of enzyme-cisternal membrane complex going over a four state Markov cycle,  in the end of which size of the cisterna increases or decreases by one unit of vesicle. 
The complete state space for this process can be represented by $(M,\{i_{r}\},\{k_{s}\},t)$, $r \in (1,\ldots,N_+)$ and $s \in (1,\ldots,N_-)$ where $r,s$ represent the $r^{\it th}$ and the $s^{\it th}$, fusion and fission cycles,  respectively.  The change in cisternal size $M$ is actuated by transitions between internal states (see Fig.\,\ref{fig:golgi_fluxSI}(e)), resulting in the 
dynamics of the probability distribution $P(M,\{i_{r}\},\{k_{s}\},t)$, 
\begin{small}
\begin{eqnarray}
{\dot P}(M, \{i_r\}, \{k_s\}, t) &=&   \sum_{\{j_r\} \in  \mathfrak{I}_{\hspace{-0.5pt}\{i_r\}}^{\hspace{-0.1pt}+} }  \; \left( - K_{\{i_r \, j_r\}}^{+} \; P(M,  \{i_r\}, \{k_s\}, t) +  K_{\{j_r \, i_r\}}^{+} \; P(M + \mathbb{M}^+_{\{j_r \, i_r\}}, \{j_r\}, \{k_s\}, t) \right) \nonumber \\ &+&  \sum_{\{l_s\} \in \mathfrak{I}_{\hspace{-0.5pt}\{k_s\}}^{\hspace{-0.1pt}-} } \; \left(-K_{\{k_s l_s\}}^{-} \; P(M,  \{i_r\}, \{k_s\}, t)  + K_{\{l_s k_s\}}^{-} \; P(M - \mathbb{M}^-_{\{l_s k_s\}},  \{i_r\}, \{l_s\}, t) \right)  \,, 
\label{eq:Master_all_si}
\end{eqnarray}
\end{small}
where the overdot represents the time derivative,  and  the index sets $\mathfrak{I}_{\hspace{-0.5pt}\{i_r\}}^{\hspace{-0.1pt}+}$ and $\mathfrak{I}_{\hspace{-0.5pt}\{k_s\}}^{\hspace{-0.1pt}-}$ contain all internal states that are directly connected to the states $i_r$ and $k_s$ in the $r^{\it th}$ fusion and the $s^{\it th}$ fission cycle, respectively; for example, the states one and zero in Fig.\,\ref{fig:golgi_fluxSI}(c,d) have $\mathfrak{I}_{\hspace{-0.5pt}1}^{\hspace{-0.15pt}\pm} = \left\{0,2,4\right\}$ and $\mathfrak{I}_{\hspace{-0.5pt}0}^{\hspace{-0.15pt}\pm} = \left\{1\right\}$, respectively, for all cycles. Hence, $M$ is a random variable  that represents the mass (size) states and $i_r,j_r,k_s,l_s$ are internal states. We denote the amount of ``virtual''  mass exchanged in a transition from the state $j_r (l_s)$ to $i_r (k_s)$ for the $\pm\,$-cycle, by $\mathbb{M}^{+}_{j_{r} \,i_{r}} (\mathbb{M}^{-}_{l_{s}\, k_{s}})$. At the end of each cycle, exactly one unit of ``real'' mass of a single vesicle is transferred, thus, $\sum^{4}_{j_r=1} \sum_{i_r\hspace{0.5pt}\in\hspace{1pt}\mathfrak{I}_{\hspace{-1pt}j_{r}}^{\hspace{-0.15pt}+}}^{i_r>j_r} \mathbb{M}^+_{j_r \, i_r} = - \sum^{4}_{l_s=1}\sum_{k_s\hspace{0.5pt}\in\hspace{1pt}\mathfrak{I}_{\hspace{-1pt}k_{s}}^{\hspace{-0.15pt}-}}^{k_s>l_s}  \mathbb{M}^-_{l_s \, k_s} = 1$, for all $r,s$.
The transition rates in the fusion (fission) cycle, ${\bf K}^{\pm}$ depend on the availability of  vesicles $N_\v$, fusogens (i.e., fusion enzymes and their associated proteins) $N_+$ and fisogens (i.e., fission enzymes and their associated proteins) $N_-$ at the cisternae, and $P(M,  \{i_r\}, \{k_s\}, t) $ above is conditioned over these. These transition rates also depend on the mechano-chemistry of underlying enzyme-membrane interactions that depend on the instantaneous membrane properties, such as membrane tension and rigidity (and composition), which depend on $M$. More details on these dependencies on the cisternal size are discussed in \ref{subsec:markovcycle}.

Since the dynamics through the internal states is ``fast'' \cite{arf1,allin}, and so is the dynamics governing the availability of vesicles and the $\pm$-species,  one may multiply Eq.\,\eqref{eq:Master_all_si} by $P_{ss}(N_\v,N_+,N_-)$,
sum over the internal states and $N_\v, N_+, N_-$. The left side of Eq.\,\eqref{eq:Master_all_si} can then be simplified using the relation,
\begin{small}
\begin{equation}
 \label{eqn:sum_internal}
\sum_{N_\v,N_+,N_-} \, \sum_{\{i_r\}, \{k_s\}}\, P_{ss}(N_\v,N_+,N_-) \, P(M, \{i_r\}, \{k_s\}, t|N_\v,N_+,N_-) \!=\!
P(M, t)\,.
\end{equation} 
\end{small}
To simplify the right hand side of Eq.\,\eqref{eq:Master_all_si}, we will need a further assumption that fusion and fission events are independent, and so 
\begin{small}
\begin{equation}
 \label{eqn:decomposition}
P(M, \{i_r\}, \{k_s\}, t \vert N_\v,N_+,N_-) = P(M,t \vert N_\v,N_+,N_-) P_{ss}(\{i_r\} \vert M, N_\v, N_+)\, P_{ss}(\{k_s\} \vert M, N_-) \,,
\end{equation} 
\end{small}
where the last two terms on the right are the steady state probability of the set of internal states $\{i_r\}$ and $\{k_s\}$, conditioned on $M, N_\v, N_+, N_-$~\cite{qian}, i.e. over the fusion and fission cycles,  $M, N_\v, N_+, N_-$ remain constant \cite{Matsuura-Tokita}. Summing Eq.\,\eqref{eq:Master_all_si} over internal states leads to
\begin{small}
\begin{eqnarray}
&& {\dot P}(M, t, \vert N_\v,N_+,N_-) = \nonumber \\
&& - \sum_{\substack{\{i_r\},\{k_s\} , \\ \{j_r\} \in  \mathfrak{I}_{\hspace{-0.5pt}\{i_r\}}^{\hspace{-0.1pt}+}} }  \; \left(K_{\{i_r \, j_r\}}^{+} \;  P(M,t \vert N_\v,N_+,N_-) P_{ss}(\{i_r\} \vert M, N_\v, N_+)\, P_{ss}(\{k_s\} \vert M, N_-) \right) \nonumber \\ &+& \sum_{\substack{\{i_r\},\{k_s\},\\ \{j_r\} \in  \mathfrak{I}_{\hspace{-0.5pt}\{i_r\}}^{\hspace{-0.1pt}+} }}  \left(K_{\{j_r \, i_r\}}^{+} \;   P(M + \mathbb{M}^+_{\{j_r \, i_r\}},t \vert N_\v,N_+,N_-) \, P_{ss}(\{j_r\} \vert M + \mathbb{M}^+_{\{j_r\, i_r\}}, N_\v, N_+) \, P_{ss}(\{k_s\} \vert M + \mathbb{M}^+_{\{j_r \, i_r\}}, N_-)\right) \nonumber \\ &-& \sum_{\substack{\{i_r\},\{k_s\}, \\ \{l_s\} \in  \mathfrak{I}_{\hspace{-0.5pt}\{i_r\}}^{\hspace{-0.1pt}+}} } \; \left(K_{\{k_s \, l_s\}}^{-} \;  P(M,t \vert N_\v,N_+,N_-) \, P_{ss}(\{i_r\} \vert M, N_\v, N_+)\, P_{ss}(\{k_s\} \vert M, N_-) \right) \nonumber \\  &+& \sum_{\substack{\{i_r\},\{k_s\}, \\ \{l_s\} \in  \mathfrak{I}_{\hspace{-0.5pt}\{k_s\}}^{\hspace{-0.1pt}+} }}  \left( K_{\{l_s \, k_s\}}^{-} \; P(M - \mathbb{M}^-_{\{l_s \, k_s\}},t \vert N_\v,N_+,N_-) \, P_{ss}(\{i_r\} \vert M - \mathbb{M}^-_{\{l_s \, k_s\}}, N_\v, N_+) \,P_{ss}(\{l_s\} \vert M -  \mathbb{M}^-_{\{l_s \, k_s\}}, N_-) \right) \,. 
\label{eq:current_deriv1_si}
\end{eqnarray}
\end{small}
Due to the independence of fission and fusion events assumed above, we can use $\sum_{k_s=1}^4 P_{ss}(k_s \vert M, N_-)=1, \, \sum_{k_s=1}^4 P_{ss}(k_s \vert M + \mathbb{M}^+_{\{j_r\, i_r\}}, N_-)=1 \,\forall \,r$ in  the first two lines, and $\sum_{i_r=1}^4 P_{ss}(i_r \vert M, N_\v, N_+)=1,\,\sum_{i_r=1}^4 P_{ss}(i_r \vert M - \mathbb{M}^+_{\{l_s\, k_s\}}, N_\v, N_+)=1 \, \forall \, s $ in the last two lines to sum these out. With this simplification, consider the terms corresponding to the fusion cycle,
\begin{small}
\begin{eqnarray}
&& {\dot P}(M, t \vert N_\v,N_+,N_-) \Bigg|_{fusion}  = \nonumber \\
&& \left(-\sum_{\substack{\{i_r\} , \{j_r\} \in  \mathfrak{I}_{\hspace{-0.5pt}\{i_r\}}^{\hspace{-0.1pt}+} \\ j_r > i_r} }  \; \left(K_{\{i_r \, j_r\}}^{+} \;  P(M,t \vert N_\v,N_+,N_-) P_{ss}(\{i_r\} \vert M, N_\v, N_+)\right) \right. \nonumber \\ && \left. +\sum_{\substack{\{i_r\}, \{j_r\} \in  \mathfrak{I}_{\hspace{-0.5pt}\{i_r\}}^{\hspace{-0.1pt}+} \\ j_r > i_r}}  \left(K_{\{j_r \, i_r\}}^{+} \;   P(M +  \mathbb{M}^+_{\{j_r \, i_r\}},t \vert N_\v,N_+,N_-) P_{ss}(\{j_r\} \vert M +  \mathbb{M}^+_{\{j_r\, i_r\}}, N_\v, N_+)\, \right) \right) \nonumber \\ &&  \left( -\sum_{\substack{\{i_r\} , \{j_r\} \in  \mathfrak{I}_{\hspace{-0.5pt}\{i_r\}}^{\hspace{-0.1pt}+} \\ j_r < i_r} }  \; \left(K_{\{i_r \, j_r\}}^{+} \;  P(M,t \vert N_\v,N_+,N_-) P_{ss}(\{i_r\} \vert M, N_\v, N_+)\right) \right. \nonumber \\ && \left. +\sum_{\substack{\{i_r\}, \{j_r\} \in  \mathfrak{I}_{\hspace{-0.5pt}\{i_r\}}^{\hspace{-0.1pt}+} \\ j_r < i_r}}  \left(K_{\{j_r \, i_r\}}^{+} \;   P(M - \mathbb{M}^+_{\{j_r \, i_r\}},t \vert N_\v,N_+,N_-) P_{ss}(\{j_r\} \vert M - \mathbb{M}^+_{\{j_r\, i_r\}}, N_\v, N_+)\,\right) \right) \,,
\label{eq:current_deriv2_si}
\end{eqnarray}
\end{small}
where we have segregated the sums based on the order of ${\{i_r , j_r\}}$ and used the fact that for the fusion cycle, ``virtual'' mass is added in forward transition and subtracted in backward transition (reflected in the sign of $\mathbb{M}^+_{\{j_r \, i_r\}}$). Since the ``real'' mass is kept fixed over a cycle  (and is set by the lower index in Eq.\,\eqref{eq:current_deriv2_si}), we can club the first two terms and the last two terms in the above sum. We further assume that each fusion cycle works independently (with size increasing by one unit over each cycle), and the steady state current stays the same over all the cycles during the time scale of cisternal size update. As mentioned above, the growth of a cisterna of size $M$ is conditioned on the availability of fusion enzymes (fusogens) and fission enzymes (fisogens) - $P(N_+)$ and $P(N_-)$ respectively. Hence, the number of fusion events per unit time is given by the available number of fusogens $(N_+)$,

\begin{small}
\begin{eqnarray}
&&{\dot P}(M, t \vert N_\v,N_+,N_-) \Bigg|_{fusion} = \nonumber \\  && P(M,t \vert N_\v,N_+,N_-) \left(-\sum_{\substack{\{i_r\},\{j_r\} \in  \mathfrak{I}_{\hspace{-0.5pt}\{i_r\}}^{\hspace{-0.1pt}+} \\ j_r>i_r }}  \; K_{\{i_r \, j_r\}}^{+} \;  P_{ss}(\{i_r\} \vert M, N_\v, N_+) \right.  \,\left. + \sum_{\substack{\{i_r\},\{j_r\} \in  \mathfrak{I}_{\hspace{-0.5pt}i}^{\hspace{-0.1pt}+} \\ j_r>i_r } }  K_{\{j_r \, i_r\}}^{+} \;  P_{ss}(\{j_r\} \vert M, N_\v, N_+)\, \right) \nonumber \\ &+&  P(M-N_+,t \vert N_\v,N_+,N_-) \left( - \sum_{\substack{\{i_r\},\{j_r\} \in  \mathfrak{I}_{\hspace{-0.5pt}\{i_r\}}^{\hspace{-0.1pt}+} \\ j_r<i_r } }  K_{\{i_r \, j_r\}}^{+} \;   P_{ss}(\{i_r\} \vert M-N_+, N_\v, N_+) \right. \, \left. + \sum_{\substack{\{i_r\},\{j_r\} \in  \mathfrak{I}_{\hspace{-0.5pt}\{i_r\}}^{\hspace{-0.1pt}+} \\ j_r<i_r }}  \; K_{\{j_r \, i_r\}}^{+} \;  P_{ss}(\{j_r\} \vert M-N_+, N_\v, N_+)\, \right) \,, \nonumber \\
\label{eq:current_deriv3_si0}
\end{eqnarray}
\end{small}
which can be written as
\begin{small}
\begin{eqnarray}
{\dot P}(M, t \vert N_\v,N_+,N_-) \Bigg|_{fusion} = - P(M,t \vert N_\v,N_+,N_-)  \, J_{ss}^+ (M,N_\v,N_+)+ P(M-N_+, t \vert N_\v,N_+,N_-)  \, J_{ss}^+ (M-N_+,N_\v,N_+)  \,,
\label{eq:current_deriv3_si}
\end{eqnarray}
\end{small}
where $J_{ss}^{+}$ is the steady state current over the fusion cycle. Note that for the $1$-enzyme model, these would correspond to a fusion cycle for size $M$ and a fusion cycle for size $M-1$. Now, the number of fission events per unit time is given by the available number of fisogens $(N_-)$. Repeating the same steps for the fission terms in Eq.\,\eqref{eq:current_deriv1_si} with the fact that for the fission cycle, ``virtual'' mass is subtracted for the forward transition and added for the backward transition,
\begin{small}
\begin{eqnarray}
&&{\dot P}(M, t \vert N_\v,N_+,N_-) \Bigg|_{fission} = - P(M,t \vert N_\v,N_+,N_-)  \, J_{ss}^- (M,N_-) + P(M+N_-, t \vert N_\v,N_+,N_-) \, J_{ss}^- (M+N_-,,N_-)  \,,
\label{eq:current_deriv4}
\end{eqnarray}
\end{small}
where $J_{ss}^{-}$ is the steady state current over the fission cycle. Using the above result,
\begin{small}
\begin{eqnarray}
{\dot P}(M,t|N_\v,N_+,N_-) &=&  -P(M,t) \, \left(J^{+}_{ss}(M,N_\v,N_+) + \, J^{-}_{ss}(M,N_-) \right)\nonumber \\ &+& P(M-N_+,t) \,J^{+}_{ss}(M-N_+,N_\v,N_+) + P(M+N_-,t) \,J^{-}_{ss}(M+N_-,N_-) \,.
\label{eq:current_eq_si}
\end{eqnarray}
\end{small}
Each of these loop currents $J^{\pm}_{ss}$ can be written in terms of the transition rates in the fusion (fission) cycle, ${\bf K}^{\pm}$ and the steady state probabilities for the internal states $P_{ss}$. These $J_{ss}^{\pm}$ can be evaluated separately,
\begin{small}
\begin{eqnarray}
\label{eq:current_expr_si1}
J_{ss}^+(M,N_\v,N_+)  &=& \left(\sum_{i,j \in  \mathfrak{I}_{\hspace{-0.5pt}i}^{\hspace{-0.1pt}+},\, j>i}  \; K_{ij}^{+} \;  P_{ss}(i \vert M, N_\v, N_+)\, - K_{ji}^{+} \;  P_{ss}(j \vert M, N_\v, N_+)\, \right)  \\
J_{ss}^-(M,N_-) &=& \left(\sum_{l,k \in  \mathfrak{I}_{\hspace{-0.5pt}i}^{\hspace{-0.1pt}-}, \, k>l }  \; K_{lk}^{+} \;  P_{ss}(i \vert M,N_-)\, - K_{kl}^{+} \;  P_{ss}(j \vert M,N_-)\, \right) \,\, .
\label{eq:current_expr_si2}
\end{eqnarray}
\end{small}
As mentioned before, we have assumed that for a given $N_\v,N_+,N_-$, each fusion/fission cycle works independently, and the steady state current stays the same over all cycles during the time scale of cisternal size update. Summing both sides of Eq.\,\eqref{eq:current_eq_si} over the distribution $P(N_\v,N_+,N_-)$ gives the fusion and fission kernel. With all these details, 
we can write the master equation for single cisterna representing a microscopic model for cisternal size update,
\begin{small}
\begin{eqnarray}
{\dot P}(M,t)  &=& \underbrace{- \sum_{k,n_1} \, P_{ss}(N_\v=k,N_+=n_1)  \; J^{+}_{ss}(M,N_\v,N_+)  \; P(M,t)  \; +  \; \sum_{k,n_1} \, P_{ss}(N_\v=k,N_+=n_1) \;  J^{+}_{ss}(M-n_1,N_\v,N_+) \; \;  P(M-n_1,t) }_{\text{Fusion}}  \nonumber \\  && \underbrace{- \sum_{n_2} \, P_{ss}(N_-=n_2)  \;J^{-}_{ss}(M,N_-)  \; P(M,t)  \; + \;  \sum_{n_2} \, P_{ss}(N_-=n_2) \; J^{-}_{ss}(M+n_2,N_-) \;  P(M-n_2,t)}_{\text{Fission}} \,,
\label{eq:Master_final_sup}
\end{eqnarray}
\end{small}
where the first line corresponds to fusion events and the last line corresponds to fission events. It is also clear from the above that $N_+ =  1$, $N_- =  1$ corresponds to a single event ($1$-enzyme) case, namely Eq.\,\eqref{eqn:markov_eff} shown in the main text. Now, we derive the probability distributions for $N_\v,N_+,N_-$ and show how to compute the fusion and fission kernels for a given physical model.
\begin{figure}[t!]
\centering
\includegraphics[width=\textwidth]{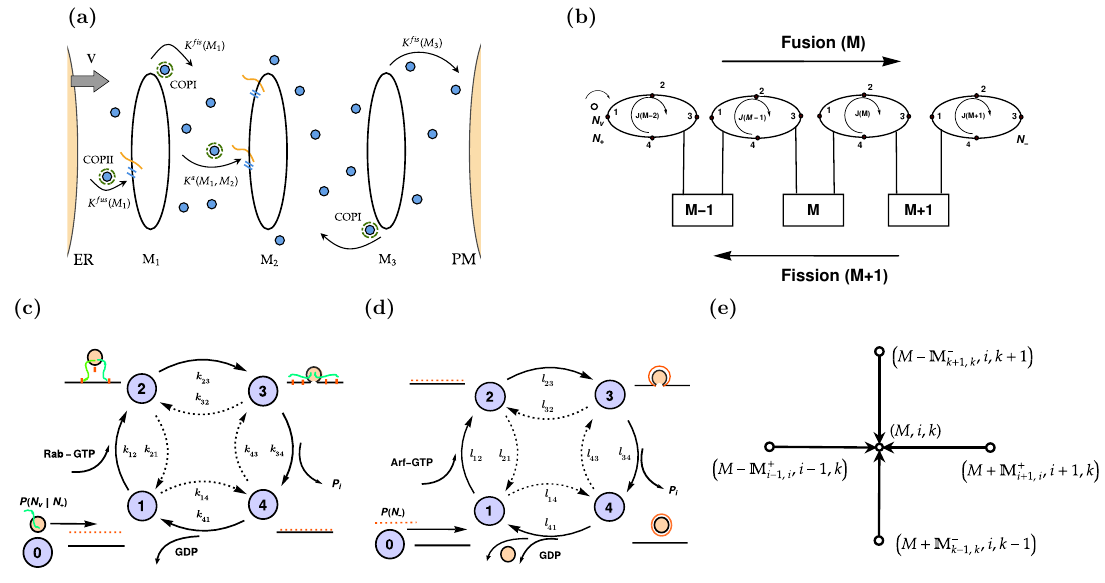}
\caption{{\bf The $n$-enzyme microscopic model for  master equation}: (a) Material flux through multiple cisternae from Endoplasmic reticulum (ER) to plasma membrane (PM). Vesicle transfer from ER to cis-Golgi is facilitated by COPII whereas intercisternal vesicle trasfer is facilitated by COPI protein \cite{glick1}. (b) When summed over fisogens, fusogens and the internal states of fusion-fission cycle, master equation in terms of cisternal size dependent cycle current is obtained, where the fusion cycle increases cisternal size and the fission cycle decreases cisternal size (see chemical master equation, Eq.\,\eqref{eq:Master_final_sup}). Each ellipse represents sum over the four states of fusion and fission cycles.
(c) Four-state Rab-GTPase activation fusion cycle. The input to this Markov cycle is a flux of vesicles (yellow circle) and GTP.  The  forward transitions are represented by full line-arrows, while the reverse arrows are shown as dashed line-arrows. The nonequilibrium cycle clearly supports a current, which results in the increase of cisternal size by one unit. 
(d) Four-state Arf-GTPase activation fission cycle. The input to this Markov cycle is a flux of GTP, while the corresponding output is a flux of GDP, inorganic phosphate P$_i$, and a vesicle. The current through the nonequilibrium cycle results in the decrease of cisternal size by one unit. (e) The chemical master equation graph showing transition into the state $(M,i,k)$ from all the neighborhood states \cite{qian}. For the fusion cycle, ``virtual'' mass, $\mathbb{M}^+_{ji}$ is added in forward transition ($j=i-1$) and subtracted in backward transition ($j=i+1$) and for the fission cycle, ``virtual'' mass, $\mathbb{M}^-_{lk}$ is subtracted for the forward transition ($l=k-1$) and added for the backward transition ($l=k+1$).}
\label{fig:golgi_fluxSI}
\end{figure}

\begin{table}[t!] 
\begin{center}

\centering
\begin{small}
\begin{tabular}{l c}
 \hline
 \hline
 \multicolumn{2}{c}{Microscopic parameters} \\
 \hline
 \hline
Surface area of a cell (fibroblast) & $\sim 1500 \; \mu \text{m}^2$ \cite{griffiths,sable}\\
Diameter of a transport vesicle & d = 60-100 nm ( $0.01 \; \mu \text{m}^2)$ \cite{Rexach,Rhiel}   \\
Diameter of a  Golgi cisternae   & $0.7-1.1\;  \mu \text{m}$  \cite{Klumperman} \\
Total surface area of Golgi   & $20-200\;  \mu \text{m}^2$  \cite{ladinsky, griffiths, Klumperman} \\
Surface area of a Golgi cisterna, $M$ & $1-20\;  \mu \text{m}^2$  \cite{marsh,ladinsky, griffiths, Klumperman} \\
Cell membrane lysis tension   & $1-10 \; Dyn/cm\;  (10^6 \; K_B T/\mu \text{m}^2)$ \cite{upa,sable} \\ 
Golgi cisterna membrane tension, $\gamma$   & $ 0.005 \; Dyn/cm \; (10^3 \; K_B T/\mu \text{m}^2)$ \cite{upa,sable} \\
ER membrane tension   & $0.013 \; Dyn/cm \; (3200 \; K_B T/\mu \text{m}^2)$ \cite{upa,sable} \\
Golgi cisterna rest tension, $\gamma_0$  & $5000 \; K_B T/\mu \text{m}^2$ \cite{upa,sable}\\
Dynamic range, $C_\gamma$  & $5000 \; K_B T/\mu \text{m}^2$ \\
Golgi membrane stretching modulus, $k_s$  & $10^2 \; K_B T/\mu \text{m}^4 \,=\,1 \;  K_B T/(\mu \text{m}^2 \, \text{vesicle})  $ \\
Markov transition rates, $k_{ij},\, l_{ij}$   & $1-1000/s$  \cite{arf1,allin} \\
No. of Fusion model parameters & $|\boldsymbol{\lambda}| = \, 8 + 4 + 3 = 15$ \\
No. of Fission model parameters & $ |\boldsymbol{\mu}| = \, 8 + 4 + 2 = 14$ \\
\hline
\hline
 \multicolumn{2}{c}{Effective parameters} \\
 \hline
 \hline
Golgi Recovery (formation) time & $ \tau \approx 60 \, min (5-10 \, min/\textrm{cisterna})$  \cite{langhans,patterson,dunlop}\\
Estimated avergae vesicle trafficking rate  & $ = \pi d^2 \tau/M \approx 0.1 - 1 \, s $\\
Dimensionless Influx rate &  $v=0-3$ \\
Dimensionless Exit rate &  $d_2 = 0-5$ \\
Dimensionless intercisternal flux rate &  $d_{12} = 0-5$ \\
Michaelis-Menten constant (fusion) &  $C_1 = 100$ \\
Michaelis-Menten constant (fission) &  $C_2 = 20-40$ \\
Macroscopic nucleation rate &  $a_0 = 0.05-0.1$ \\
Cisternal  progression speed  &  $ \approx 0.1 \, \textrm{stacks} \, min^{-1}$ \cite{dmitrieff} \\
\hline
\hline
\end{tabular}
\end{small}

\caption{{\bf Experimental and estimated values of parameters.} The time scale of update of cisternal size (milliseconds-seconds) \cite{arf1,allin} is fixed by the microscopic forward rates. Backward rates are fixed at $1$ unless stated otherwise. Total number of parameters for the fusion process is given by the total sum of kinetic parameters in $4$-state Markov cycle ($8$), Membrane parameters ($4$) and influx parameters ($3$) comprising fusion enzyme availability, vesicle availability and site availability. Similarly, total number of parameters for the fission process is given by the total sum of kinetic parameters in $4$-state Markov cycle ($8$), Membrane parameters ($4$) and influx parameters ($2$) comprising fission enzyme availability and site availability. Membrane parameters are -- Golgi cisterna rest tension ($\gamma_0$), Golgi cisterna rest size ($M_0$), Golgi membrane stretching modulus ($k_s$), and dynamic range ($C_\gamma$).}
\label{tab:tabl1}
\end{center}
\end{table}
\subsection{Physico-chemical basis for nonequilibrium flux kernels}
\label{subsec:markovcycle}
In our analysis, we consider a finite cisternal pool of fusogens and fisogens, set by the expression levels of the cell. This constraint can be imposed in different ways, e.g.,
\begin{enumerate}[label=(\roman*)]

\item \textit{Constant rate ensemble} -- In this ensemble, fusogens/fisogens  are assumed to be available at a constant rate. Specifically, the rate of fusogen/fisogen  arrival per unit time is a strict constant $r_0$, described by $P(r) = \delta(r-r_0)$. As the subunits arrive independently, the probability of observing $n$ arrivals within a given time window follows a Poisson distribution \cite{kampen}.


\item {\it Finite pool ensemble} -- In this ensemble, we assume that at any given time, the cell contains a finite total number of fusogens/fisogens, $N_{total}$. Consequently, each time an enzyme engages in a fusion, fission or transport event, one fewer fusion/fission enzyme remains available at the cisterna. If the size of the Golgi is $M$, the number of available free enzymes is dictated by a function $\mathcal{P}_s(M)$, which represents the number of independent active patches at the cisternal membrane.  The number of free enzymes available is, therefore, $m = N_{total} - \mathcal{P}_s(M)$. The probability of observing exactly $m$ free fusogens/fisogens (which governs the fluctuating arrival rate \cite{kampen}) follows a binomial distribution, given by
\begin{eqnarray}
P(m) = \binom{N_{total}}{m} \,p^m \,(1-p)^{N_{total}-m} \,,
\end{eqnarray}
where $p$ is the probability that a given enzyme is currently free in the pool rather than bound to the Golgi, determined by the specific binding and unbinding rates of the system \cite{banerjee,mukherjee}.

\end{enumerate}





Here, we choose to work with the constant rate ensemble. However, methodologies developed in this work, including the procedure for reduction to a lower dimensional system (see \ref{sec:si2}), readily generalize to other ensemble descriptions.

Now, to construct the fusion-fission flux kernels from  Eqs.\,\eqref{eq:current_expr_si1},\eqref{eq:current_expr_si2},\eqref{eq:Master_final_sup},  we would need the  steady state probabilities $P_{ss}(N\v,N_{\pm})$, the conditional probability in a given internal state $P_{ss}(i \vert M, N\v, N_{\pm}) $, and the transition rates to/from that state. As mentioned above, we will assume that fusion and fission processes are independent events and hence the distributions of fisogens and fusogens are also independent. We consider the fusion and fission flux kernels separately.

{\bf Fusion flux kernel}: 
We first assume that vesicle production and enzyme availability are independent Poisson processes. However, the availability of vesicles primed for fusion depends on the availability of fusion enzymes. Thus,
\begin{equation}
\label{eqn:independence}
P_{ss}(N\v,N_+) = P_{ss}(N\v|N_+) \,P_{ss}(N_+)\, .
\end{equation} 
To obtain $P_{ss}(N\v|N_+)$, we model the injection of vesicles from the ER through a  ``chemical'' process, $\ce{ER <=>[k_+][k_- N\v] v^{}} $, where $k_+$ is the effective rate of arrival of vesicles from the ER, $k_{-} \, N_\v$ is the rate of loss of vesicles and hence $k\v = k_+/k_-$ gives the Poisson rate of the process. 


Close to the cisterna, these secreted vesicles are primed to enter the fusion cycle by associating themselves with fusogens, such as Rab-proteins, which are kept in an inactive GDP-bound form in the cytoplasm by a GDI (GDP dissociation inhibitor) \cite{Alberts2008}. 
This state, the vesicle with cytoplasmic Rab-GDP and its effector proteins is designated as state $|0\rangle$ (see Fig.\ref{fig:golgi_fluxSI}(c)).
The probability of the vesicle being in state $|0\rangle$, which is conditioned on the availability of Rab-GDP (whose number is indexed by $N_+$), is given by,
 \begin{eqnarray}
 P_{ss}(N\v|N_+) =\frac{\binom{N_\v}{N_+}}{k_f^1 + \binom{N_\v}{N_+}}\exp(-k\v) \frac{(k\v)^{N_\v}}{N_\v!} \approx \exp(-k\v) \frac{(k\v)^{N_\v}}{N_\v!} \,.
   \label{eqn:independence1}
 \end{eqnarray}
Here, Rab-proteins lie in the vicinity of vesicles with effective rate $k_f^1$, the ratio of unbinding-binding rates of Rab-proteins to the vesicle. We work in the limit $N_\v \gg N_+$ and $\binom{N_\v}{N_+} \gg k_f^1$ and we can drop the combinatorial factor in Eq.\,\eqref{eqn:independence1}. As the probability of the vesicle being in state  $|0\rangle$ is conditioned on the availability of Rab-GDP, the sum over vesicle number $N_\v$ in Eq.\,\eqref{eq:Master_final_sup} starts from $N_+$. 
Furthermore, the steady state distribution of the fusogens (Rab-GDP), $P_{ss}(N_+)$ is derived from a Poisson process with rate $k_z$,
\begin{eqnarray}
P_{ss}(N_+) \approx \exp(-k_z) \frac{(k_z)^{N_+}}{N_+!} \,.
 \label{fuso_avail}
\end{eqnarray}
Here, $k_z$ is the Poisson rate of availability of fusion enzymes (fusogens), which sets  $N_+$. These fusion-competent vesicles in state $|0\rangle$ then enter the fusion  cycle, characterized here by a 4-state Markov process (see Fig.\ref{fig:golgi_fluxSI}(c)). 
 The states are denoted by $(\mathbb{X}, \mathbb{C})$, the chemical coordinate of the enzyme-vesicle complex and the configurational coordinate of the cisternal membrane. Thus, the state  $|1\rangle$ $\eqqcolon$ (X, C$_1$), 
where  X is the fusion competent vesicle together with the SNARE proteins \cite{Alberts2008} and C$_1$ is the undeformed cisternal membrane. 
The cytoplasmic Rab-GDP gets converted to a membrane bound Rab-GTP by a GEF (Guanine nucleotide exchange factor) and vice-versa by a GAP (GTPase activating protein). Upon binding to GTP via a GEF, state $|1\rangle$
 gets converted with rate $k_{12}$ to state $|2\rangle$ $\eqqcolon$ (X-GTP, C$_2$), which is membrane bound Rab-GTP + membrane inserted SNARE proteins + vesicle bound to cisternal membrane at specific functional domains where several factors -- including t-SNAREs, tethering proteins, and Rab GTPases -- work together to capture the vesicle \cite{loweg}, marking the  ``docking sites''  at the cisterna.
  These activated fusogens then drive the formation of a membrane fusion pore with rate $k_{23}$, while undergoing a conformation change - state $|3\rangle$ $\eqqcolon$ (X$^*$-GTP, C$_3$) -  enroute to complete fusion state $|4\rangle$ $\eqqcolon$ (X$^*$-GDP, C$_4$) with rate $k_{34}$ and the conversion to the inactivated Rab-GDP which then unbinds from the cisternal membrane, thus completing the fusion cycle and making the fusogens available for another round of fusion. The backward transition rates are slower, the energy consuming rates and the consequent breaking of detailed balance ensures that the enzymatic cycle predominantly proceeds in the forward direction.
 This is represented in the reaction scheme;
 
\begin{small} 
\begin{align*}
\begin{aligned}[c]
\ce{X + C_1 <=>[\ce{k_{12}}][\ce{k_{21}}] [X-GTP] + C_2} \\
\ce{[X-GTP] + C_2 <=>[k_{23}][k_{32}] [X^{*}-GTP] + C_3} \\
\ce{[X^{*}-GTP] + C_3  <=>[k_{34}][k_{43}] [X-GDP] + C_4} \\
\ce{[X-GDP] + C_4  <=>[k_{41}][k_{14}] X + C_1}
\end{aligned}
\end{align*}
\end{small}
On completion of this fusion cycle, lipids and lumenal material are pumped into the cisterna, and the cisternal size increases by one unit. The above description of the mechanical states of the cisternal membrane is, of course, a discrete approximation of a continuous membrane deformation and the transition rates depend on mechanical properties of the cisternal membrane such as membrane tension (or curvature) \cite{goud, sens}. Formally, this four-state fusion cycle can be represented by a continuous-time Markov chain (CTMC) \cite{Norris1997}. This is defined via the probability ${p}_i$ of the system being in the internal state $i$: $\dot{p}_i = \sum_{j=1}^{4} k_{ij} p_j$ with $\sum_{i=1}^{4} p_i = 1$ and a transition rate matrix (generator matrix) $k_{ij}$ \cite{Norris1997} given by
\begin{small} 
\begin{eqnarray}
  k_{ij} = \begin{pmatrix}
    -k_{12}(M) - k_{14} & k_{21} & 0 & k_{41}   \\
     k_{12}(M) & - k_{21}  - k_{23} & k_{32} & 0 \\
    0 & k_{23} & - k_{32}- k_{34}(M)  & k_{43}  \\
    k_{14} & 0 & k_{34} (M) & - k_{41}  -k_{43}  
\end{pmatrix} \,.
\label{eq:markov_states}
\end{eqnarray} 
\end{small} 
\begin{figure}[t!]
\centering 
\includegraphics[width=0.95\textwidth]{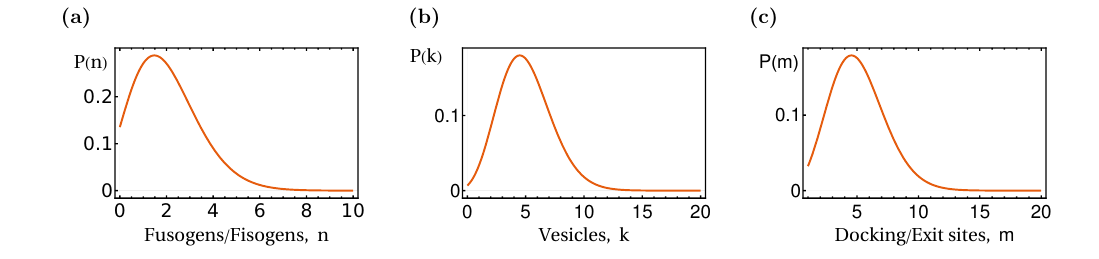} 
\caption{{\bf Poisson rates in  the microscopic model.} The availabilities of fusogens/fisogens, vesicles and docking/exit sites follow a Poisson process, see Eqs.\,\eqref{eqn:independence1},\eqref{fuso_avail},\eqref{eq:combinatorial_dock}. Poisson probabilities for (a) Number of fusogens/fisogens, $n$ (Poisson rate, $k_z=2$) (b) Number of vesicles, $k$ (Poisson rate, $k_\v=5$) (c) Number of docking/exit sites, $m$ (Poisson rate, $k_p \, M=5$).}
\label{fig:poisson_rates}
\end{figure}
Now, as per the master equation Eq.\,\eqref{eq:Master_final_sup}, we need to specify the dependence of the microscopic current on the number of fusion enzymes. As suggested above (and seen in Eq.\,\eqref{eq:markov_states}), some of the transition rates $k_{ij}$ depend on the cisternal size $M$. This arises from two main sources: the first one is the availability of fusogens and cisternal docking sites for the  fusion primed vesicles. For instance, the probability of being in state $|1\rangle$  given by Eq.\,\eqref{eqn:independence1} is conditioned on the availability of fusogens, with $k_z$ being the Poisson rate of finding fusogens. The transition rate $k_{12}$ is taken to be proportional to the  number of ``docking sites" $m$. Also, we have assumed that the probability of finding $m$ ``docking sites" on a cisterna of size $M$ for a given $k$ number of vesicles, $P_d(m|M,k)$, is given via a Poisson point process \cite{Kelly1979}, i.e., 
\begin{equation}
P_d(m|M,k)=  \frac{\binom{m}{k}}{k_f^2 + \binom{m}{k}} \,  \frac{e^{-k_p \, (M+1) } (k_p \, (M+1))^m}{m!} \,,
\label{eq:combinatorial_dock}
\end{equation}
effectively making the transition rate $k_{12}$ cisternal size dependent. Here, $k_f^2$ is the ratio of effective ``docking'' and ``undocking''  rates and the $(M+1)$ term makes sure that the cisterna can evolve from zero size (we can also use a threshold instead of $1$ here). As before, we work in the limit $m \gg k$ and $\binom{m}{k} \gg k_f^2$ and the combinatorial factor can be neglected (hence $k$ only sets the lower limit for the sum over $m$). This also imposes the constraint, $k_\v > k_z$ and $k_p \, M  > k_\v > k_z $ and as we will see, decides the saturation scale of the mean fusion kernel (see Eq.\,\eqref{eq:kernels_analyt_form2_full}). 

The other dependencies on cisternal size $M$ arise from a possible involvement of the mechano-chemistry of fusogens driving some of the transitions. The transition rates can 
depend on compositional properties (e.g., lipid specificity in phase segregated domains) and mechanical properties such as tension or curvature \cite{goud, sens1, sens,  dai, stone, bigay, has} of the cisternal membrane, the latter
 either directly via mechanochemistry or actuated via a chemical pathway~\cite{joseph}. For instance, the transition ($|3\rangle \rightarrow |4\rangle$), which involves flattening of budded vesicle is up-regulated by membrane tension (It is worth noting that other rates may also depend on mechnanochemical parameters. For example, in fusion cycle, transition  $|2\rangle \rightarrow |3\rangle$ will be directly proportional to membrane tension as it involves tethering proteins bringing the vesicle closer to the membrane \cite{Alberts2008,loweg}. Also, the transition $|4\rangle \rightarrow |1\rangle$, which involves decoupling of Rab-enzymes from  cisternae might be proportional to membrane tension. However, these are not considered here). Assuming mechanical properties remain constant throughout the cycle, this can be encoded in the rate $k_{34} = k_{34}^0 \, \frac{2\,\gamma}{C_\gamma + \gamma}$, $k_{34}^0$ is constant, which is the saturation kinetics of the tension $\gamma$ and  $C_\gamma$ sets the dynamic range for the varying tension, $ k_{34}^0$ is constant (Fig.\ref{fig:docks}(c)). In addition, we have taken $k_{43} = k_{34}^0$ to be a constant. We also assume all other rates to be constant. In the adiabatic limit, the cisternal membrane tension depends on the excess cisternal size (akin to excess membrane area \cite{evansT}) as, $\gamma = \gamma_0 \exp{(- (8 \pi \kappa)/(k_B T) \,\Delta M/M_0)}$, $M_0,\gamma_0$ are rest cisterna size and rest membrane tension of the cisterna and $\Delta M$ being positive for fusion events, leads to reduction in the membrane tension. We will work with the linear approximation, $\gamma_0 - k_s(M-M_0(V))$, $k_s$ is the stretching modulus of the cisternal membrane and $k_s = 1/M_0 \,\left(\gamma_0 \,(8 \pi \kappa)/(k_B T \,K_A)\right)$ where choosing $K_A$ fixes the validity and dynamic range of the linear approximation  \cite{evansT}. Since the Golgi cisterna is derived from the ER, we assume the mechanical parameters of the Golgi membrane are on the same order of magnitude as those of the ER (see Table \ref{tab:tabl1}).

With the above ingredients, we can compute the  nonequilibrium fusion flux kernel $K^{fus}$ by averaging the microscopic current in the fusion cycle over the availabilities of vesicles, fusogens and ``docking sites'' (see Eqs.\,\eqref{eq:Master_final_sup} and \eqref{eq:current_expr_si1}),
\begin{small}
\begin{equation}
    \label{eq:fusion_rate_k}
    K^{fus}(M, \boldsymbol{ \lambda})
    = \sum_{n=1}^{N_E} \sum_{k=1}^{N_E} \sum_{m=1}^{N_E} \,P(N_+=n) \, P(N\v = k|N_+) \,
    P_d(m|M,k) \,  J_{ss}^+(M,m)  \,,
\end{equation}
\end{small}
\begin{figure*}[t!]
\centering
\includegraphics[width=0.95\textwidth]{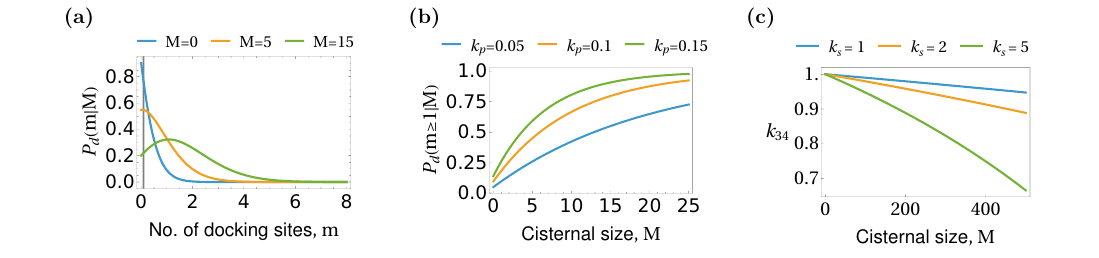} 
\caption{{\bf Cisternal size dependencies of microscopic rates.} Probability of finding a ``docking/exit site'' given cisternal size $M$, $P_d(m|M)$ increases as the cisternal size is increased, see Eq.\,\eqref{eq:combinatorial_dock}. It is plotted (a) as a function of number of docking sites $m$ for different cisternal sizes, $M$ and (b) as a function of cisternal size, $M$ for different site availability rates, $k_p$. (c) Flattening rate for fusion, $k_{34}$ as a function of cisternal size $M$. $k_{34} = k_{34}^0 \, \frac{2\,\gamma}{C_\gamma + \gamma}$, and $\gamma=\gamma_0 - k_s \, M$, where $k_{34}^0$ is constant and $k_s$ is the stretching modulus
of the cisternal membrane. Parameter values, $\gamma_0=C_\gamma= 5000 \; K_B T/\mu \text{m}^2$,  $k_s=1$, $k_{34}^0=1$. The linear regime is set by the dynamic range $C_\gamma$, beyond which the membrane tension effects become relevant. Flattening rate for fission,  $l_{32} = l_{32}^0 \frac{2\,\gamma}{C_\gamma + \gamma}$ as a function of cisternal size $M$ is also given by a similar curve.} 
\label{fig:docks}
\end{figure*}
where $N_E$ is the chosen upper limit for the summation beyond which the availabilities of fusogens, vesicles and the ``docking sites'' are zero. The steady-state current through the fission cycle can be calculated from any of the reaction edges (\textit{flow conservation} at each node), e.g. at the edge ($|4\rangle \rightarrow |1\rangle$),
\bea
J_{ss}^+(M,m) = \left (k_{41} P_{ss}^+ (4\vert M,m,\boldsymbol{\lambda}) - k_{14} P_{ss}^+ (1\vert M,m,\boldsymbol{\lambda}\right) \,,
\eea
$\boldsymbol{\lambda}$ is the vector of parameters that includes the rates in the transition matrix, rates that appear in the Poisson processes governing the availability of the vesicles, fusogens and ``docking sites'', and the mechanochemical parameters (see Table \ref{tab:tabl1}).

{\bf Fission flux kernel}: 
The analysis of the fission cycle follows along similar lines (see Fig.\,\ref{fig:golgi_fluxSI}(d)).  
The fission process is catalysed by fisogens such as Arf-proteins, a GTPase enzyme cycle, which we  describe by a 4-state Markov process \cite{QianE, Norris1997}. 
The external state $|0\rangle$ 
 is a sequestration of fisogens (dashed red line) proximal to the cisternal membrane exit site. 
  State  $|1\rangle$  $\eqqcolon$ (Y, C$_1$) represents Arf-GDP-fisogen and the undeformed cisternal membrane. Upon binding to GTP, $|1\rangle$ $\to$ $|2\rangle$ 
$\eqqcolon$ (Y-GTP, C$_2$), which represents Arf-GTP-fisogens bound to a budding cisternal membrane at specialized domains where molecular machinery -- such as Arf-GTPase, GEFs, and coat proteins -- concentrate to deform the lipid bilayer  \cite{mal1,weiland}, marking the ``exit/budding sites'' on the cisterna.
This drives the formation of a complete bud, while undergoing a conformation change - $|3\rangle$  $\eqqcolon$ (Y$^*$-GTP, C$_3$) - enroute to complete fission
   $|4\rangle$  $\eqqcolon$ (Y-GDP, C$_4$), and the conversion to the inactive GDP-bound form. The Y-GDP then unbinds from the vesiculated bud, and returns to $|1\rangle$ and then to state $|0\rangle$. It can be represented by the reaction scheme,
   
 \begin{small}  
\begin{align*}
\begin{aligned}[c]
\ce{Y + C_1 <=>[\ce{l_{12}}][\ce{l_{21}}] [Y-GTP] + C_2} \\
\ce{[Y-GTP] + C_2 <=>[l_{23}][l_{32}] [Y^{*}-GTP] + C_3} \\
\ce{[Y^{*}-GTP] + C_3  <=>[l_{34}][l_{43}] [Y-GDP] +  C_4} \\
\ce{[Y-GDP] + C_4  <=>[l_{41}][l_{14}] Y + C_1}
\end{aligned}
\end{align*}
\end{small}  
At the end of this fission cycle, the cisternal size decreases by one unit. The  rate $l_{12}$ is a function of availability of cisternal  ``exit (budding) sites'', which is generated via a Poisson point process \cite{Kelly1979} with probability  $P_b(m|M)$  and increases with increasing $M$ before it saturates (as before, we have neglected the combinatorial factor, see Eq.\eqref{eqn:independence1}). Since the budding process ($|2\rangle \rightarrow |3\rangle$) has to work against the membrane tension, the transition rate  $l_{32}$ is facilitated by the membrane tension, $l_{32} = l_{32}^0 \frac{2\,\gamma}{C_\gamma + \gamma}$, $C_\gamma$ sets the saturation scale. In addition, we have taken $l_{23} = l_{32}^0$ to be a constant. We also assume all other rates to be constant.
The fission kernel can be computed by averaging the microscopic current in the fission cycle over the availabilities of fisogens and ``exit sites'' (see Eqs.\,\eqref{eq:current_expr_si2},\eqref{eq:Master_final_sup}),
\begin{equation}
    \label{eq:fission_rate_k}
    K^{fis}(M,\boldsymbol{\mu}) = \sum_{m=1}^{N_E}\sum_{n=1}^{N_E}  \; P(N_- = n)  \, P_b(m|M,n) \, J_{ss}^-(M,m)\,,
\end{equation}
where $N_E$ is a chosen upper limit for the summation beyond which the availabilities of the fisogens and the ``exit sites'' are zero. Also,
\bea
J_{ss}^- = l_{41} P_{ss}^- (4\vert M,m,\boldsymbol{\mu}) - l_{14} P_{ss}^- (1\vert M,m,\boldsymbol{\mu})
\eea
is the steady state current through the fission cycle, and $\boldsymbol{ \mu}$ is the vector of parameters that includes the rates in the transition matrix, rates that appear in the Poisson processes governing the probability of budding, availability of fisogens and ``exit sites'', and the mechano-chemical parameters (see Table \ref{tab:tabl1}).
\subsection{Dynamics of the mean cisternal size}
\label{sec:SiMeanDy}
\label{sec:SiMeanDy}
We derive the dynamical equation for the mean cisternal size. Eq.\,\eqref{eq:Master_final_sup} can be written in the following concise form,
\begin{small}
\begin{eqnarray}
\frac{dP(M,t)}{dt} &=&  \sum_{n_1} \left( K^{fus}(M-n_1,n_1)\,P(M-n_1,t) - K^{fus}(M,n_1) \, P(M,t) \right) \nonumber \\ &&+ \sum_{n_2}  \left( K^{fis}(M+n_2,n_2)\,P(M+n_2,t) - K^{fis}(M,n_2) \,P(M,t) \right)  \nonumber  \\
&=&   \sum_{n_1} (E^{-n_1} - 1)K^{fus}(M,n_1) \,P(M,t)  +  \sum_{n_2}  (E^{n_2} - 1) K^{fis}(M,n_2) \,P(M,t) \,, \nonumber \\
\end{eqnarray}
\end{small}
where in the last line, we have used the step operator $E$ \cite{kampen}. The mean dynamics for this Master equation can be obtained using the generating function $G(s,t) = \sum_{M} s^M \, P(M,t)$, which for the fission term reads
\begin{small}
\begin{eqnarray*}
&& \frac{\partial G}{\partial  t}\bigg|_{fission} =  \sum_{M} s^M \,\frac{dP(M,t)}{d t} \nonumber \\ &&= \sum_{M} s^M \,\sum_{n_2}  \,\left( K^{fis}(M+n_2,n_2)\,P(M+n_2,t) - K^{fis}(M,n_2) \,P(M,t)\right)   \nonumber \\
&=&  \sum_{n_2}  s^{-n_2} \,\sum_{M}\, s^{M+n_2}  \,\left( K^{fis}(M+n_2,n_2)\,P(M+n_2,t) \right) - \sum_{n_2} \, \sum_{M} \,s^{M} \, \left(K^{fis}(M,n_2)\, P(M,t)\right) \\
&=& \sum_{n_2} \,   (s^{-n_2}-1) \, \sum_{M} \,s^M \,\left(K^{fis}(M,n_2)\, P(M,t)\right) \,.
\end{eqnarray*}
\end{small}
Following similar steps for the fusion term and adding them up gives,
\begin{small}
\begin{eqnarray}
\frac{\partial G}{\partial t} &=& \sum_{n_2}  (s^{-n_2}-1) \,\sum_{M} s^M \left(K^{fis}(M,n_2)\, P(M,t)\right) +  \sum_{n_1}  (s^{n_1}-1) \,\sum_{M} s^M \left(K^{fus}(M,n_1) \,P(M,t)\right) \,. \nonumber \\
\label{eq:G_part}
\end{eqnarray}
\end{small}
Taking the partial derivative  of the  above equation Eq.\,\eqref{eq:G_part} w.r.t. the dummy variable, $s$ and setting $s=1$ in the resulting equation gives the mean dynamics,
\begin{small}
\begin{eqnarray}
\frac{d \overline{M}}{d  t} &=& - \sum_{n_2} \sum_{M} \, n_2\, \left(K^{fis}(M,n_2)\, P(M,t)\right) +  \sum_{n_1} \sum_{M} \,n_1 \,\left(K^{fus}(M,n_1)\, P(M,t)\right)  \nonumber \\
&=& - \sum_{n_2} \, n_2 \,\overline{ K^{fis}(M,n_2) } +  \sum_{n_1} \,n_1 \,\overline{ K^{fus}(M,n_1) } \,.
\end{eqnarray}
Assuming mean field decoupling, where the fluctuations are neglected,
\begin{eqnarray}
\frac{d \overline{M}}{d  t} = - \sum_{n_2} \, n_2 \, K^{fis}(\overline{M} ,n_2) +  \sum_{n_1} \,n_1  \,K^{fus}(\overline{M} ,n_1)  \,,
\label{eq:mean_eq_N}
\end{eqnarray}
\end{small}
which gives the expression for mean fusion and fission kernels mentioned in the main text. In terms of the microscopic parameters, it can be written as

\begin{small}
\begin{eqnarray}
{\cal K}^{fus}(\overline{M}) &=& \sum_{m,k,n=1}^{N_E} n \, \frac{e^{-k_z} (k_z)^n}{n!}\, \frac{\binom{k}{n}}{k_f^1 + \binom{k}{n}} \, \frac{e^{-k_\v} (k_\v)^k}{k!} \,  \frac{\binom{m}{k}}{k_f^2 + \binom{m}{k}} \, \frac{e^{-k_p \, (\overline{M}+1) } (k_p \, (\overline{M}+1))^m}{m!} \, J_{ss}^{+}(\overline{M},m) \label{eq:kernels_analyt_form1} \\
{\cal K}^{fis}(\overline{M}) &=& \sum_{m,n=1}^{N_E} n \, \frac{e^{-k_z} (k_z)^n}{n!}\, \frac{\binom{m}{n}}{k_f^2 + \binom{m}{n}} \frac{e^{k_p\, \overline{M}} (k_p\, \overline{M})^m}{m!} \, J_{ss}^{-} (\overline{M},m) \,,
\label{eq:kernels_analyt_form2}
\end{eqnarray}
\end{small}
where for clarity, we use $n$ to denote the availability of both fusion and fission enzymes, and hence the number of fusion and fission cycles (denoted by $n_1,n_2$ in Eq.\,\eqref{eq:mean_eq_N}). As mentioned above, we have neglected the combinatorial factors in the main text (see Eqs.\,\eqref{eqn:independence1},\eqref{eq:combinatorial_dock} and the subsequent discussion). 
\section{Reduction to a low dimensional dynamical system}
\label{sec:si2}
Given that the master equation Eq.\,\eqref{eq:Master_final_sup} involves a high-dimensional parameter space (see Eqs.\,\eqref{eq:fusion_rate_k},\eqref{eq:fission_rate_k}) and features highly non-linear terms, it becomes analytically intractable \cite{grima}. Hence, extracting a macroscopic understanding from it can be arduous. We address this issue using three distinct methods that yield consistent results: (i) a graphical method where we look at the fixed points of the mean field equations, (ii) a phase diagram based method where the phase diagram obtained from the microscopic model consisting of the solution classes of the master equation Eq.\,\eqref{eq:Master_final_sup} is used to confirm or invalidate a plausible macroscopic model. 
(iii) an algebraic method in which we construct an effective function (of cisternal size as well as of fusogens/fisogens availability) from different parts of the sum in Eqs.\,\eqref{eq:kernels_analyt_form1},\eqref{eq:kernels_analyt_form2}.

\subsection{Dimensional reduction through parsimony}
\label{subsec:dimredux}
Eqs.\,\eqref{eq:fusion_rate_k},\eqref{eq:fission_rate_k} provide us with explicit but cumbersome formulae for computing fusion-fission kernels.
We compute the current over fusion and fission cycles and then plug it into the stochastic Master equation
Eq.\,\eqref{eq:Master_final_sup}. Taking the first moment of the resulting equation, and using a {\it mean field decoupling}, we derive Eqs.\,\eqref{eq:kernels_analyt_form1},\eqref{eq:kernels_analyt_form2} and find that the mean cisternal size, for which we continue to use the symbol $M$ satisfies the equation (see Eq.\,\eqref{eq:mean_eq_N}),
\begin{equation}
\label{eqn:markov_mean}
\frac{dM}{dt}  =  \; \; {\cal K}^{fus}(M, \boldsymbol{\lambda}) - {\cal K}^{fis}(M, \boldsymbol{\mu}) \,,
\end{equation}
where ${\cal K}^{fus/fis}$ now has dimensions of a mass current. This is similar to state-dependent assembly-disassembly rates in the case of microtubule formation, discussed in \cite{Mohapatra2016}. 
This represents a nonlinear dynamical system for $M$, with the  microscopic fusion flux kernel described by 
 $15$ model parameters and the microscopic fission flux kernel described by $14$ model parameters (Table \ref{tab:tabl1}) - a total of $29$ parameters. To reduce this high dimensional parameter space to a low dimensional one, 
 it suffices to understand the generic form of  ${\cal K}^{fus}$ and ${\cal K}^{fis}$, their asymptotic behaviour and the nature of intersections when plotted against 
 $M$. Taking into consideration the limiting values of the fusogens and fisogens, the availability of fusion-fission sites (which scale with $M$) and the dependence of the fusion-fission rates on membrane tension~\cite{dai, goud}, we expect that the microscopic fusion and fission kernels 
 are increasing functions of $M$ and must asymptote to a constant.
 
This is verified by explicitly plotting the kernels with physically reasonable choices of microscopic parameter values. The summation in Eqs.\,\eqref{eq:kernels_analyt_form1},\eqref{eq:kernels_analyt_form2}, can be carried out numerically and mean fission and fusion kernels can be plotted.
Fig.\,\ref{fig:golgi_flux_123} illustrates various intersections of these mean microscopic kernels as one varies the kinetic parameters, giving rise  to stable and unstable fixed points.  We observe that for a wide range of physical parameters, fission and fusion kernels have zero, one, two or three intersections. Varying microscopic rates changes number of stable (filled circle) and unstable (open circle) fixed points. Assuming enzyme kinetics, such a behaviour can be approximated to arbitrary accuracy by the low dimensional forms

\begin{figure*}[t!]
\centering
\includegraphics[width=\textwidth]{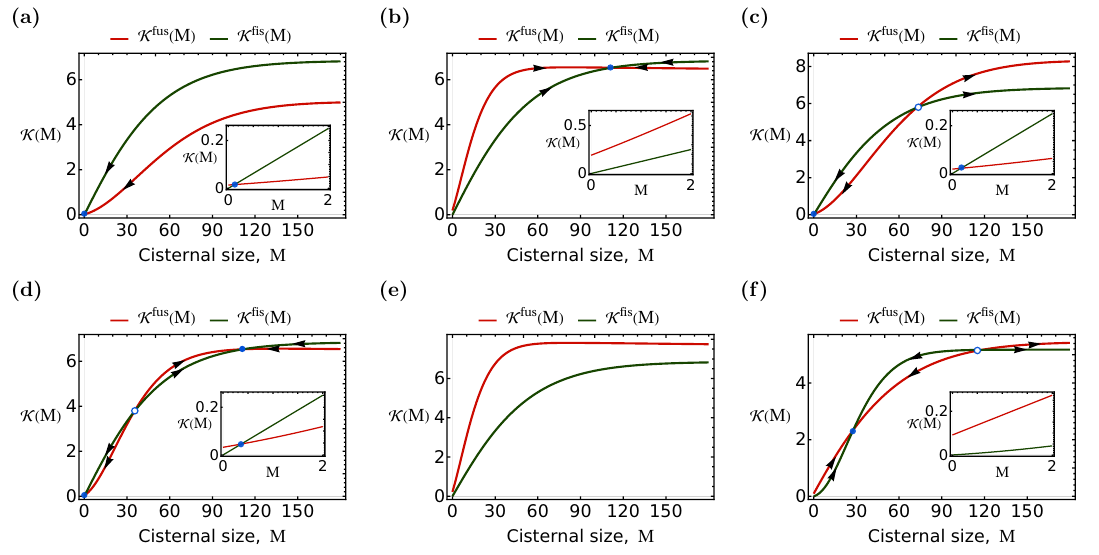}
\caption{{\bf Intersections of mean fusion and fission kernels.} (a-e) Mean fission and fusion flux kernels $\mathcal{K}^{fus}(M)$ and $\mathcal{K}^{fis}(M)$, respectively (together represented as $\mathcal{K}(M)$) can be obtained by computing the mean field equations  (see \ref{sec:SiMeanDy}, \ref{subsec:dimredux} and  Eqs.\,\eqref{eq:kernels_analyt_form1},\eqref{eq:kernels_analyt_form2}). Parameter values are in the range, $(k_p)_{fus}=0.05-0.1$, $(k_p)_{fis} = 0.1$, $(k_z)_{fus}=1-4$, $(k_z)_{fis} = 1$, $l_{23}^0 =1-10$ , $k_{34}^0 =1-10 $, $\gamma_0 =5000 \; K_B T/\mu \text{m}^2 $, $C_\gamma = 5000 \; K_B T/\mu \text{m}^2 $, see Table \ref{tab:tabl1} for notational details. Varying microscopic parameters in Eqs.\,\eqref{eq:kernels_analyt_form1},\eqref{eq:kernels_analyt_form2}, one gets the scenarios -- where mean fission and fusion kernels have one (a,b) or three (d) intersections for $(k_p)_{fus}>(k_p)_{fis}$ or two (c) intersections for $(k_p)_{fus}\sim(k_p)_{fis}$ or zero (e) intersections. Changing the microscopic parameters changes the number of stable (filled circle) and unstable (open circle) fixed points. This result can be used to map the microscopic model, Eq.\,\eqref{eq:Master_final_sup} to a dynamical system problem Eqs.\,\eqref{eq:Rfus1},\eqref{eq:Rfis1}. (a-e, insets) The intersections of mean fission and fusion  kernels near $M=0$. (f) If the fission events are more cooperative than the fusion events, i.e., $(k_p)_{fis}>(k_p)_{fus}$, the largest positive root is an unstable one, and the system grows unboundely. It can't ensure stable biogenesis for generic parameters (see \ref{sec:single_cist_si}). (inset) The mean fission and fusion kernels has no intersections near $M=0$.} 
\label{fig:golgi_flux_123}
\end{figure*}

\begin{eqnarray}
\label{eq:Rfus1}
{\cal K}^{fus} &=& v  \left( a_0 + \frac{M^\alpha}{C_1 +  M^\alpha} \right)  \\
{\cal K}^{fis} &=& \frac{d\,M^\beta }{C_2 +  M^\beta}  \,.
\label{eq:Rfis1}
\end{eqnarray}

The fusion flux kernel, is written in terms of {\it influx rate}, $v$ and {\it effective nucleation constant}, $a_0$ while the fission flux kernel is written in terms of the {\it effective peak fission rate}, $d$.
The parameters $C_1, C_2$ represent the levels of fusogens and fisogens (or {\it Hill saturation constants} for fusion and fission), respectively, while the Hill-exponents $\alpha,\beta$, characterize the {\it cooperativity} of the fusion and fission processes, respectively.  This Michaelis-Menten (and Hill) form of the fusion-fission kernels has been used in previous studies~\cite{Sachdeva2016,Vagne2018}. The dimensional reduction strategy outlined here (from $29$ parameters to $7$), and extended to many cisternae, makes the form of the fusion-fission kernels tractable, and leads to a low dimensional dynamical system for the sizes (masses) of the individual cisternae. \\

\noindent \textit{Cooperativity in fusion and fission processes}: Note that in Fig.\,\ref{fig:golgi_flux_123}(a-e), while plotting the mean fission and fusion kernels, we have taken the  microscopic rates $(k_p)_{fus}>(k_p)_{fis}$ and  $(k_z)_{fus}>(k_z)_{fis}$, which leads to the fusion events being more cooperative than the fission events. In the following discussion, we propose an argument for this.

The cooperativity in fusion and fission processes can arise through the following mechanisms:
\begin{enumerate}[label=(\roman*)]
\item {\it Enzyme allostery}: It is molecular and structural in nature. For example, Arf-GTP recruits the COPI coat subunits and they exhibit lateral protein-protein cooperativity. The initial recruitment of a coatomer provides binding interfaces that thermodynamically enhance the recruitment of adjacent subunits \cite{weiland} via elastic interactions.
\item {\it Substrate allostery}: The cisternal membrane undergoes localized physical alterations that lower the energy barrier for subsequent events. For example, the initial binding of COPI coatomers induces local membrane bending. This locally curved lipid membrane physically matches the inherent curvature of the COPI complex, acting as a highly favorable physical "substrate" that thermodynamically enhances the recruitment of subsequent coat subunits to continue the invagination \cite{goud}.
\item {\it Enhanced availability due to a prior proximal event}: This is an instance of dynamic cooperativity. For example, in a Rab cascade, membrane-bound Rab GTPases recruit GEFs to activate more Rabs \cite{Stenmark2009}. This cascading expansion recruits tethers that cluster resident t-SNAREs together, creating a hyper-reactive patch that ensures vesicle fusion \cite{Jahn2006, Alberts2008}.
\item  {\it Trans-membrane cooperativity}: Unlike fission, which occurs on a single continuous membrane, fusion requires the cooperative coupling of enzymes on the incoming vesicle (v-SNAREs) and the cisternal membrane (t-SNAREs) \cite{Jahn2006}.
\end{enumerate}

Now, mechanisms (i) and  (ii) occur during the $4$-state fission cycle and mechanism (iv) occurs during the $4$-state fusion cycle. Mechanism (iii) is generic and related to the initiation of fusion and fission cycles and we will concentrate on it here.

As mentioned before, the initiation of fusion is an enzymatically driven, cooperative process where  membrane-bound Rab GTPases recruit GEFs to activate more Rabs \cite{Stenmark2009}. The recruitment rate is proportional to the cisterna size $M$, but subject to the depletion of cytosolic resources.

Let $P_0$ represent an inactive (cytosolic) Rab protein in GDP-bound form and $P_1$ represent an active (membrane-bound) Rab protein in GTP-bound form. A bound Rab protein recruits more membrane-bound Rab proteins \cite{Jahn2006, Alberts2008} and this process can be represented by the following chemical reactions:
\begin{align}
    \text{Basal Activation:} \quad & P_0 \xrightarrow{k_1 I_{\text{eff}}(M)} P_1 \nonumber \\
    \text{Positive Feedback} \quad & P_0 + P_1 \xrightarrow{k_2 I_{\text{eff}}(M)} 2P_1 \nonumber\\
    \text{Basal Decay:} \quad & P_1 \xrightarrow{k_{-1}} P_0    \,,
     \label{eq:cascade}
\end{align}
where we have defined the effective activation function $I_{\text{eff}}(M)$, assumed to have a  Michaelis-Menten saturation form, $ I_{\text{eff}}(M) = I_{\text{max}} \left( \frac{M}{K_I + M} \right)$, and  represents the availability of upstream activators such as Rab-GEFs \cite{Stenmark2009} that subsequently drive the clustering of resident t-SNAREs on the cisternal membrane \cite{Jahn2006}. Let $N$ be the total number of Rab proteins (assumed to be a constant)  and $n$ be the number of active Rab proteins ($P_1$). Applying the law of mass action to the above reaction scheme \cite{kampen}, the activation propensity $W_+(n)$ combines both basal activation and the positive feedback:
\begin{align}
    W_+(n) = \Big[ k_1 I_{\text{eff}}(M) + k_2 I_{\text{eff}}(M) n \Big] \left( N - n \right) \hspace{1cm}
    W_-(n) = k_{-1} n \,,
\end{align}
which gives the master equation for the  time evolution of the probability $P(n,t)$ of having exactly $n$ active Rabs:
\begin{equation}
    \frac{\partial P(n,t)}{\partial t} = W_+(n-1) P(n-1, t) + W_-(n+1) P(n+1, t) - \Big[ W_+(n) + W_-(n) \Big] P(n, t) \,.
\end{equation}
Assuming mean field decoupling, we can write the equation for the mean fraction ($x = \langle n \rangle / N$) for the active Rabs, 
\begin{equation}
    \frac{dx}{dt} = k_1 \, I_{\text{eff}}(M) \,(1 - x) + k_2 \, I_{\text{eff}}(M) \, N \, x \,(1 - x) - k_{-1} \, x \,.
    \label{eq:mean_fus_coop}
\end{equation}
\begin{figure*}[t!]
\centering
\includegraphics[width=\textwidth]{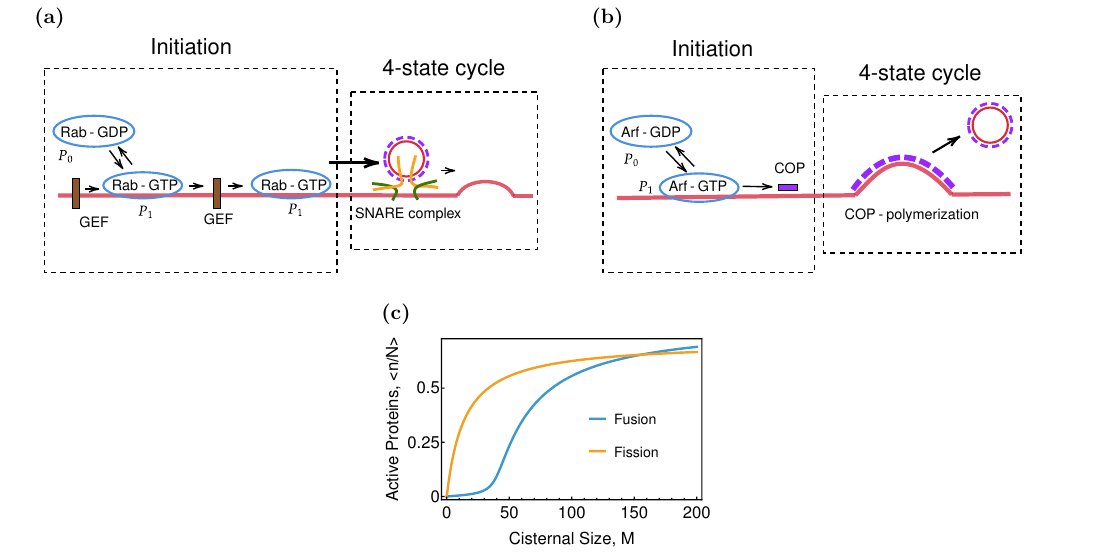}
\caption{{\bf Recruitment of fusogens and fisogens on the cisternal membrane.} (a) Initiation of fusion at the target cisterna begins with the basal conversion of inactive cytosolic Rab-GDP ($P_0$) to active membrane-bound Rab-GTP ($P_1$) via a resident GEF \cite{Alberts2008}. This membrane-bound Rab ($P_1$) subsequently recruits additional GEFs to activate more cytosolic Rabs, establishing a highly cooperative positive feedback loop known as a Rab cascade \cite{Stenmark2009}, see Eq.\,\eqref{eq:cascade}. This cascade ultimately scaffolds the trans-membrane SNARE complexes, which drive the fusion of the incoming vesicle with the target membrane \cite{Jahn2006}. (b) In contrast, the initiation of fission relies on the independent anchoring of Arf proteins to the membrane \cite{DSouzaSchorey2006}, lacking a  positive feedback loop,  see Eq.\,\eqref{eq:fis_coop}. Here, $P_0$ represents inactive cytosolic Arf-GDP and $P_1$ represents active membrane-anchored Arf-GTP. The active Arf anchors ($P_1$) subsequently recruit COPI coat proteins. The lateral polymerization of these coat proteins drives the physical budding and eventual fission of the membrane, which occur temporally downstream as a $4$-state fission cycle and are therefore not included in this initiation model. (c) Plot showing the mean fraction of active Rab-proteins for fusion (blue) and the mean fraction of active Arf-proteins for fission (yellow) in the steady-state as a function of cisternal size, $M$ (see Eqs.\,\eqref{eq:mean_fus_coop},\eqref{eq:mean_fis_coop}). It clearly indicates the initiation (which involves recruitment of fusion or fission enzyme on the cisternal membrane) of fusion process is more cooperative than that of the fission process.}
\label{fig:Master_cooperativity}
\end{figure*}
Now, unlike fusion, the initiation of fission does not feature enzymatic self-recruitment \cite{DSouzaSchorey2006}. Arf anchors to the membrane and later recruits COPI coat-proteins which we consider to be part of the $4$-state fission cycle, which occurs subsequently and is not part of the initiation process. Therefore, initiation of fission lacks the positive feedback. As before, let $P_0$ represent an inactive Arf protein in GDP-bound form and $P_1$ represent an active, membrane-anchored Arf-protein in  GTP-bound form, giving the reaction scheme:
\begin{align}
    \text{Binding:} \quad & P_0 \xrightarrow{k_1 I(M)} P_1 \nonumber\\
    \text{Basal Decay:} \quad & P_1 \xrightarrow{k_{-1}} P_0 \,,
         \label{eq:fis_coop}
\end{align}
with activation function $I(M)$. Let $N$ be the total number of Arf-proteins available to the membrane and $n$ be the number of active Arf-proteins ($P_1$). For this reaction scheme, propensities are given by
\begin{align}
    W_+(n) = k_1 I(M) \Big( N- n \Big) \hspace{1cm}
    W_-(n) = k_{-1} n \,.
\end{align}
The master equation describing the time evolution of the probability $P(n,t)$ of having $n$ number of active Arf proteins is given by
\begin{align}
    \frac{\partial P(n,t)}{\partial t} = k_1 I(M) \Big( N - (n-1) \Big) P(n-1, t) + k_{-1} (n+1) P(n+1, t) 
   - \Big[ k_1 I(M) \Big( N- n \Big) + k_{-1} n \Big] P(n, t) \,,
\end{align}
and the mean fraction ($x = \langle n \rangle / N$) of active Arf-proteins is given by
\begin{equation}
    \frac{dx}{dt} = k_1 I(M) (1 - x) - k_{-1} x \,.
    \label{eq:mean_fis_coop}
\end{equation}

In Fig.\,\ref{fig:Master_cooperativity}(c), we plot the steady-state mean fraction number of active Rab-proteins and Arf-proteins as a function of cisternal size. It clearly indicates fusion being more cooperative than fission. We will see  in \ref{sec:single_cist_si} that this is a necessary condition for de novo biogenesis \cite{Misteli2001}.

We can further argue that fusion is functionally more cooperative because it benefits from both dynamic cooperativity (Rab cascades \cite{Stenmark2009}) and robust inter-membrane scaffolding (trans-SNARE cooperativity \cite{Jahn2006}). This continuous expansion is only terminated by delayed Rab-GAP feedback  \cite{Stenmark2009}.
In contrast, fission is geometrically self-limiting; although COPI assembly exhibits initial substrate allostery, the mature bud's extreme curvature acts as a strict physical timer, triggering Arf-GAP1 to abruptly halt polymerization \cite{bigay,DSouzaSchorey2006,weiland}.

\begin{figure}[t!]
\centering
 \includegraphics[width=\textwidth]{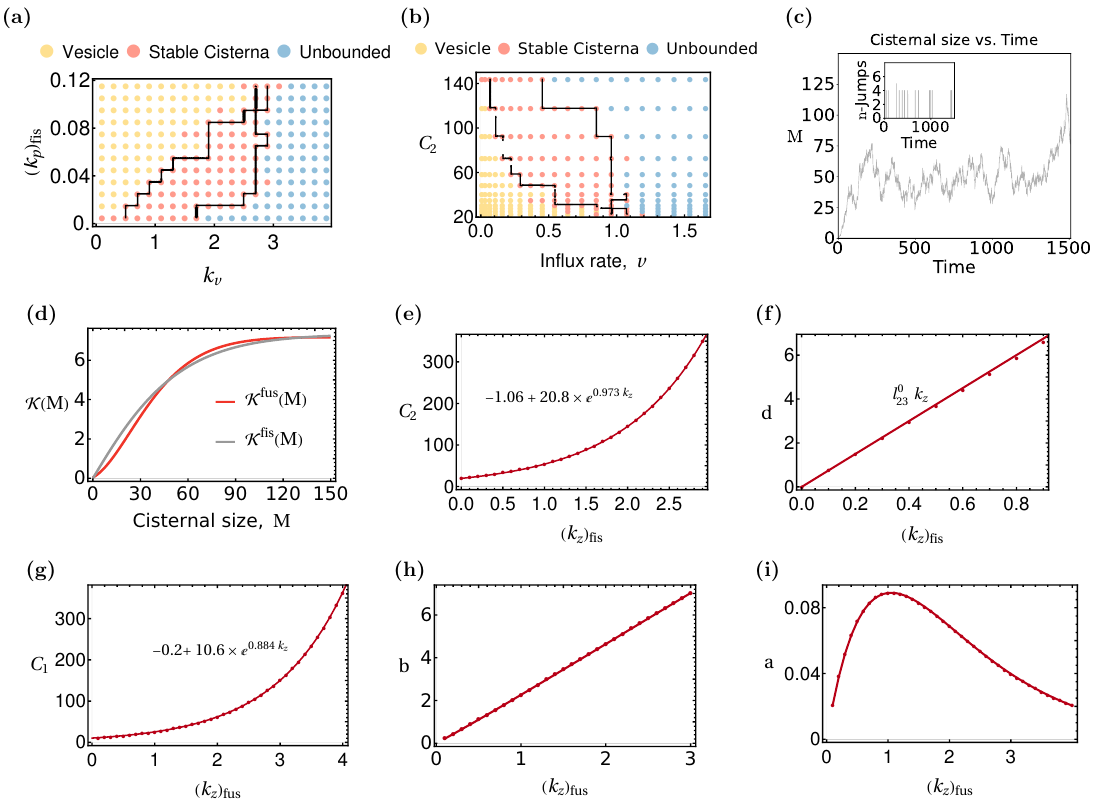}
\caption{{\bf Microscopic parameters to rates in dynamical system.} (a) Phase diagram obtained using stochastic simulations from $n$-enzyme master equation Eq.\,\eqref{eq:Master_final_sup} for single cisterna in ($k_{\v},(k_p)_{fis}$) plane, $k_{\v}$ and $(k_p)_{fis}$ are the Poisson rates of availability for vesicles and fisogens, respectively. Parameter values, $(k_p)_{fus} = 0.2$, $(k_z)_{fus} = 4$, $(k_z)_{fis} = 1$, $l_{23}^0 =3$, $k_{34}^0 =3$, see Eqs.\,\eqref{eq:kernels_analyt_form1},\eqref{eq:kernels_analyt_form2} and Table \ref{tab:tabl1} for notational details. (b) The phase diagram in the microscopic parameter space in (a) is projected to the macroscopic parameter space, ($v$, $C_2$) plane, where $v$ is the influx rate  and $C_2$ is the Hill saturation constant for fission,  using the mapping Eq.\,\eqref{eq:mapm}. 
(c) Stochastic trajectory for single cisterna obtained from $n$-enzyme master equation Eq.\,\eqref{eq:Master_final_sup}. Parameter values, $k_\v = 1$, $(k_p)_{fus} = 0.1$, $(k_p)_{fis} = 0.05$,  $(k_z)_{fus} = 1$, $(k_z)_{fis} = 1$, $l_{23}^0 =3$, $k_{34}^0 =6$. The inset shows the number of events (jumps) in a time step in Gillespie simulation. (d) Cisternal size-dependent fusion and fission rates rates for the reduced dynamical system Eq.\,\eqref{eq:veq_H}, $\mathcal{K}^{fus}(M),\,\mathcal{K}^{fis}(M)$, respectively (together represented as $\mathcal{K}(M)$) intersect each other thrice for the parameter regime used. (e-i) Parameters in the reduced dynamical system model can be estimated from the microscopic parameters. Taking the Poisson rate of finding the fusogens and fisogens, $(k_z)_{fus}$ and $(k_z)_{fis}$, respectively as independent parameters and using the mapping Eq.\,\eqref{eq:mapm}, one can compute the exit rate ($d = k^{fis}_m \kappa_2$), Hill saturation constant for fission ($C_2$), Hill saturation constant for fusion ($C_1$), Fusion rate ($b = v_m^{fus} \, \kappa_1$) and  nucleation rate ($a = a_0^{fus} \, v_m^{fus}$). $C_1,\,b$  are increasing functions of availability of fusogens, $(k_z)_{fus}$ and and $C_2,\,d$ are increasing functions of availability of fisogens, $(k_z)_{fis}$. Since the nucleation rate $a$ depends on the availability of ``docking'' sites for zero cisternal size, it first increases and then decreases for larger Poisson rate of availability of fusogens, $(k_z)_{fus}$. (d-i) Default parameter values, $(k_p)_{fus}=0.1$, $(k_p)_{fis} = 0.05$, $(k_z)_{fus}=3$, $(k_z)_{fis} =1$, $l_{23}^0 =7.5$, $k_{34}^0 =2.45$,  $k_{\v} =20$, $\gamma_0 = 5000 \; K_B T/\mu \text{m}^2 $, $C_\gamma = 5000 \; K_B T/\mu \text{m}^2 $, see Table \ref{tab:tabl1}.
} 
\label{fig:1cist_Master_sim}
\end{figure}

\subsection{Dimensional reduction through simulating stochastic trajectories\label{sec:macmap}}
Another way to construct an effective description of the rates in the master equation is via simulating the trajectories for the master equation and then analysing the asymptotic behaviour (\textit{asymptotic time domain analysis}). We use the Gillespie algorithm \cite{gillespieE} to compute the system trajectories and then calculate the average over these stochastic trajectories at large time (steady-state average). Hence, based on the steady-state behaviour, we demarcate the solution classes of the master equation Eq.\,\eqref{eq:Master_final_sup} in the space of microscopic parameters. Then, assuming a plausible macroscopic model, we project this phase diagram to the space of parameters of this macroscopic model. The comparison of this projected phase diagram with the phase diagram obtained directly from the macroscopic model confirms or invalidates the assumed macroscopic model. The mapping between the single cisterna microscopic model and a macroscopic Hill-type model is given below. \\

\noindent \textit{Mapping the microscopic mean kernels to dynamical system rates}:
In this work, we have mapped the mean field equation for the microscopic model Eq.\,\eqref{eq:mean_eq_N} to the following dynamical system model,
\begin{small}
\begin{equation}
\frac{dM}{dt} =   \underbrace{v^{fus}_m \, \left(a_0^{fus} +  \frac{\kappa_1 \, M^\alpha}{C_1 + M^\alpha} \right)}_{{\cal K}^{fus}}- \underbrace{v^{fis}_m \left(\frac{\kappa_2 \, M^\beta}{C_2 + M^\beta}\right)}_{{\cal K}^{fis}}\,,
\label{eq:veq_H0}
\end{equation}
\end{small}
which is a detailed version of Eqs.\,\eqref{eq:Rfus1},\eqref{eq:Rfis1}, where we have separated the size-dependent terms $v^{fus}_m, \, v^{fis}_m$ and the rate constants, shown in parentheses. We choose to do so because for the microscopic model, we do not scale the propensities w.r.t. the fission rate, unlike the rates in the macroscopic model (see \ref{sec:si4}). As before, for clarity, we have dropped the overbar for the mean cisternal size ($M$). To compare the phase diagram in the space of microscopic and macroscopic parameters, we need the mapping between these two parameter spaces. We notice that once we fix $\alpha,\beta$, mapping between the microscopic mean flux kernels Eqs.\,\eqref{eq:kernels_analyt_form1},\eqref{eq:kernels_analyt_form2} and macroscopic parameters can be established quite easily (also see Eqs.\,\eqref{eq:Rfus1},\eqref{eq:Rfis1}),
\begin{small}
\begin{eqnarray}
v^{fis}_m \kappa_2 &=&  {\cal K}^{fis}(M\rightarrow\infty) = d\,,\hspace{1cm}
C_2 =\frac{{\cal K}^{fis}(M\rightarrow\infty)}{{\cal K}^{fis}(M\rightarrow 1)} -1  \,,\hspace{1cm}
a_0^{fus} \, {v^{fus}_m} =  {\cal K}^{fus}(M\rightarrow 0) = a \,,\nonumber  \\
C_1  &=& \frac{{\cal K}^{fus}(M\rightarrow \infty)-{\cal K}^{fus}(M\rightarrow 0)}{{\cal K}^{fus}(M\rightarrow 1)-{\cal K}^{fus}(M\rightarrow 0)} \,,\hspace{1cm}
\kappa_1 \, v^{fus}_m= {\cal K}^{fus}(M\rightarrow \infty)-{\cal K}^{fus}(M\rightarrow 0)  = b\,,
\label{eq:mapm}
\end{eqnarray}
\end{small}
and $v^{fus}_m$ is given by Eq.\,\eqref{eq:kernels_analyt_form1} by setting $J_{ss}^{+} =1$. Hence, we have a mapping between microscopic fusion and fission kernels and macroscopic rates. 
 
 In Fig.\,\ref{fig:1cist_Master_sim}(a), we plot the phase diagram in the space of microscopic parameters by computing the average over the stochastic trajectories obtained using Gillespie simulations for the $n$-enzyme master equation Eq.\,\eqref{eq:Master_final_sup}. We use the mapping Eq.\,\eqref{eq:mapm} to project the phase diagram into the space of macroscopic parameters in the dynamical system Eq.\,\eqref{eq:veq_H0} (Fig.\,\ref{fig:1cist_Master_sim}(b)). We then compare the projected phase diagram to the phase diagram in Fig.\,\ref{fig:p_roots1}(b) (Fig.\ref{fig:phases_1d}(b) in the main text) obtained directly using fixed point analysis (see \ref{sec:si4}). We observe that these two phase diagrams are qualitatively similar, validating the dynamical system model.

\subsection{An algebraic criterion for dimensional reduction \label{sec:Markov_current}}
Here, we present an algebraic method to arrive at the macroscopic rates from the microscopic model. To this end, we derive the functional form that the steady state microscopic current takes for given microscopic transition rates in the $4$-state Markov cycles and the probability distributions for enzyme/site availabilities. \\

\noindent\textbf{Steady state current in a Markov cycle:}\\
We will compute steady-state current in a general Markov cycle with $3$ and $4$ nodes for any given transition rates. Then we will derive the effective dependence of this current on the cisternal size, $M$, for given cisternal size dependencies of the transition rates.\\
\noindent\textbf{3-nodes:}

Consider a Markov system with three nodes (Fig.\ref{fig:nodes_sites_sum}(a)), with rates given by $(a,a',b,b',c,c')$, where unprimed indices correspond to forward transitions and primed indices correspond to reverse transitions. With the transition rate matrix \cite{Norris1997},
\begin{small}
\begin{eqnarray}
\mathcal{M} = \begin{bmatrix}
 -a-c' & a' & c \\
  a & -b-a' & b'  \\ 
  c' & b & -c-b' 
   \end{bmatrix}\,.
   \label{eq:transition_M}
\end{eqnarray}
\end{small}
One can show that the current associated with the cycle is given by the ratio of two terms, the numerator $\mathcal{A}$ is given by the Kolmogorov current condition \cite{Kelly1979}, 
\begin{small}
\bea
\mathcal{A} =  a  b c - a'b'c' \,, 
\eea
\end{small}
whereas denominator $\mathcal{B}$ is given by the sum of the determinants of the cofactors of the transition rate matrix Eq.\,\eqref{eq:transition_M}, with one of the rows replaced by (1,1,1).
\begin{small}
\begin{eqnarray}
\mathcal{M}' = \begin{bmatrix}
 1 & 1 & 1 \\
  a & -b-a' & b'  \\ 
  c' & b & -c-b' 
   \end{bmatrix}\,.
   \label{eq:mark_3}
\end{eqnarray}   
\end{small} 
 This comes from the probability normalization condition, and this modified matrix $\mathcal{M}'$ has full rank unlike the transition rate matrix, Eq.\,\eqref{eq:transition_M}. From Eq.\,\eqref{eq:mark_3}, one can compute  $\mathcal{B}$ and see that it cannot have terms that are products of rates corresponding to the outflux rates at the same node, such as $ac', \,aa', \,ba', \, bb'$, etc. 
\begin{small}
\bea
\mathcal{B}  = \underbrace{ab + bc + ca}_{\textrm{forward sum}} + \underbrace{a'b' + b'c' + c' a'}_{\textrm{backward sum}} + \underbrace{ab'  + bc' + ca'}_{\textrm{mixed sum}}\,.
\label{eq:kolmogorov3}
\eea
\end{small}
Here, we have broken the above sum in three parts. The second term is obtained from the first by priming two of its indices, while the third term is obtained from the forward sum by priming a single index, given that the multiplying factors do not correspond to outfluxes at the same node.
Hence, sums like $(a'b  + b'c + c'a)$ are ruled out. 

\noindent \textbf{4-nodes:} 

The above exercise can be repeated for $4$ nodes (Fig.\ref{fig:nodes_sites_sum}(b)), and the numerator and the denominator are given by,
\begin{small}
\begin{eqnarray}
\mathcal{A} &=&  a  b c d - a'b'c'd'  \nonumber \\
\mathcal{B} &=&  \underbrace{abc + bcd + cda + dab}_{\textrm{forward sum}} + \underbrace{a'b'c' + b'c'd' + c'd'a' + d'a'b'}_{\textrm{backward sum}}  + \underbrace{abc' + bcd' + cda' + dab'}_{\textrm{mixed sum 1 }}  + \underbrace{ab'c' + bc'd' + cd'a' + da'b'}_{\textrm{mixed sum 2}} \,,\nonumber \\
\label{eq:kolmogorov}
\end{eqnarray}
\end{small}
where terms in third and fourth sum can be obtained from the terms in the first sum by sequentially adding primed variables, with the condition that the transition rates do not correspond to the outfluxes at the same node. 

\begin{enumerate}[label=(\alph*)]
\item The numerator $\mathcal{A}$ is a difference of two terms : a product of four forward rates and a product of four backward rates.

\item The denominator $\mathcal{B}$ consists of a total of $16$ terms, which are products of three rates.

\item  The functional dependence of steady state current in a Markov cycle on variables such as cisternal size can be written
down from the functional dependencies of various transition rates. For example, if the transition rates have polynomial or Hill-kind dependencies on the cisternal size, steady state current can be written as $\frac{C_\alpha \, M^\alpha + C_\beta M^\beta + \ldots + C_0}{C'_\alpha \, M^\alpha + C'_\beta M^\beta + \ldots + C'_0}$.

\item The magnitude of the current in a Markov cycle is dominated by the smallest forward rate in the cycle. For example, assuming that all backward (primed) rates are small compared to forward (unprimed) rates, above current can be approximated from Eq.\,\eqref{eq:kolmogorov},
 
$$\frac{a  b c d}{abc + bcd + cda + dab} \approx \frac{a  b c d}{dab} \approx c \,, $$

if $c$ is the smallest forward rate. In the main text, we have assumed this to be the rates dependent on membrane parameters such as Golgi membrane tension, and all other forward rates are kept very high ($1000$ per second) \cite{arf1,allin}.
\end{enumerate}

For the $4-$state fusion and fission cycle, the transition $|3\rangle \rightarrow |4\rangle$ for fusion and  the transition $|3\rangle \rightarrow |2\rangle$ for fission correspond to the domain flattening rates and are proportional to the tension ($c \equiv c(\gamma)$ for fusion and $b' \equiv b'(\gamma$) for fission respectively). Also, the transition $|1\rangle \rightarrow |2\rangle$ is proportional to the number of docking and budding sites $m$ ($a \equiv a \, m $). Assuming that all other reverse transition rates are small ($1$ per second, see Table \ref{tab:tabl1}), currents over fusion and fission cycles are given by

\begin{small}
\begin{eqnarray}
J^{-}_{ss}(M,m) &=& \frac{a b c d  \, m }{(a  b c \, m  + a b d \, m   + b c d  + c d a \, m )  + a d b' (\gamma) } \\
J^{+}_{ss}(M,m) &=& \frac{a b c (\gamma) d \, m}{a  b d \, m  + (a b c (\gamma) \, m + c (\gamma) d a \, m  + b c (\gamma) d) } \,.
\label{eq:jm}
\end{eqnarray}
\end{small}
As before, note that if the backward rates are small, $J_{ss}^{-}(M,m) \approx b$ ($b$ being the forward transition rate $|2\rangle  \rightarrow |3\rangle $ for fission), and $J_{ss}^{+}(M,m) \approx c(\gamma)$. Furthermore, if we assume the Golgi membrane tension to be $\gamma = \gamma_0 -k_s(M-M_0)$ and $b'(\gamma), c(\gamma)$ to be saturated Hill functions of the membrane tension, we can plug these functions into Eq.\,\eqref{eq:kolmogorov} and from here we can read the hill exponents and coefficients. We now present an approximation, when the currents in the Markov cycles have a weak dependence on the cisternal size $M$.\\
\begin{figure*}[t!]
\centering
\includegraphics[width=0.9\textwidth]{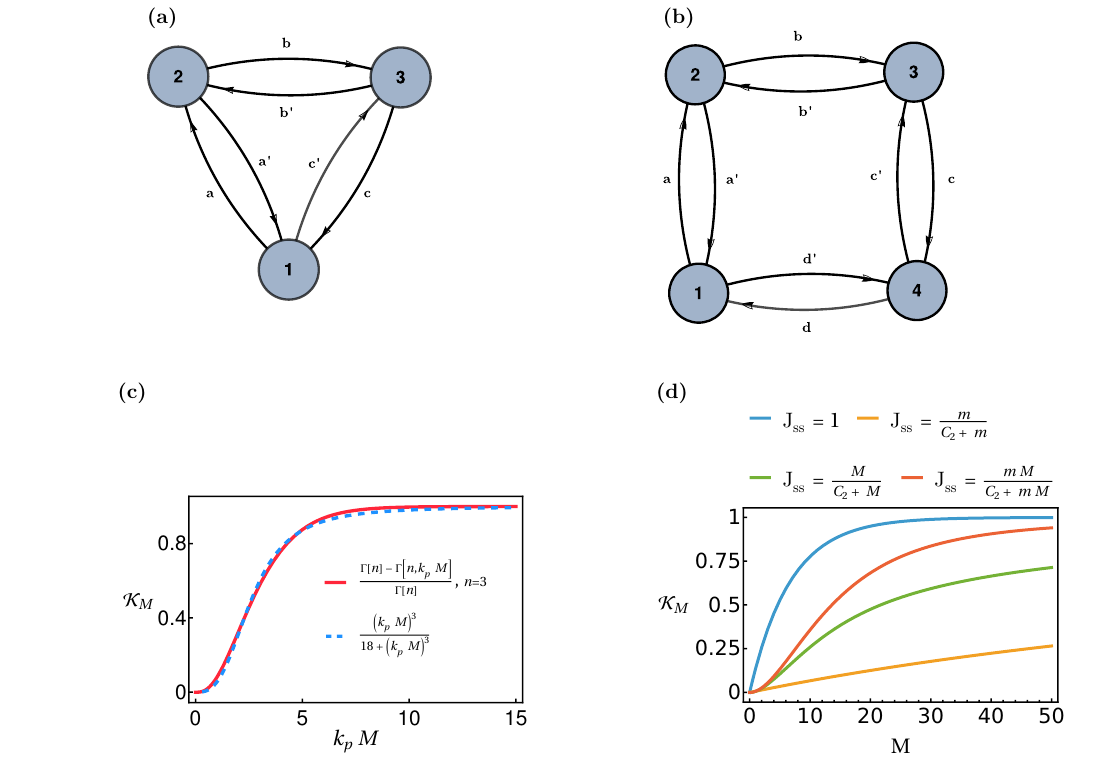}
\caption{{\bf Parameter space reduction using algebraic expressions for currents over fusion and fission cycles}: (a,b) Markov cycles for $3$-node and $4$-node systems. Eqs.\,\eqref{eq:kolmogorov3},\eqref{eq:kolmogorov} outline the calculation for the nonequilibrium steady-state current over these cycles. (c) Hill function (shown as blue dashed line) is a reasonably good approximation for the summation $\mathcal{K}(M)$ (shown as red thick line) in Eq.\,\eqref{eq:kernels_analyt_form2_red} with current, $J_{ss}^-=1$. (d) For various model functions $J_{ss}^-$, sum over site availability, $m$ in Eq.\,\eqref{eq:kernels_analyt_form2_red} with $n=1$, gives saturating behaviour as a function of cisternal size $M$ (Poisson rate for site availability, $k_p=0.15$, Hill saturation constant, $C_2=20$). These saturating functions $\mathcal{K}_M$, when summed over enzyme availability (see Eq.\,\eqref{eq:kernels_analyt_form2_full}) yields mean fission (and fusion) kernel, also a saturating function of cisternal size $M$, see Eq.\,\eqref{eq:Rfis1M}. The qualitative behavior observed in panels (b,c) remains valid for both microscopic fission  and microscopic fusion kernels (see Eqs.\,\eqref{eq:kernels_analyt_form2_red},\eqref{eq:kernels_analyt_form_red}).}
\label{fig:nodes_sites_sum}
\end{figure*}

\noindent \textbf{Mapping to dynamical system}

From the mean fission kernel Eq.\,\eqref{eq:kernels_analyt_form2}, sum over $m$ for a particular $n$ reads
\begin{small}
\begin{eqnarray}
{\cal K}^{fis}_M(M) &=& \sum_{m=n}^{N_E} \, \frac{e^{-k_p\, M} (k_p\, M)^m}{m!} \, J_{ss}^{-} (M,m) \,.
\label{eq:kernels_analyt_form2_red}
\end{eqnarray}
\end{small}

The cisternal size, $M$-dependence in Eq.\,\eqref{eq:kernels_analyt_form2_red} comes from two factors, one from site availability, given by Poisson distribution with rate $k_p \, M$ and the other from the steady state current over the Markov cycle $J_{ss}^{-} (M,m)$. As shown above, $J_{ss}^-  (M,m)$ can be written in Hill-form, with exponents and saturation constants that arise from the size dependencies of the transition rates in the Markov cycles. If we set the  current $J_{ss}^{-} (M,m)$ to  a constant ($=J^-$, which would be the case if  $m \approx 1$ and the transition rates are slow varying functions of tension, i.e., $\gamma \approx \gamma_0$), then Eq.\,\eqref{eq:kernels_analyt_form2_red} is a sum over tail $(m>n)$ of a Poisson distribution, which  saturates as a function of $M$ and can be approximated by a Hill type function, $\frac{M^n}{\left(\mathcal{C}_n /(k_p)^n+ M^n\right)}$. Here, $n$ (which depends on the Poisson rate of enzyme availability, $k_z$) sets the dominant term in the sum and characterizes the cooperativity (Hill-exponent) and $\mathcal{C}_n$ is an increasing function of $n$ that can be obtained from the fitting, Fig.\ref{fig:nodes_sites_sum}(c). Moreover, $\mathcal{C}_n/(k_p)^n$ sets the Hill saturation constant for this sum. Using the approximated Hill function, the full fission kernel is given by
\begin{small}
\begin{eqnarray}
{\cal K}^{fis}(M) = \sum_{n=1}^{N_E} n \frac{e^{-k_z} (k_z)^n}{n!} \frac{M^n}{\left(\mathcal{C}_n /(k_p)^n+ M^n\right)} \,,
\label{eq:kernels_analyt_form2_full}
\end{eqnarray}
\end{small}
which is a weighted average over $n$ for a given cisternal size $M$. Note that for large $M$, the above sum reduces to ${\cal K}^{fis}(M) \rightarrow k_z$ (in the limit $N_E\rightarrow \infty$), and this, along with the  multiplicative constant, $J^-$ gives the peak rate of the weighted sum $\mathscr{d}(k_z)$. We can use the mapping Eq.\,\eqref{eq:mapm} to extract an effective Hill constant, and it  is a slightly cumbersome expression,
\bea
\mathcal{C}(k_z) =  k_z e^{k_z}\bigg/\left(\sum_{n=1}^{\infty} \frac{(k_z)^n}{(n-1)!} \cdot \frac{1}{C_n/(k_p)^n+1}\right)-1\,,
\label{eq:sum_C_kz}
\eea
which can be shown to be an increasing function of $k_z\, (\partial\mathcal{C}(k_z)/\partial k_z >0)$. Hence, the above sum Eq.\,\eqref{eq:kernels_analyt_form2_full} can be approximated by
\begin{small}
\begin{eqnarray}
{\cal K}^{fis}(M) = \frac{\mathscr{d}(k_z) M^\eta}{\mathcal{C}(k_z) + M^\eta}\,,
\label{eq:kernels_analyt_form2_Ckz}
\end{eqnarray}
\end{small}

where $\eta$ is the effective cooperativity for the sum and $\mathscr{d}(k_z)$ can be rewritten as $\mathscr{d}(k_z) = \nu_m^{fis} \kappa$, segregated into a size term, $\nu_m^{fis}$ and a time scale, $\kappa$. Similarly, for the mean fusion kernel  Eq.\,\eqref{eq:kernels_analyt_form1}, the sum over the tail $(m>n)$ of the Poisson distribution for site availability gives ($N_E\rightarrow \infty$),

\begin{small}
\begin{eqnarray}
{\cal K}^{fus}_M(M) &=& \sum_{m=n}^{\infty} \, \frac{e^{-k_p\,( M+1)} (k_p\, (M+1))^m}{m!}   = \left(\sum_{m=n}^{\infty} \frac{e^{-1}}{m!} \right)  + \left( \sum_{m=n}^{\infty} \left( \frac{e^{-k_p ( M+1)} (k_p \,(M+1))^m}{m!} - \frac{e^{-1}}{m!} \right) \right)\,,
\label{eq:kernels_analyt_form_red}
\end{eqnarray}
\end{small}
where the lowest order term is a constant due to the $(M+1)$ term and following similar arguments sketched for the fission kernel, one can approximate the sum Eq.\,\eqref{eq:kernels_analyt_form1} with a Hill function plus a constant. 
\begin{small}
\begin{eqnarray}
{\cal K}^{fis}(M) = \mathscr{a}_0 (k_z) + \frac{\mathscr{d}(k_z) M^\alpha}{\mathcal{C}(k_z) + M^\alpha}\,,
\label{eq:kernels_analyt_form2_C1kz}
\end{eqnarray}
\end{small}
where, $\mathscr{d}, \mathcal{C}$ above are different from that of the fission kernel. 

Note that if the tension dependence of the Markov transition rates are of the Hill or Michaelis-Menten type with a large dynamic range $C_{\gamma}$, the membrane tension is a slowly varying function of the cisternal size. Hence, the Hill-exponent and saturation constant for the summations Eqs.\,\eqref{eq:kernels_analyt_form1},\eqref{eq:kernels_analyt_form2} are largely determined by the site availability factor, whereas the Markov cycle contributes to the rate in the numerator (which is of the order of the smallest forward rate in the Markov cycle). Furthermore, for more complicated polynomial or Hill-kind dependencies of microscopic rates on the cisternal size (see discussion on the computation of current in  a Markov cycle above), fission/fusion kernel can be approximated by a product of two Hill functions with different exponents, summed over the site and enzyme probabilities (i.e., over $m,n$). This sum can be further approximated by a Hill function. In Fig.\ref{fig:nodes_sites_sum}(d), we plot this sum  over enzyme availability for various model functions for $J_{ss}$.  

The above discussion confirms that the microscopic mean fusion and fission kernels can be approximated by the following macroscopic fission and fusion rates,

\begin{small}
\begin{eqnarray}
{\cal K}^{fus} = v_m^{fus}  \left( a_0^{fus} + \frac{\kappa_1 \,(M-M_0)^\alpha }{C_1 +  \, (M-M_0)^\alpha } \right)  \hspace{1cm}
{\cal K}^{fis} = v_m^{fis} \left(\frac{\kappa_2 \,(M-M_0)^\beta }{C_2 +  \, (M-M_0)^\beta } \right)  \,,
\label{eq:Rfis1M}
\end{eqnarray}
\end{small}
which with $M_0=0$, can be reparameterized in the form of Eqs.\,\eqref{eq:Rfus1},\eqref{eq:Rfis1} (Eqs.\,\eqref{eq:Rfisfus} in the main text). 

\subsection{Variations in the form of kernels do not change the topology of fixed points} 
The functional form of microscopic mean kernels would change if the following changes: (a) The probability distribution of the enzyme availability (assumed to be a Poisson process here) changes. This can occur if the Poisson rate changes, or if the underlying dynamics are inherently non-Poissonian. For instance, a finite pool ensemble \cite{banerjee, mukherjee} would give non-Poissonian availability of active fusogens/fisogens. (b) The availability of the ``docking'' and/or ``exit'' site has a different dependence (say, stronger or weaker) on the cisterna size. (c) Transition rates in the $4$-state Markov cycle have different dependencies on size and tension. We will explore the third point now.

Suppose that the size dependence in the fusion kernel is not dominated by the site availability but the tension dependence of the fusion-cycle transition rate. If these transition rates are non-monotonic in cisternal size, microscopic mean fusion and fission kernels may not remain monotonically increasing functions of cisternal size. For instance, when the budding and the flattening rates are given by Helfrich free energy \cite{nelson}.
\begin{small}
\begin{eqnarray}
\mathcal{H}(\beta) = \kappa^B \, (\sqrt{\beta} -1)^2 + \gamma \, \beta \,,
\label{eq:Helf_H}
\end{eqnarray}
\end{small}
where $\gamma = \gamma_0 - k_s(M-M_0)$, is the Golgi cisterna membrane tension; $\gamma_0,M_0$ are the cisterna membrane rest tension and size and  $k_s$ is the membrane stretching modulus for the Golgi cisterna, $\kappa^B$ is the bending rigidity and $\beta$ is  the budding parameter ($0 \leq \beta \leq 1$), $1$ being a fully budded domain and $0$ being a flattened domain, the line tension is assumed to be zero \cite{sens1}.   Domain budding and flattening events occur in the transitions $|3\rangle\leftrightharpoons |4\rangle$ for fusion cycle and  $|3\rangle \leftrightharpoons |2\rangle$ for fission cycle. We choose the  bending rigidities for the fusion and fission processes to be ($\kappa^B_{fus} = 1950,\, \kappa^B_{fis} = 2000, \, k_s = 1, \, \gamma_0 = 5000$, in the units of $K_BT$ \cite{sens}). With these parameter values, the budding and flattening rates -- calculated via Kramers' rate theory \cite{sens,kampen} for the Helfrich free energy \cite{nelson} -- exhibit non-monotonic behavior across the physically relevant range of cisternal sizes. We can then use these membrane tension dependent transition rates to compute the current $J_{ss}^+$ and $J_{ss}^-$ over the fusion and fission cycle, respectively. Assuming the availabilities of fusogens, fisogens, vesicles and sites (docking as well as exit) are given by Poisson processes (as before), we can sum the fusion and fission currents over these probability distributions to compute the mean fusion and fission kernels (see Eqs.\,\eqref{eq:kernels_analyt_form1},\eqref{eq:kernels_analyt_form2}). As we can see in Fig.\ref{fig:golgi_roots_helfrich}, the obtained mean fusion and fission kernels are non-monotonic. However, the topology of the fixed points remains the same for one cisterna (see Fig.\,\ref{fig:golgi_flux_123}). We will explore the consequences of such kernels in future work.

\begin{figure}[t!]
\centering
\includegraphics[width=0.95\textwidth]{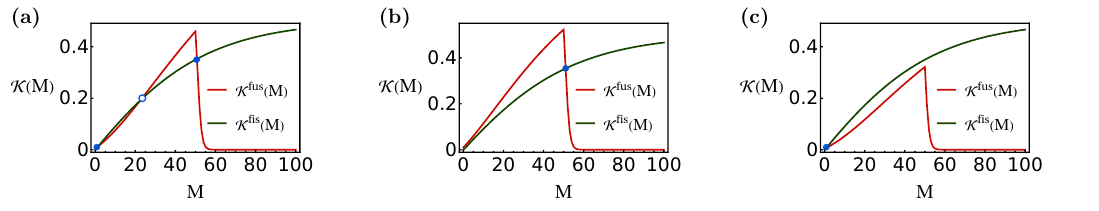}
\caption{{\bf Intersections of mean fusion and fission kernels with  Helfrich Hamiltonian \cite{nelson}} (a-c) In the model Eq.\,\eqref{eq:Helf_H}, where domain budding and flattening rates are derived from the Helfrich Hamiltonian, varying the microscopic parameters alters the intersections of the fission and fusion kernels. This changes the number of stable and unstable fixed points. In this model, size dependence in fusion kernel is not dominated by the site availability, but the tension dependence of the fusion-cycle rates and the fusion kernel is non-monotonic. However, the fixed point structure remains the same for one cisterna (see Fig.\,\ref{fig:golgi_flux_123}). Parameters values, $(k_p)_{fus}=0.05$, $(k_p)_{fis} = 0.05$, $(k_z)_{fus}=2-3$, $(k_z)_{fis} = 1$, $l_{23}^0 = 1$, $k_{34}^0 = 2-5 $, $C_\gamma = 5000 \; K_B T/\mu \text{m}^2 $, bending rigidities, $\kappa^B_{fis} = 5000,\, \kappa_{fus}^B = 1950$, membrane stretching modulus, $k_s = 1$, and membrane rest tension, $\gamma_0 = 5000$, in the units of $k_B T$, see Eqs.\,\eqref{eq:kernels_analyt_form1},\eqref{eq:kernels_analyt_form2} and Table \ref{tab:tabl1} for notational details.}
\label{fig:golgi_roots_helfrich} 
\end{figure}

\section{Stochastic master equation for two cisternae}
\label{sec:Mastereqtwocisternae}

In this section, we sketch how to extend the stochastic master equation Eq.\,\eqref{eq:Master_final_sup}, with microscopic parameters for two and more cisternae. This essentially involves specifying how material is transferred from cisterna $1$ to $2$ and vice versa. We discuss the following two schemes that precisely address this,

\begin{figure}[t!]
\centering 
\includegraphics[width=0.95\textwidth]{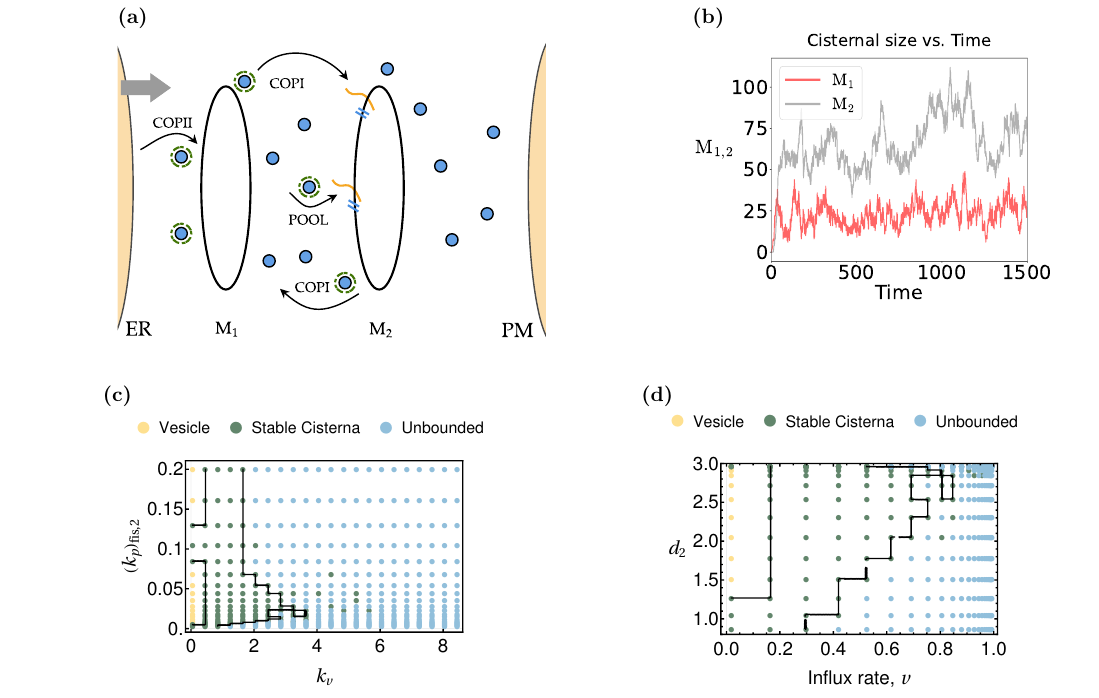}
\caption{{\bf Simulating the master equation for two cisterna.} (a) Schematic of vesicle transfer between cisternae via COPI mediated direct transfer and COPI mediated indirect transfer through the intercisternal pool of vesicles.
(b) Stochastic trajectories $2$-cisternae system, $M_1$(red),  $M_2$(grey), obtained by extending the stochastic master equation Eq.\,\eqref{eq:Master_final_sup} assuming the intercisternal vesicle transfer occurs via the \textbf{Direct} scheme Eq.\,\eqref{eq:direct_scheme}. Parameter values, $k_\v = 2.1$, $(k_p)_{fus, (1,2)} = (0.2,0.15)$, $(k_p)_{fis, (1,2)} = (0.05,0.04)$, $(k_z)_{fus,1} = 1$, $(k_z)_{fis,(1,2)} = 1$, $T_{12} = 0.5$, $T_{21} = 0$, $(k_{34}^0)_{fus,(1,2)} = 8$, $(l_{32}^0)_{fis,(1,2)} = 6$. (c) Phase diagram for $2$-cisternae system in the microscopic parameter space, ($k_\v,(k_p)_{fis,2}$) plane generated using the \textbf{Direct} scheme depicting vesicle, stable cisterna and unbounded growth regimes.  Here, $k_\v$: Poisson rate of vesicle availability at cisterna $1$, $(k_p)_{fis,2}$: fission enzyme availability at Cisterna $2$. (d) The phase diagram in  the microscopic parameter space in (c) is projected to ($v,d_2$) plane, using the mapping Eq.\,\eqref{eq:mapm}, $v$: macroscopic influx rate at cisterna $1$, $d_2$:  macroscopic exit rate at cisterna $2$. This projected phase diagram is qualitatively similar to the one obtained for the $2$-cisternae dynamical system, Fig.\ref{fig:fp_lc_sol}(a) in the main text, therefore validating the dynamical system model. Here, $k_\v$: Poisson rate of vesicle availability at cisterna $1$,  $(k_z)_{fus,(1,2)}$: fusion enzyme availability at cisterna $1$ and $2$,  $(k_z)_{fis,(1,2)}$: fission enzyme availability at cisterna $1$ and $2$, $(k_p)_{fus,(1,2)}$: ``docking'' site availability at cisterna $1$ and $2$,  $(k_p)_{fis,(1,2)}$: ``exit'' site availability at cisterna $1$ and $2$, $v$: macroscopic influx rate at cisterna $1$, $d_2$:  macroscopic exit rate at cisterna $2$, $(k_{34}^0)_{fus,(1,2)}$: flattening rates in the fusion cycles at cisterna $1$ and $2$, $(l_{32}^0)_{fis,(1,2)}$: flattening rates in the fission cycles at cisterna $1$ and $2$. }
\label{fig:2cist_Master_sim}
\end{figure}

\begin{enumerate}[label=(\alph*)]
\item \textbf{Direct} scheme :  vesicles are directly transferred from cisterna $1$ to $2$ via COPI proteins, and this material transfer from cisterna $1$ to cisterna $2$ happens within a given intercisternal time $T_{12}$, at the end of which cisternal sizes are updated. In the Gillespie code, this can be implemented as follows:  

Suppose the sizes of the cisterna $1$ and $2$ are $M_1$ and $M_2$ respectively. Given the values of all the microscopic rates, we need to compute the probability of, say, $n$ number of vesicles fissing out at cisterna $1$, and then fusing to cisterna $2$. This involves finding $n$ fission enzymes, then at least $n$ exit sites at cisterna $1$, and then finding at least $n$ docking sites at cisterna $2$.  This multiplied by the intercisternal time $T_{12}$ and the probability of finding $n$ COPI proteins, gives the propensity of the event that $n$ number of vesicles is transferred from cisterna $1$ to $2$. Hence, in the Gillespie code propensity of n-vesicles transfer is given by,
\bea
\mathcal{P}_{12}(n)=  T_{12} \, Pois(n,k_z) \,  \left( \sum_{k=n}^{N_E} \, Pois(k,k_p \, M_1) \,J_{ss}^{-}(M_1,k)  \sum_{m=k}^{N_E} \, Pois(m,k_p'\, M_2) \, J_{ss}^+ (M_2,m)\right)\,,
\label{eq:direct_scheme}
\eea
where, $Pois(i,q)$ above is the Poisson probability of $i$ events for a given rate $q$, ($k_p \, M_1,k_p' \,M_2$) are Poisson rates of finding exit and docking sites respectively, $k_z$ is the Poisson rate of finding fusion enzyme and $T_{12}$ is the intercisternal time. Here, in the numerical scheme, we truncate the sum a large $N_E$, beyond which the Poisson probabilities are zero. We can use a similar procedure to compute transfer of vesicles from cisterna $2\rightarrow 1$.

\item \textbf{Indirect} scheme : vesicles are transferred from cisterna $1$ to the intercisternal space during a given intercisternal time $T_{12}$. At the end of this time period, this pool of intercisternal vesicles are available for fusion, from where vesicles are taken by fusion enzymes to cisterna $2$. In the Gillespie code, this can be implemented as follows:  

Suppose the size of cisterna $1$ is $M_1$. Given the values of all the microscopic rates, compute the probability of finding at least $m$ exit sites at cisterna $1$. One can compute the mean number of fission events at cisterna $1$ by summing over the probability of finding $m$ COPI proteins.
This multiplied by the intercisternal time $T_{12}$ gives the mean number of vesicles available in the intercisternal pool. These vesicles are then carried over to cisterna $2$ via COPI enzymes for fusion, just like single cisterna case. In the Gillespie code, the propensity of n-vesicles transfer can be computed as,
\bea
\label{eq:indirect_scheme1}
k_\v  &=& T_{12} \,  \sum_{m=1}^{N_E} Pois(m,k_z) \, \left(\sum_{k=m}^{N_E} \,Pois(k,k_p \, M_1) \, J_{ss}^{-}(M_1,k) \right)\\
\mathcal{P}_{12}(n) &=& Pois(n,k_z') \left( \sum_{k=n}^{N_E} \, Pois(k,k_\v) \, \sum_{m=k}^{N_E} \, Pois(m,k_p' M_2) \, J_{ss}^+ (M_2,m)\right)\,,
\label{eq:indirect_scheme2}
\eea
where, as above, $Pois(i,q)$ above is the Poisson probability of $i$ events for a given rate $q$, ($k_p \, M_1,k_p' M_2$) are Poisson rates of finding exit and docking sites respectively, ($k_z,k_z'$) are the Poisson rates of finding fission and fusion enzymes respectively and $T_{12}$ is the intercisternal time. We can use a similar procedure to compute transfer of vesicles from cisterna $2\rightarrow 1$.



\end{enumerate}

The above two schemes can be extended to multiple cisternae via adding intercisternal times $T_{ij}$ between cisterna $i$ and $j$. This along with $n$-enzyme master equation Eq.\,\eqref{eq:Master_final_sup} can be used to compute the stochastic trajectories and the phase diagram for two (or multiple) cisternae system, Fig.\,\ref{fig:2cist_Master_sim}. Since the stability of the solutions is difficult to assert for the stochastic case, we use a simple criteria: if $M<M^{high}_{th} = 300$, it is a stable phase (vesicle if $M\leq1$), and unbounded phase otherwise. 

\section{Detailed dynamical system analysis of a single cisterna\label{sec:si4}}
With various methods outlined in \ref{sec:si2}, the deterministic mean dynamics for the cisternal size $M$ can be written as
\begin{small}
\begin{eqnarray}
\frac{dM}{dt} &=&  {\cal K}^{fus}(M)- {\cal K}^{fis}(M)  =  v^{fus}_m \, \left(a_0^{fus} +  \frac{\kappa_1 \, M^\alpha}{C_1 + M^\alpha} \right)- v^{fis}_m \left(\frac{\kappa_2 \, M^\beta}{C_2 + M^\beta}\right)\,,
\label{eq:veq_H}
\end{eqnarray}
\end{small}
 where we have separated the size (mass) terms $v^{fus}_m,v^{fis}_m$ and the rate constants (shown in parentheses). We scale the above equation in terms of $\kappa_1,\, \kappa_2$ to rewrite the above equation in terms of influx rate $v = v^{fus}_m \, \kappa_1$, exit rate $d = v^{fis}_m \, \kappa_2$ and a nucleation constant, $a_0 = a_0^{fus}/\kappa_1$,
\begin{small}
\begin{equation}
\frac{dM}{dt} =   v \, \left(a_0 + \frac{M^\alpha}{C_1 + M^\alpha} \right)- \left(\frac{d \, M^\beta}{C_2 + M^\beta}\right)\,.
\label{eq:sveq_H}
\end{equation}
\end{small}
The parameters $C_1, C_2$ represent the saturation constants for the fusion and fission kernels, respectively, while the Hill-exponents $\alpha,\beta$, define the {\it cooperativity} of the fusion and fission processes. The influx rate, $v$  and the nucleation constant, $a_0 > 0$ ensure that there is a nucleation seed for the cisterna. In the main text, we have converted this equation into dimensionless form by setting $d=1$. Now, we study the root structure of this dynamical system. 
\subsection{Number of positive roots}
\label{sec:single_cist_si} 
Given the Hill type kernels, that we have assumed above, one can have at most three fixed points, regardless of the value of Hill exponents. To see this, we notice that the fixed points of the above equation are  given by
\begin{small}
\begin{equation}
 C_1 C_2 a_0 v  + C_1 ( a_0 v - d)  M^{\beta} + C_2 (v + a_0 v ) M^{\alpha} + ( v + a_0 v  - d)M^{\alpha+\beta} = 0 \,.
\label{eq:cubic_alp}
\end{equation}
\end{small}
The above polynomial can have at most three sign changes in its coefficients, and hence at most three positive roots, by Descartes’ rule of signs \cite{burnside,katz} and exactly that many positive roots if all roots are real. Various scenarios for $\alpha =2,\, \beta =1$ are detailed in Table \ref{Tab:Tcr}. Since the number of roots does not depend on the values of $\alpha,\beta$; we may fix $\alpha=2,\,\beta=1$ from here onwards without loss of generality, resulting in the dynamical equation,
\begin{small}
\begin{equation}
\frac{dM}{dt} =   v \, \left(a_0 + \frac{M^2}{C_1 + M^2} \right)- \left(\frac{d \, M}{C_2 + M}\right)\,.
\label{eq:sveq_H1}
\end{equation}
\end{small}

\begin{table}[t!]
\centering
\renewcommand{\arraystretch}{1.3} 
\small
\begin{tabular}{|c|c|c|c|c|c|c|c|c|}
\hline
\multicolumn{9}{|c|}{\vspace*{0.15cm} \textbf{Number of Positive roots of cubic polynomial Eq.\eqref{eq:cubic_alp}} \vspace*{-0.1cm}} \\
\hline
\parbox{1.6cm}{\centering \vspace*{0.1cm} $\pm()x^3$ \\[-0.1cm] $a_0 v + v-d$ \vspace*{0.1cm}} & 
\parbox{1.6cm}{\centering \vspace*{0.1cm} $+()x^2$ \\[-0.1cm] $a_0 v +v$ \vspace*{0.1cm}} & 
\parbox{1.6cm}{\centering \vspace*{0.1cm} $\pm()x$ \\[-0.1cm] $a_0 v -d$ \vspace*{0.1cm}} & 
\parbox{1.2cm}{\centering \vspace*{0.1cm} $+()x^0$ \\[-0.1cm] $a_0 v$ \vspace*{0.1cm}} & 
\parbox{0.8cm}{\centering \vspace*{0.1cm} $\Delta$ \vspace*{0.1cm}} & 
\parbox{1.5cm}{\centering \vspace*{0.1cm} N(real \\[-0.1cm] roots) \vspace*{0.1cm}} & 
\parbox{1.8cm}{\centering \vspace*{0.1cm} N(possible \\[-0.1cm] +ve roots) \vspace*{0.1cm}} & 
\parbox{1.8cm}{\centering \vspace*{0.1cm} N(possible \\[-0.1cm] -ve roots) \vspace*{0.1cm}} & 
\parbox{1.4cm}{\centering \vspace*{0.1cm} N(+ve \\[-0.1cm] roots) \vspace*{0.1cm}} \\
\hline
$+$ & $+$ & $+$ & $+$ & $+$ & 3 & 0 & 3/1 & 0 \\
$+$ & $+$ & $-$ & $+$ & $+$ & 3 & 2/0 & 1 & 2 \\
$-$ & $+$ & $-$ & $+$ & $+$ & 3 & 3/1 & 0 & 3 \\
$-$ & $+$ & $+$ & $+$ & $+$ & NF & NF & NF & NF \\
$+$ & $+$ & $+$ & $+$ & $-$ & 1 & 0 & 3/1 & 0 \\
$+$ & $+$ & $-$ & $+$ & $-$ & 1 & 2/0 & 1 & 0 \\
$-$ & $+$ & $-$ & $+$ & $-$ & 1 & 3/1 & 0 & 1 \\
$-$ & $+$ & $+$ & $+$ & $-$ & NF & NF & NF & NF \\
\hline
\multicolumn{9}{|l|}{\vspace*{0.1cm} *NF = non-feasible \vspace*{-0.1cm}} \\
\hline
\end{tabular}
\caption{{\bf Root structure of cubic polynomial Eq.\,\eqref{eq:cubic_alp}}: According to Descartes' rule of signs \cite{burnside}, the number of positive roots of a polynomial equals the number of sign changes in its coefficients, or is less by an even integer. The cubic discriminant gives the maximum number of real roots for a cubic polynomial ($\Delta >0$ for three real roots and $\Delta <0$ for one). Combined with the fact that only cubic and linear term of polynomial Eq.\,\eqref{eq:cubic_alp} can change sign, we can enumerate number of roots as we vary the parameters (Fig.\,\ref{fig:golgi_flux_123}(a-e)). This can also be deduced from say, Cardano's formula \cite{katz} for cubic polynomials, but the method used here works for higher order polynomials as well.}
\label{Tab:Tcr}
\end{table}
However, it should be noted that the nature of roots depends on the values of $\alpha$ and $\beta$. Eq.\,\eqref{eq:cubic_alp} with $\alpha =2, \beta=1$ gives
   \begin{equation}
 C_1 C_2 a_0 v  + C_1 ( a_0 v - d)\,  M + C_2 (v + a_0 v ) \, M^2 + ( v + a_0 v  - d)\,M^3 = 0 \,.
 \label{eq:eq1}
\end{equation}
Now, interchanging  $\alpha$ and $\beta$ ( $\alpha=1$, $\beta=2$) in the above equation changes the order of the coefficients
   \begin{equation}
 C_1 C_2 a_0 v  + C_2 (v + a_0 v )\, M  + C_1\, ( a_0 v - d) \, M^2 + ( v + a_0 v  - d) \, M^3 = 0 \,.
  \label{eq:eq2}
\end{equation}
Given that the parameters are positive valued, in Eq.\,\eqref{eq:eq1}, the order of sign changes for the coefficient of the polynomial is $(+,\pm,+,\pm)$ allowing maximum three sign changes and hence maximum three positive roots. However, in Eq.\,\eqref{eq:eq2}, the order of sign changes for the coefficient of the polynomial Eq.\,\eqref{eq:eq2} is $(+,+,\pm,\pm)$ allowing for a maximum of two sign changes and hence a maximum of two positive roots. So, for $a_0>0,\, v>0$, Eq.\,\eqref{eq:eq2} has at most two positive roots and since Eq.\,\eqref{eq:sveq_H} corresponds to a  positive system (which is ensured by the flows of the dynamical system), the larger root for the case with $\alpha<\beta$ is an unstable one. Therefore, for stable biogenesis for the generic values of the parameters, we assume $\alpha>\beta$.

\subsection{Phase diagram for single cisterna}
\begin{figure}[t!]
\centering
\includegraphics[width=\textwidth]{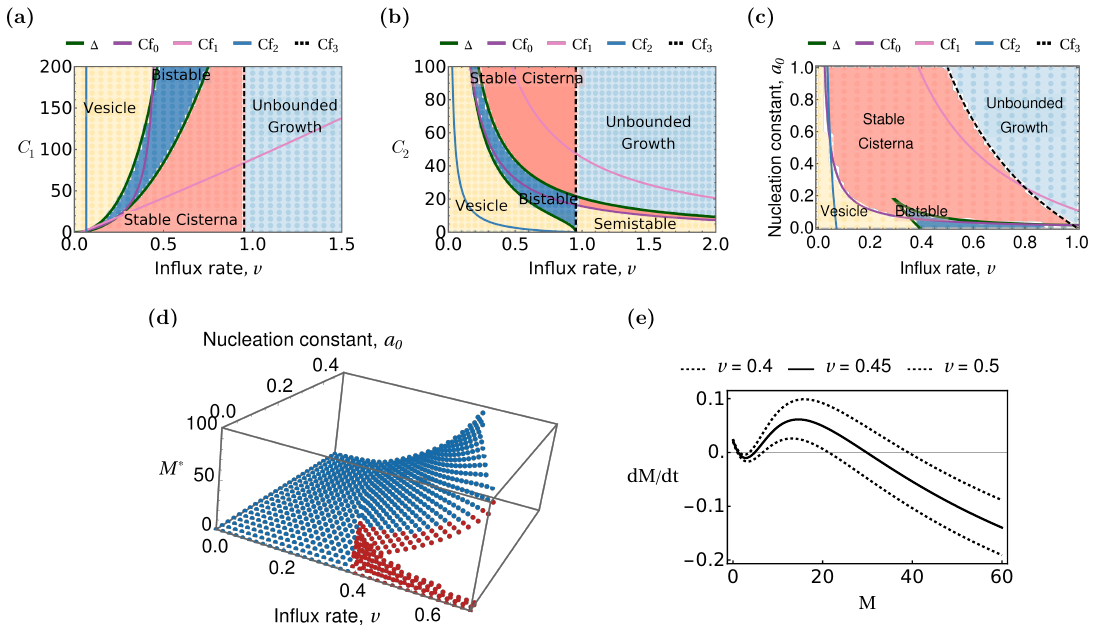}
\caption{\textbf{Phase diagram and boundaries for single cisterna.} (a,b,c) Solutions of  a single cisterna system Eq.\,\eqref{eq:sveq_H1} can be classified based on the stability of the fixed points -- stable, semistable, bistable and unbounded. A bistable system has two stable fixed point solutions (see Fig.\,\ref{fig:golgi_flux_123}(d)). Stable phases can be further classified based on the value of cisterna size, into  vesicle phase (M $< 1$, or a given threshold $M_0$ and stable fixed point, yellow region) and cisterna phase (M $\ge 1$ and stable fixed point). A semistable phase, that correspond to a scenario when fusion and fission kernels intersect twice (Fig.\,\ref{fig:golgi_flux_123}(c)), and the system is stable or unstable depending on the initial condition. At higher influx rate $v$, there are no fixed points, and the system grows unboundedly (light blue region). The analytic phase boundaries can be computed from the cubic discriminant $\Delta$ and the coefficients of cubic polynomial Eq.\,\eqref{eq:cubic}. Default parameter values, $C_1= 100$, $C_2=40$,  $d= 1$, $a_0=0.05$ where $v$ is the influx rate and $C_1$ is the Hill saturation constant for fusion, $C_2$ is the Hill saturation constant for fission, and $a_0$ is the nucleation constant. (a) Phase diagram in $(v,C_1)$ plane. (b) Phase diagram in $(v,C_2)$ plane.
(c) Phase diagram in $(v,a_0)$ plane. For a given influx rate $v$, a minimum value of the nucleation constant $a_0$ is needed for formation of a stable cisterna.
(d) Fixed point structure of Eq.\,\eqref{eq:cubic} can also be represented as a catastrophe curve by plotting the surface $\Sigma_f$ from Eq.\,\eqref{eq:cats_sigma}. Number of fixed points change as one moves in ($v,a_0)$) plane. Red plot points represent parameter regime with multistability whereas blue plot points represent parameter regime with single stable fixed point. Parameter values,  $C_1= 100$, $C_2=40$, $d= 1$. (e) Dynamical system Eq.\,\eqref{eq:cubic} ensures neighborhood stability, i.e., the fixed point point structure of the system remains invariant under small changes of parameters. Parameter values, $C_1= 100$, $C_2=40$,  $d= 1$, $a_0=0.05$. 
}
\label{fig:p_roots1}
\end{figure}
The roots of the above equation Eq.\,\eqref{eq:sveq_H1} are given by the polynomial equation,
\begin{small}
\begin{equation}
 C_1 C_2 a_0 v  + C_1 ( a_0 v - d)  M + C_2 (v + a_0 v ) M^2 + ( v + a_0 v  - d)M^{3} = 0 \,.
\label{eq:cubic}
\end{equation}
\end{small}
We use this polynomial equation to study the nature of intersections of the fission and fusion kernels, i.e., existence and stability of the fixed points for $M$,  by varying the effective parameters $v$ (influx rate),  $C1, \,C_2$ (levels of fisogens and fisogens), $a_0$ (nucleation constant) and $d$ (exit rate). This allows us to construct a phase diagram. As shown in Fig.\,\ref{fig:p_roots1}(a-c), there are stable phases corresponding to one or two stable fixed points, semi-stable phase corresponding to one stable and one unstable fixed point, with asymptotic behaviour depending on the initial condition, and unbounded growth corresponding to no stable fixed points. Phases can be further classified based on the value of the steady state (fixed point) -- vesicle phase ($M  \leq M_{0}$, a threshold size and stable fixed point) and stable cisterna phase ($M  \ge M_{0}$ and stable fixed point), see Fig.\,\ref{fig:p_roots1}(a-c). 


 Phase boundaries can be constructed analytically by applying the transformation $M \rightarrow M+M_0$, to the cubic polynomial Eq.\,\eqref{eq:cubic} -- and then analysing its coefficients ($Cf_i$'s) and the cubic discriminant ($\Delta$) of the transformed polynomial (Fig.\,\ref{fig:p_roots1}(a-c)). To segregate the vesicle phase from the cisterna phase, we have taken $M_0 =1$, the unit vesicle size, which gives the transformed polynomial,
\begin{small}
\begin{eqnarray}
&& \underbrace{(-d_1 - C_1 d_1 + v + a_0 v + a_0 C_1 v + C_2 v + a_0 C_2 v + a_0 C_1 C_2 v)}_{Cf_0} \, + \, M \underbrace{(-3 d_1 - C_1 d_1 + 3 v + 3 a_0 v + a_0 C_1 v + 2 C_2 v + 2 a_0 C_2 v)}_{Cf_1} \nonumber \\ && +  \, M^2 \underbrace{(-3 d_1 + 3 v + 3 a_0 v + C_2 v + a_0 C_2 v)}_{Cf_2} \, + \, M^3 \underbrace{(-d_1 + v + a_0 v)}_{Cf_3} = 0\,.
\label{eq:transform_cf}
\end{eqnarray}
\end{small}
If the coefficient $Cf_0 <0$, the above polynomial cannot have a root for $M\leq 1$ (as the original polynomial Eq.\,\eqref{eq:cubic} will only cross the y-axis for $M>1$). This is also dictated by the contour line, $Cf_0 = 0$ (thick purple line) in  Fig.\,\ref{fig:p_roots1}(a-c). Furthermore, Fig.\,\ref{fig:p_roots1}(c) suggests that a minimum value of the nucleation constant $a_0$ is needed to move to the stable cisterna phase from the vesicle phase.

A closer inspection of Eq.\,\eqref{eq:cubic} reveals that the system has cubic (cusp) singularity. 
In fact, equations such as Eq.\,\eqref{eq:cubic} correspond to a stable family of functions (unfolding) under small perturbations of the parameters, Fig.\,\ref{fig:p_roots1}(e) \cite{thom}.
Above equation Eq.\,\eqref{eq:cubic} loses (or gains) a pair of roots through \textit{saddle-node} (SN) bifurcation \cite{strogatz} leading to change in the number of stable fixed points (one to two). A better way to represent such fixed-point structures is to look at catastrophe curves \cite{thom}, which is the study of the appearance of singularities as one moves continuously in parameter space (Fig.\,\ref{fig:p_roots1}(d)). The cubic polynomial Eq.\,\eqref{eq:cubic} above, can be mapped to canonical form of a cusp catastrophe curve (universal unfolding of $x^4$-singularity with codimesion $2$, i.e., the control parameters, see \cite{thom,golubitzky} for details). More elaborately, one needs to compute the following surfaces \cite{germs},
\begin{small}
\begin{eqnarray}
\text{Unfolding} &:&   f(M)   \in  \mathbb{F}_k(M,\pmb{\mu}), \text{polynomial of degree k} \nonumber \\
\Sigma_f &:& \{(M,\pmb{\mu}) |\;  \frac{\partial f(M)}{\partial  M} =0\}  \nonumber \\
\Delta_f &:& \{(M,\pmb{\mu}) \in \Sigma_f |\; \frac{\partial ^2f(M)}{\partial M^2} =0\}  \nonumber \\
D_f &:& \{(\pmb{\mu}) |\; \exists \; \;  M \; \text{s.t.} \;  \; \frac{\partial f(M)}{\partial M} =0 \; \text{and} \;  \frac{\partial  ^2f(M)}{\partial M^2} =0\}\,,
\label{eq:cats_sigma}
\end{eqnarray}
\end{small}
where $\mathbb{F}_k(M, \pmb{\mu})$ is a polynomial of degree-k with n-parameters $\pmb{\mu}$, which in case of cusp catastrophe is $4$ and $2$ respectively, $\Sigma_f$ is defined by the gradient and $\Delta_f$ by the Hessian determinant. $\Sigma_f$ for the single cisterna dynamical system is plotted in Fig.\ref{fig:p_roots1}(d).

\section{Dynamics of the two cisternae system\label{sec:si5}}
\subsection{Constructing dynamical system for two cisternae \label{sec:2cist_T}}
Here we derive the dynamical system for two cisternae. For this, we derive motivation from \ref{sec:Mastereqtwocisternae}, where we have discussed master equation for two cisternae, and introduced intercisternal time over which intercisternal vesicle flux is integrated via \textbf{Direct} or \textbf{Indirect} scheme. 

The intercisternal flux depends on the availability of cognate pairs of v-SNAREs and t-SNAREs at the donor and the acceptor cisterna respectively  \cite{Munro2004, Bonifacino2003, Robinson2004, Traub2009, Yu2010, Jahn2006, Wickner2008, DSouzaSchorey2006, Stenmark2009, itocis, mani}. Consider the fission flux $J^{fis}_{i,i+1}$ at time $t$, from the donor cisterna $i$ to target cisterna $i+1$. The appropriate v-SNAREs need to be transported along with the cargo vesicles  destined to the target cisterna, leading to a depletion of the SNARE-pool and a reduced probability for subsequent fission events at $i$, unless replenished by fusion events at $i$ from $i+1$ \cite{willett}.  This implies that  the fission flux from the cisterna $i$ to $i+1$ at time $t$ will depend on the local fisogen availability and the size $M_i$ of the donor cisterna $i$  as well as on the local fusgogen availability and size $M_{i+1}$ of the acceptor cisterna $i+1$ at earlier times within a time window, functionally represented as the flux $\mathcal{J}_{i,i+1} (\phi_i^x,M_i,t;\phi_{i+1}^y,M_{i+1},t-t_w)$. Here, $\phi$ variables keep track of SNARE identities $x,y$, and hence give a cisternal SNARE-matching condition while $t_w$ is intercisternal time window \cite{storrie,jackson,dmitrieff}, which is drawn from a given probability distribution $P(t_w)$.
Furthermore, the fusogen and fisogen availability may depend on the local cisterna size as well. Using same arguments, fission flux from cisterna $i+1$ to $i$ at time $t$ will depend on the local fisogen availability and the size of the donor cisterna $i+1$  as well as on the local fusgogen  availability and size of the acceptor cisterna $i$ at earlier times within a time window, functionally represented as the flux $\mathcal{J}_{i+1,i} (\phi_{i+1}^y,M_{i+1},t;\phi_i^x,M_i,t-t_w)$. Summed over the SNARE identities and over the intercisternal times $t_w$, gives the total intercisternal flux between two cisterna, i.e.,
\bea
\mathcal{J}^{fis}_{i,i+1} (M_i,M_{i+1},t) &=& \sum_{x,y} \int dt_w \, P(t_w) \, \mathcal{J}_{i,i+1} (\phi_i^x,M_i,t;\phi_{i+1}^y,M_{i+1},t-t_w) \\
\mathcal{J}^{fis}_{i+1,i} (M_i,M_{i+1},t) &=& \sum_{x,y} \int dt_w \, P(t_w) \, \mathcal{J}_{i+1,i} (\phi_{i+1}^y,M_{i+1},t;\phi_i^x,M_i,t-t_w) \,.
\eea
Similarly, fusion flux from cisterna $i$ to $i+1$ depends on the fisogen availability at cisterna $i$  at earlier times and so on. The above integrals can be simplified by substituting mean fusion and fission kernels, $\mathcal{J}_{i,i+1}^{fis} (\phi_i^x,M_i,t;\phi_{i+1}^y,M_{i+1},t-t_w) = \phi_i^x(t) \,{\cal K}^{\sfi} ( M_i(t)) \, \phi_{i+1}^y(t-t_w) \, {\cal K}^{\sfu} (M_{i+1}(t-t_w))$. We further assume $P(t_w)$ to be uniform over an interval $(0,T)$ and a SNARE matching condition $\sum_{x,y} \phi_{i}^x(t) \phi_{i+1}^y(t-t_w)  = C_{i,i+1}(t,t-t_w) \, \delta^{x,y}$, $C_{i,i+1}$ is a constant that encodes the coupling strength for different cargo identities and can depend on the intercisternal time window $t_w$. One can also have more intricate matching conditions, depending on whether one or both homotypic and heterotypic fusions are allowed \cite{Vagne2018}. These assumptions lead to mean field integro-differential equations for the cisternal sizes $M_i$.
\bea
\mathcal{J}_{i,i+1}^{fis} (M_i,M_{i+1},t)&=&\int_{t_w=0}^T dt_w  \, C_{i,i+1}(t,t-t_w) \,{\cal K}^{\sfi} ( M_i(t)) \, \,{\cal K}^{\sfu} (M_{i+1}(t-t_w)) \\
\mathcal{J}_{i+1,i}^{fis}(M_i,M_{i+1},t) &=&  \int_{t_w=0}^T dt_w  \, C_{i,i+1}(t,t-t_w) \,{\cal K}^{\sfi} ( M_i(t-t_w)) \, \, {\cal K}^{\sfu} (M_{i+1}(t))\,,
\eea
 With the further assumption that ``size'' variables vary slowly over the intercisternal time window $t_w$, these integro-differential equations transform into ordinary differential equations.
\bea
J_{i,i+1}^{fis} (M_i,M_{i+1},t) &=& C \, T \, {\cal K}^{\sfi} ( M_i(t)) \, {\cal K}^{\sfu} (M_{i+1}(t)) \\
J_{i+1,i}^{fis} (M_i,M_{i+1},t) &=&  C \, T  \, {\cal K}^{\sfi} ( M_i(t)) \, {\cal K}^{\sfu} (M_{i+1}(t))\,,
\eea
where $C_{i,{i+1}} =  C$ is assumed to be a constant over cisternae identities and the time window. Following the same steps, similar expressions can be derived for $J_{i,i+1}^{fus},J_{i+1,i}^{fus}$. Using the functional forms of the fusion and fission kernels derived for the single cisterna, Eq.\,\eqref{eq:veq_H}, the balance of fluxes at the cisterna $1$ and $2$, with the above arguments gives
\begin{small}
\begin{eqnarray}
 \label{eq:retrograde_si1}
\dot{M}_1 
&=& v^{fus}_m \left(a_1^{fus} + \frac{ \kappa_{11} M_1(t)^2}{C_{11} + M_1(t)^2}\right) - \int_{t-T_1}^{t} dt' \, \left[\frac{v^{fis}_m \kappa_{12}  \, M_1 (t)}{C_{12}  + M_1(t)}\left(a_2^{fus} + \frac{  \kappa_{21} M_2(t')^2}{C_{21}+M_2(t')^2}\right) \right]\nonumber \\ &&+ \,  \int_{t-T_2}^{t} dt' \, \left[\frac{v'^{fis}_m \kappa_{22}  \, M_2(t')}{C_{22} + M_2(t')}\left(a_1^{fus} + \frac{\kappa_{11} \, M_1(t)^2}{C_{11} + M_1(t)^2}\right) \right] - \, \frac{v^{fis}_m \kappa_{12}  \, M_1(t)}{C_{12}  + M_1(t)}   \\
\dot{M}_2
 &=&  \int_{t-T_3}^{t} dt' \, \left[\frac{v^{fis}_m \kappa_{12}  \, M_1(t')}{C_{12}  + M_1(t')}\left(a_2^{fus} + \frac{ \kappa_{21} M_2(t)^2}{C_{21}+M_2(t)^2}\right)\right]  \nonumber \\ &&- \int_{t-T_4}^{t} dt' \, \left[\frac{v'^{fis}_m\kappa_{22}  \, M_2 (t)}{C_{22(t)} + M_2}\left(a_1^{fus} + \frac{\kappa_{11} \, M_1(t')^2}{C_{11} + M_1(t')^2}\right)\right]  -\, \frac{v'^{fis}_m \kappa_{22}  \, M_2(t)}{C_{22} + M_2(t)} \,,
 \label{eq:retrograde_si2}
\end{eqnarray}
\end{small}
where there are intercisternal fluxes, and we also include leak fluxes.
$v^{fus}_m$, $v'^{fis}_m$ are characteristic cisterna size scales set by fusion and fission enzymes (see Eqs.\,\eqref{eq:kernels_analyt_form2_Ckz},\eqref{eq:kernels_analyt_form2_C1kz}) and 1/$\kappa_{ij}$'s are the time scales associated with fusion and fission kernels at the cisterna $i=1,2$ ($j=1$ for fusion and $j$=2 for fission). 
$T_1,T_2,T_3,T_4$ are intercisternal times over which mass is transferred from from one cisterna to  another (or glycosylation enzymes are delivered to the cisterna via retrograde transport \cite{storrie,jackson}, which might act as a time marker).
By scaling the integration time w.r.t. $\kappa_{ij}$, one can see that if the intercisternal times are much smaller than fission and fusion times scales, $T_1,T_2,,T_3,T_4  \ll 1/\kappa_{ij}$, the integral terms might vanish and there is no  transfer of material from cisterna $1$ to $2$. Hence, to have a bounded as well as non-trivial contribution, the inverse time scale $k_{ij}$ and the window $T_1,T_2,T_3,T_4$ must be properly balanced so that their product $\lambda_{\alpha} = k_{ij} \, T_{\alpha}$ remains finite. This is physically plausible, as $T_{\alpha}$ can be seen to encode unaccounted degrees of freedom (such as spatial variables) while $k_{ij}$ increases as these degrees of freedom are condensed (i.e., $T_{\alpha}\rightarrow0$). 
Furthermore, the condition $T_{\alpha} \ll 1/\lambda_{\alpha}$ ensures that the integration window is small relative to the characteristic evolution time of the system. Under these conditions, the memory (time delay) effects become negligible (can also be seen from taylor expansion of the integral), and the integro-differential dynamics reduce to a standard ODE.
\begin{small}
\begin{eqnarray}
\label{eq:retrograde1_si}
\dot{M}_1 &=& v \left(a_1 + \frac{ M_1^2}{C_{11} + M_1^2}\right) -\frac{ d_{1}  \, M_1}{C_{12}  + M_1}  - \frac{d_{12} \, M_1}{C_{12}  + M_1}\left(a_2 + \frac{ M_2^2}{C_{21}+M_2^2}\right) + \frac{d_{21}\,M_2}{C_{22} + M_2}\left(a_1 + \frac{ M_1^2}{C_{11} + M_1^2}\right)  \\ 
\dot{M}_2 &=&   \frac{d_{12} \,M_1}{C_{12}  + M_1}\left(a_2 + \frac{M_2^2}{C_{21}+M_2^2}\right) -\frac{ d_{21}  \, M_2}{C_{22} + M_2}\left(a_1 + \frac{ M_1^2}{C_{11} + M_1^2}\right)    - \frac{ d_{2} \, M_2}{C_{22} + M_2}  \,,
\label{eq:retrograde2_si}
\end{eqnarray}
\end{small}
where similar to the one cisterna case, we use fusion and fission cycle time scales  $\kappa_{ij}$'s to write the above equation in terms of influx rates, intercicsternal rates, exit rates, and leak rates at cisterna $1$ and $2$. For example, $v  \equiv  \, v^{fus}_m \, \kappa_{11}$ is the influx rate and the grouped variables, $d_{1} \equiv  \, v^{fis}_m \, \kappa_{12}$ is the leak rate at the cisterna $1$ and so on. The Peak rates for anterograde and retrograde intercisternal flux transfer are parameterized by $d_{12}$ and $d_{21}$ respectively along with the exit rate $d_{2}$ at cisterna $2$. As in the single cisterna case, We set $d_1=1$ to write these equations in dimensionless form. This dynamical system can be extended to a system of multiple cisterna by using the \textit{flux matrix} ${\cal K}$ with terms corresponding to fission, fusion and intercisternal transfer at cisterna $i$,
\bea
 {\cal K}^{fus} (M_i) &&= v_i\,\left(a_{i}+\frac{M_i^2}{C_{i1} + M_i^2}\right) \label{eq:fusi} \\  {\cal K}^{fus} (M_i) &&= \frac{d_{i} \, M_i}{C_{i2} + M_i} \label{eq:fisi}  \\ {\cal K}^{tr}_{a} (M_i) &&= \frac{d_{i2} \, M_i}{C_{i2} + M_i}\left(a_{i+1}+\frac{M_{i+1}^2}{C_{(i+1)1} + M_{i+1}^2}\right) \label{eq:trai}  \\ {\cal K}^{tr}_{r} (M_i) &&= \frac{d_{(i+1)2} \, M_{i+1}}{C_{(i+1)2} + M_{i+1}}\left(a_{i}+\frac{M_i^2}{C_{i1} + M_i^2}\right) \,,\label{eq:trri} 
\eea
where fusion and retrograde fluxes ${\cal K}^{tr}_{r}$ contribute positively whereas fission and anterograde fluxes ${\cal K}^{tr}_{a}$ contribute negatively and $v_i$ can be put to zero for $i>1$.

\subsection{Existence of limit cycle solutions in a flux system \label{sec:prop_reduced}} 

We prove the Proposition \ref{prop:prop} stated in the main text and discuss related results. In the processes, we derive the minimal structure needed for existence of limit cycle solutions in a flux system.
\begin{prop}\label{prop:prop1}
Consider the general flux system,
\begin{eqnarray}
\label{eq:yx_si}
 \dot{M_1} &=& J_{in}(M_1) - \mathcal{J}^{fis}_{12} (M_1,M_2) + \, \mathcal{J}^{fus}_{21} (M_1,M_2) \\
 \dot{M_2} &=&  \,  \mathcal{J}^{fus}_{12} (M_1,M_2) - \mathcal{J}^{fis}_{21} (M_1,M_2) - J_{ex}(M_2) \,,
  \label{eq:xy_si}
\end{eqnarray}
where $J_{in}$ and $J_{ex}$ are the influx and the exit flux, respectively and the rest are intercisternal fluxes, all of which are positive for $M_{1,2} \geq 0$. We consider the following two cases: 
\begin{enumerate}[label=(\roman*)]
\item 
If the intercisternal fluxes
$\mathcal{J}^{fis}_{12}(M_1,M_2) = \mathcal{J}^{fis}_{12}(M_1)$, $\mathcal{J}^{fus}_{21}(M_1,M_2) = \mathcal{J}^{fus}_{21}(M_1)$, $\mathcal{J}^{fus}_{12}(M_1,M_2) = \mathcal{J}^{fus}_{12}(M_2)$ and $\mathcal{J}^{fis}_{21}(M_1,M_2) = \mathcal{J}^{fis}_{21}(M_2)$, the above dynamical system can only have real eigenvalues. 
\item Let the intercisternal fluxes are dependent on the size of the donor cisterna alone, i.e., $\mathcal{J}^{fis}_{12}(M_1,M_2) = \mathcal{J}^{fis}_{12}(M_1)$, $\mathcal{J}^{fus}_{21}(M_1,M_2) = \mathcal{J}^{fus}_{21}(M_2)$, $\mathcal{J}^{fus}_{12}(M_1,M_2) = \mathcal{J}^{fus}_{12}(M_1)$ and $\mathcal{J}^{fis}_{21}(M_1,M_2) = \mathcal{J}^{fis}_{21}(M_2)$. If $ \mathcal{J}^{fus}_{12}(M_1),  \,\mathcal{J}^{fus}_{21}(M_2)$ are co-monotonic as functions of their arguments, the above dynamical system can only have real eigenvalues and hence cannot have closed orbit solutions. We refer to such a flux systems as a {\it reduced flux system}. 
\end{enumerate}
\end{prop}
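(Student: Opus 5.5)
The plan is to compute the Jacobian $J$ of the flux system \eqref{eq:yx_si},\eqref{eq:xy_si} at an arbitrary point (in particular at any fixed point $(M_1^*,M_2^*)$). We then use that a real $2\times 2$ matrix $\begin{pmatrix} p & q\\ r & s\end{pmatrix}$ has discriminant $(p-s)^2+4qr$. Its eigenvalues are therefore real whenever $qr\ge 0$. So in each case the only task is to show that the product of the off-diagonal entries is non-negative. Real eigenvalues at every fixed point exclude a Hopf mechanism and focus-type equilibria. For the statement about closed orbits in (ii), we will additionally use Bendixson--Dulac or index-theoretic reasoning.

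\emph{Case (i).} Here $\mathcal{J}^{fis}_{12}$ and $\mathcal{J}^{fus}_{21}$ depend only on $M_1$, so $\dot M_1$ depends only on $M_1$. This gives $\partial_{M_2}\dot M_1 = 0$, i.e. $q=0$. The Jacobian is then lower triangular, with eigenvalues $J_{in}'-\mathcal{J}^{fis\,\prime}_{12}+\mathcal{J}^{fus\,\prime}_{21}$ and $\partial_{M_2}\dot M_2$, both real. The system is in fact a skew product, and $M_1$ evolves autonomously as a one-dimensional flow. Hence $M_1(t)$ is monotone along trajectories, and no nontrivial periodic orbit can exist: on a periodic orbit $M_1$ would have to be constant, which forces $M_2$ to lie on a one-dimensional flow as well.

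\emph{Case (ii).} Here $q=\partial_{M_2}\dot M_1 = \mathcal{J}^{fus\,\prime}_{21}(M_2)$ and $r=\partial_{M_1}\dot M_2 = \mathcal{J}^{fus\,\prime}_{12}(M_1)$. Co-monotonicity means the two derivatives have the same sign (both $\ge 0$ or both $\le 0$), so $qr\ge 0$ and the discriminant is non-negative. To exclude closed orbits, note that the diagonal entries are $p=J_{in}'-\mathcal{J}^{fis\,\prime}_{12}(M_1)$ and $s=-\mathcal{J}^{fis\,\prime}_{21}(M_2)-J_{ex}'(M_2)$, i.e. $\dot M_1 = f(M_1)+g(M_2)$ and $\dot M_2 = h(M_1)+k(M_2)$. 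With $g',h'$ of the same sign, the system is cooperative, after the sign change $M_2\to -M_2$ if both are nonpositive. By the standard result that planar cooperative (monotone) systems have no nontrivial attracting periodic orbits (Hirsch), every bounded trajectory converges to equilibria. We would present this through a Dulac-type argument or directly through monotonicity of the orthant order.

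The main obstacle is the periodic-orbit exclusion in (ii). Real eigenvalues alone only rule out Hopf bifurcations and spiral points, not all limit cycles; a cycle could still surround a node. So the key step is invoking monotone-systems theory, using the fact that in the plane a cooperative flow preserves the partial order and forces eventual monotonicity of each coordinate. If that route proves cumbersome, the fallback is to state the result at the level the proposition literally claims, namely real spectrum at fixed points, and note that the Corollary follows: complex eigenvalues require $qr<0$, which needs cross-dependence of the fluxes with off-diagonal entries of opposite sign.
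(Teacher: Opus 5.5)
Your proposal is correct, and on the closed-orbit clause it proves more than the paper does.

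\textbf{Where you agree with the paper.} For (i), the paper just observes that the system splits into two independent one-dimensional equations. In fact $\dot M_2$ depends on $M_2$ alone there too, so the Jacobian is diagonal rather than merely lower triangular; this only simplifies your skew-product argument. For (ii), the paper applies Perron--Frobenius to the nonnegative matrix $\pm J+\beta I$ to get one real eigenvalue, then uses conjugate pairing. Your discriminant bound $(p-s)^2+4qr\ge 0$ is exactly the paper's alternative Remark.

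\textbf{Where you differ.} The paper passes from ``real eigenvalues at fixed points'' straight to ``hence cannot have closed orbit solutions''. As you note, this does not follow, since a planar periodic orbit need only enclose equilibria of total index $+1$, such as a node. Your monotone-systems argument fills this gap. It works because co-monotonicity is a \emph{global} sign condition on the off-diagonal Jacobian entries, not just a condition at equilibria.

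\textbf{Writing it up.} Cite the planar result: if the off-diagonal Jacobian entries are everywhere nonnegative (which your change $M_2\to -M_2$ arranges when both are nonpositive), every forward orbit is eventually componentwise monotone, so bounded orbits converge to equilibria. Do not rely on Hirsch's theorem about attracting periodic orbits, which alone would not exclude repelling cycles. A self-contained proof is short. The velocity $\dot x(t)$ solves the variational equation $\dot y=Df(x(t))\,y$, whose coefficient matrix has nonnegative off-diagonal entries. Hence the closed first and third quadrants are forward invariant for $\dot x(t)$. If $\dot x(t)$ never enters them, it stays in a single open second or fourth quadrant. Either way each coordinate of a periodic orbit is eventually monotone, hence constant.

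Drop the Dulac option. The divergence $f'(M_1)+k'(M_2)$ has no definite sign, and no Dulac function is evident.

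\textbf{What each route buys.} The paper's route is shorter and is really a statement about the linearization. Its Perron--Frobenius phrasing also points to what persists for more cisternae: a real dominant eigenvalue when the Jacobian has nonnegative off-diagonal entries. Your route actually proves the global exclusion of closed orbits claimed in (ii).
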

\begin{proof}
The proof to (i) is straightforward as it leads to two independent one dimensional system, that can only have real eigenvalues. 

Proof of (ii) is as follows: in the reduced form (ii), above equation modifies to,
\begin{eqnarray}
 \dot{M_1} &=& J_{in}(M_1) - \mathcal{J}^{fis}_{12} (M_1) + \, \mathcal{J}^{fus}_{21} (M_2) \\
 \dot{M_2} &=&  \,  \mathcal{J}^{fus}_{12} (M_1) - \mathcal{J}^{fis}_{21} (M_2) - J_{ex}(M_2) \,.
\end{eqnarray}
The Jacobian matrix around any fixed point of the above system will have the form
\begin{equation*}
J = \begin{pmatrix}
a-k & d \\
k_1 & -c
\end{pmatrix}\,,
\label{eq:jacb}
\end{equation*}
where $a = \frac{\partial J_{in}}{\partial M_1}, k = \frac{\partial \mathcal{J}^{fis}_{12}}{\partial M_1}, k_1 =  \frac{\partial \mathcal{J}^{fus}_{12}}{\partial M_1}, c = \frac{\partial (\mathcal{J}^{fis}_{21} + J_{ex})}{\partial M_2}, d = \frac{\partial \mathcal{J}^{fus}_{21}}{\partial M_2} $. Since  $\mathcal{J}^{fus}_{12}(M_1)$  and $\mathcal{J}^{fus}_{21}(M_2)$ are co-monotonic, the above system $J$ can be transformed to a non-negative matrix $M =\pm J +\beta I$, $I$ being the identity matrix and $\beta$, a large enough positive number to make M non-negative. This implies from Perron-Frobenius theorem \cite{farina2000positive}, that $J$ has at least one real eigenvalue. Since complex roots occur in pairs, for the above reduced two-dimensional flux system, all eigenvalues are real.
\end{proof}
\begin{remark}
For our purpose, Perron-Frobenius theorem \cite{farina2000positive} states that if $J$ is a non-negative irreducible (strongly connected) matrix, then it has at least one positive real eigenvalue.
\end{remark}
\begin{remark}
The above proposition \ref{prop:prop1} for the $2$-cisternae system can also be proved in a simpler way by computing the discriminant of the characteristic polynomial for the Jacobian $J$. For Jacobian Eq.\,\eqref{eq:jacb}, the discriminant is $(a+c-k)^2 + 4 \,d \, k_1$, which is always positive, since  $\mathcal{J}^{fus}_{12}(M_1)$  and $\mathcal{J}^{fus}_{21}(M_2)$ are co-monotonic (sign($d$) = sign($k_1$)). Hence, the reduced system above can only have real eigenvalues.
\end{remark}
\begin{figure}[t!]
\centering

\includegraphics[width=0.35\textwidth]{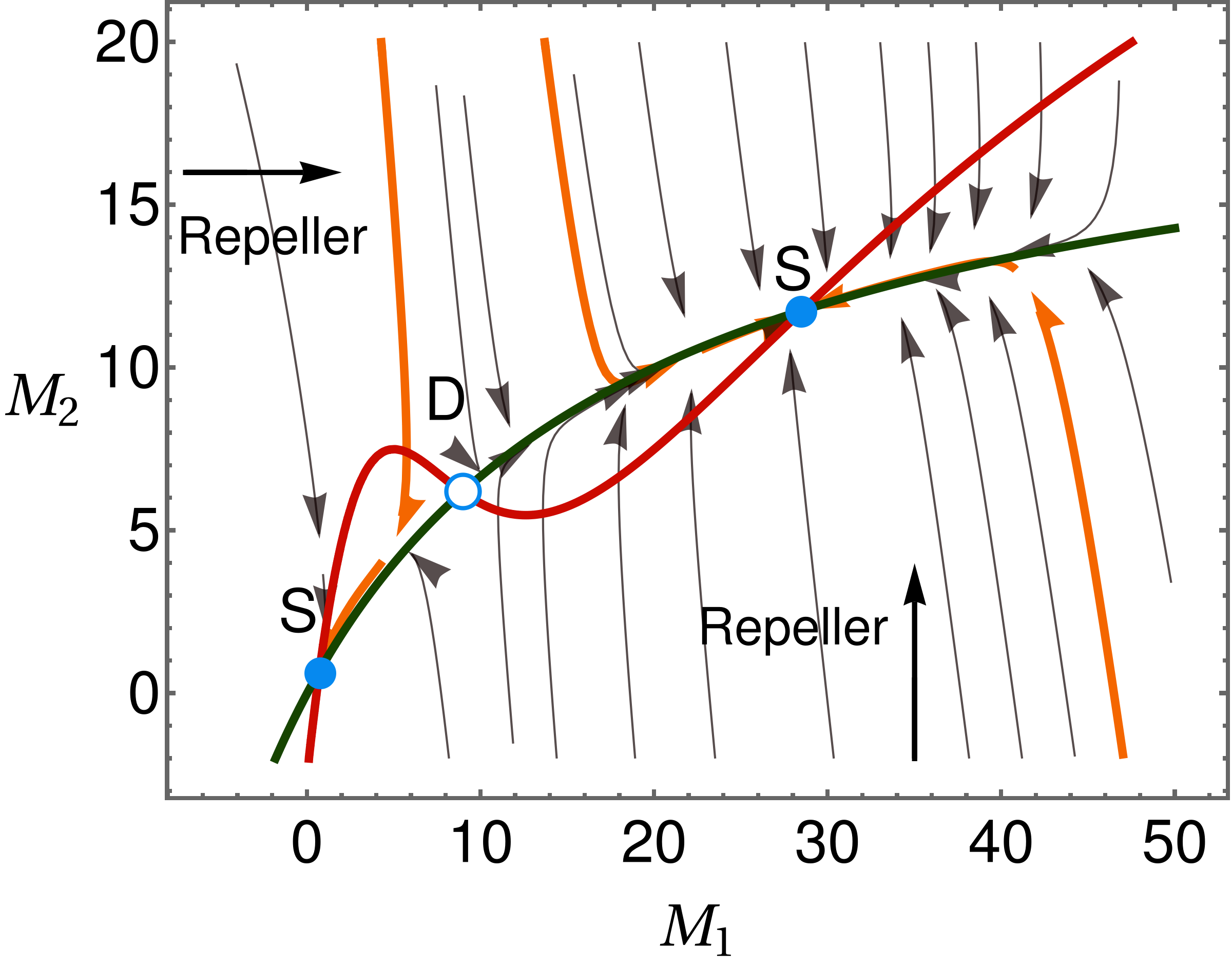}
\caption{\textbf{A positive system with only real eigenvalues.} Dynamics of system defined by Eqs.\eqref{eq:retrograde1_si},\eqref{eq:retrograde2_si} (or Eqs.\eqref{eq:retrograde1},\eqref{eq:retrograde2} in the main text) remains positive for all the time ($v>0$) for positive initial conditions, with positive $M_1$ and $M_2$ axes acting as repellers. For the reduced system (Proposition \ref{prop:prop1}), the fixed points (intersections of $M_1$ (red) and $M_2$ (green) nullclines), correspond to real eigenvalues. Fixed points are shown -- stable (S, filled circle) and saddle (D, open circle). A few trajectories with different initial conditions are also shown (thick orange line).}
\label{fig:positive_sys}
\end{figure}

\begin{remark}
Proposition \ref{prop:prop1} provides a way to reduce a general flux system to one that only admits real eigenvalues \text{(see Fig.\,\ref{fig:positive_sys})}.
\end{remark}
\begin{remark}
Adding the intercisternal fluxes that depend both on $(M_1,M_2)$, can facilitate the existence of complex roots, and hence the appearance of limit cycles. In the main text, we have segregated the anterograde fission flux as, $\mathcal{J}^{fis}_{12} (M_1,M_2) = \mathcal{J}^{leak}_{12} (M_1) + \mathcal{J}^{fus}_{12} (M_1,M_2)$ and retrograde fission flux as, $\mathcal{J}^{fis}_{21} (M_1,M_2) = \mathcal{J}^{leak}_{21} (M_1) + \mathcal{J}^{fus}_{21} (M_1,M_2)$ and we absorb the leak flux, $\mathcal{J}^{leak}_{21} (M_1)$  to the exit flux, $J_{ex}(M_2)$ at cisterna $2$.
\end{remark}
\begin{corollary}\label{cor:1main}
Let us consider the system Eqs.\,\eqref{eq:yx_si},\eqref{eq:xy_si} with anterograde flux depending on both $M_1,M_2$. This system can admit limit cycle solutions.
\end{corollary}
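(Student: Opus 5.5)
Corollary \ref{cor:1main} is an existence statement, so I will prove it by exhibiting a flux system of the form \eqref{eq:yx_si},\eqref{eq:xy_si} in which the anterograde flux depends on both $M_1$ and $M_2$, and showing that it has a limit cycle. The natural choice is the explicit realisation \eqref{eq:retrograde1_si},\eqref{eq:retrograde2_si}, where $\mathcal{J}^{fis}_{12}=\mathcal{J}^{fus}_{12}=\frac{d_{12}M_1}{C_{12}+M_1}\big(a_2+\frac{M_2^2}{C_{21}+M_2^2}\big)$. The route is a local Hopf bifurcation argument.

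The first step is to check that the mechanism excluded by Proposition \ref{prop:prop1} is no longer in force. At a fixed point $(M_1^*,M_2^*)$ I write the Jacobian with entries
\[
J_{11}=\partial_{M_1}(J_{in}-\mathcal{J}^{fis}_{12}+\mathcal{J}^{fus}_{21}),\qquad J_{12}=\partial_{M_2}(\mathcal{J}^{fus}_{21}-\mathcal{J}^{fis}_{12}),
\]
\[
J_{21}=\partial_{M_1}(\mathcal{J}^{fus}_{12}-\mathcal{J}^{fis}_{21}),\qquad J_{22}=\partial_{M_2}(\mathcal{J}^{fus}_{12}-\mathcal{J}^{fis}_{21}-J_{ex}).
\]
The discriminant of the characteristic polynomial is $(J_{11}-J_{22})^2+4J_{12}J_{21}$. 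In the reduced system, co-monotonicity forces $J_{12}J_{21}\ge0$, which is the content of the Remark after Proposition \ref{prop:prop1}. Now the term $-\partial_{M_2}\mathcal{J}^{fis}_{12}<0$ enters $J_{12}$. With a small retrograde rate $d_{21}$, I expect $J_{12}<0$ while $J_{21}>0$. Then $J_{12}J_{21}<0$ and the discriminant can be negative.

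Next I need a fixed point with eigenvalues $\pm i\omega$, which requires $\operatorname{tr}J=0$ and $\det J>0$. I will impose $M_1^*,M_2^*$ as targets and solve the two fixed-point equations, which are linear in $v$ and $d_2$, for those two parameters. That leaves $d_{12}$ and $d_{21}$ free. I then show that along a one-parameter curve, for example varying $v$, $\operatorname{tr}J$ changes sign at some $v_H$ while $\det J>0$. The positive $J_{11}$ comes from the cooperative ($\alpha=2$) influx term $v\,\partial_{M_1}\big(\frac{M_1^2}{C_{11}+M_1^2}\big)$, which can exceed the saturating fission derivatives for moderate $M_1$. The negative $J_{22}$ comes from the exit term. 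This gives crossing eigenvalues with transversality $\partial_v\operatorname{tr}J\ne0$, which I will check by differentiating the trace along the fixed-point branch.

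The Hopf theorem then gives a periodic orbit near $v_H$, and the corollary follows. A global alternative covers degenerate cases. Positivity of the system makes the positive quadrant invariant, as in Fig.\,\ref{fig:positive_sys}. I would build a bounded trapping region using saturation: for large $M_1+M_2$ the leak and exit terms outrun the bounded influx when $d_1+d_2>v(1+a_1)$. If that region contains a unique fixed point which is an unstable focus, Poincar\'e--Bendixson yields a limit cycle.

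The main obstacle is proving nondegeneracy, namely that the first Lyapunov coefficient is nonzero, or else producing the trapping region with a unique repelling equilibrium. Both are messy for the Hill-type kernels. I would first try the Poincar\'e--Bendixson route at explicit numerical parameters, such as $v=1.37$, $d_2=0.89$, $d_{12}=2.5$, $d_{21}=0.95$, where oscillations are observed, and verify $\operatorname{tr}J>0$ and $\det J>0$ there. If that check fails, I would fall back to computing the Lyapunov coefficient symbolically.
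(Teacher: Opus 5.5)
Your main line is sound, but it proves more than the paper does, and by a different route.

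\textbf{The paper's argument.} The paper's proof stops at the Jacobian. It writes the system as $\dot M_1=F(M_1)-H(M_1,M_2)$, $\dot M_2=H(M_1,M_2)-G(M_2)$. It observes that if $H$ is co-monotonic in $M_1$ and $M_2$, the off-diagonal entries $-\partial_{M_2}H$ and $\partial_{M_1}H$ have opposite signs. Hence the discriminant $(J_{11}-J_{22})^2+4J_{12}J_{21}$ can be negative, and the paper concludes the \emph{possibility} of limit cycles by appeal to the activator--inhibitor sign pattern. That is exactly your first step. It lifts the obstruction present in the reduced system, but it is only a necessary condition, and no periodic orbit is actually produced.

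\textbf{What your continuation adds.} Your continuation turns the statement into a genuine existence result. You prescribe $(M_1^*,M_2^*)$ and solve the fixed-point equations, which are indeed linear in $v$ and in $d_2$ respectively; $d_2>0$ holds iff $\mathcal{J}^{fus}_{12}>\mathcal{J}^{fis}_{21}$ at the target. You then locate $\mathrm{tr}\,J=0$, $\det J>0$ with transversal crossing and invoke the Hopf theorem. The trade-offs are that this covers one concrete realization rather than generic $H$, and it requires the nondegeneracy check $\ell_1\neq 0$. For a proof, that check must be done with exact arithmetic or rigorous error bounds, not floating point. Two simplifications are available. Taking $d_{21}=0$ makes $J_{12}<0<J_{21}$ exact at every interior fixed point and simplifies the algebra. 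And since the statement only asks for a limit cycle, a subcritical Hopf suffices; the paper reports an unstable local cycle at its $d_{21}=0$ Hopf points.

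\textbf{A faulty claim in your preferred fallback.} One auxiliary claim is wrong, and it lies in the route you say you would try first. In the explicit system the intercisternal terms cancel:
$\frac{d}{dt}(M_1+M_2)=v\bigl(a_1+\frac{M_1^2}{C_{11}+M_1^2}\bigr)-\frac{d_1M_1}{C_{12}+M_1}-\frac{d_2M_2}{C_{22}+M_2}$.
On the portion of a large level set $M_1+M_2=R$ where $M_2$ is small, only the leak is saturated. There the total mass grows whenever $d_1<v(1+a_1)$. The condition $d_1+d_2>v(1+a_1)$ controls only the portion where both cisternae are large.

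At your quoted parameters, $v(1+a_1)\ge v=1.37>d_1=1$, so these level sets are not trapping. The Poincar\'e--Bendixson fallback would therefore need a curved outer boundary, for example one exploiting that $\dot M_2>0$ when $M_1$ is large and $M_2$ small. That construction is not routine. The local Hopf route needs no global bound and is the one to pursue.
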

\begin{proof}
The above dynamical system can be effectively written in the form,
\begin{small}
\begin{eqnarray}
\dot{M_1} &=& F(M_1) - H(M_1,M_2)  \nonumber \\
\dot{M_2} &=&  H(M_1,M_2) - G(M_2) \nonumber \,.
\end{eqnarray}
\end{small}
The Jacobian matrix takes the form,
\begin{eqnarray}
J = \begin{pmatrix}
 \frac{\partial F}{\partial M_1}-\frac{\partial H}{\partial M_1} & -\frac{\partial H}{\partial M_2} \\
 \frac{\partial H}{\partial M_1}&  \frac{\partial H}{\partial M_2} -\frac{\partial G}{\partial M_2}
\end{pmatrix}\,.
\label{eq:Jacob_M1M2_retro}
\end{eqnarray}
Here, if $H(M_1,M_2)$ is co-monotonic in $M_1$ and $M_2$, the off-diagonal elements of the Jacobian have opposite signs. More explicitly, the Jacobian can have a sign signature $\begin{pmatrix} + & - \\ + & -  \end{pmatrix}$ or $\begin{pmatrix} + & + \\ - & - \end{pmatrix}$, and therefore the characteristic polynomial of the Jacobian can have a negative discriminant leading to complex roots and the possibility of limit cycles \cite{tyson}.
\end{proof}
\begin{corollary}\label{cor:original_name}
 For the above system, Eq.\,\eqref{eq:yx_si},\eqref{eq:xy_si}, let the anterograde fluxes depend upon the size of the donor cisterna alone, i.e., $\mathcal{J}^{fis}_{12}(M_1,M_2)  = \mathcal{J}^{fis}_{12}(M_1)$, $\mathcal{J}^{fus}_{12}(M_1,M_2)  = \mathcal{J}^{fus}_{12}(M_1)$, whereas the retrograde fluxes  $\mathcal{J}^{fus}_{21}(M_1,M_2),\,\mathcal{J}^{fis}_{21}(M_1,M_2)$ depend on both $M_1,M_2$. This system generally has real eigenvalues.
\end{corollary}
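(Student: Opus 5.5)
Following the proof of Proposition~\ref{prop:prop1} and of Corollary~\ref{cor:1main}, we will read the sign pattern of the Jacobian at an arbitrary fixed point and show that the discriminant of the characteristic polynomial is generically nonnegative. Concretely, we write the system as
\begin{eqnarray}
\dot{M_1} &=& F(M_1) + R(M_1,M_2) \nonumber \\
\dot{M_2} &=& -R(M_1,M_2) - G(M_2) \,, \nonumber
\end{eqnarray}
where $F = J_{in} - \mathcal{J}^{fis}_{12}$ collects the anterograde terms, which depend on $M_1$ only, and $R = \mathcal{J}^{fus}_{21}-\mathcal{J}^{fis}_{21}$ is the net retrograde flux. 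We absorb $\mathcal{J}^{fus}_{12}(M_1)$ into a term depending on $M_1$ alone in the second equation, and treat it separately below. The Jacobian entries are then $J_{11} = F' + \partial_1 R$, $J_{12} = \partial_2 R$, $J_{21} = \mathcal{J}^{fus\,\prime}_{12} - \partial_1 R$ and $J_{22} = -\partial_2 R - G'$.

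The key step is the sign structure of the off-diagonal product $J_{12}J_{21}$. In the mass-action realisation, Eqs.\,\eqref{eq:retrograde1},\eqref{eq:retrograde2}, the retrograde flux leaving cisterna $2$ equals the flux arriving at cisterna $1$. It has the product form ${\cal K}^{\sfi}(M_2)\,{\cal K}^{\sfu}(M_1)$, with both factors increasing. Hence $\partial_2 R>0$, and $J_{21} = \mathcal{J}^{fus\,\prime}_{12}(M_1) - \partial_1 R$.

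We then compute the discriminant $(J_{11}-J_{22})^2 + 4J_{12}J_{21}$. Whenever $J_{12}J_{21}\ge 0$, the eigenvalues are real, exactly as in the Remark following Proposition~\ref{prop:prop1}. This requires $\mathcal{J}^{fus\,\prime}_{12} \ge \partial_1 R$. In the relevant regime the anterograde kernel, which is unsaturated with $d_{12}\gg d_{21}$, dominates the retrograde correction. Intuitively, a donor-only anterograde flux makes the coupling one-directional to leading order. The $M_1$-dependence of the retrograde term then only perturbs the diagonal and the sub-diagonal by an amount proportional to $d_{21}$.

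The main obstacle is that this sign condition is not automatic. When $\partial_1 R$ exceeds $\mathcal{J}^{fus\,\prime}_{12}$, the off-diagonal entries acquire opposite signs and complex eigenvalues become possible. This is why the statement says ``generally'' rather than ``always''. I would handle this in two parts. First, I would show that the set where $J_{12}J_{21}<0$ and $(J_{11}-J_{22})^2 < 4|J_{12}J_{21}|$ requires $d_{21}$ to be comparable to $d_{12}$ and the fixed point to lie near the fusion saturation scale. Second, I would verify numerically, using the Routh--Hurwitz analysis of \ref{sec:RH}, that no Hopf point arises over the parameter ranges of Table~\ref{tab:tabl1}. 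I would state the conclusion as generic realness, with this explicit exceptional condition spelled out.
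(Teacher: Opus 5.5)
Your core argument is the paper's own proof. The paper likewise identifies $\mathcal{J}^{fis}_{21}=\mathcal{J}^{fus}_{21}$ as a ``without loss of generality'' step. It writes the Jacobian with entries $a-k+l$, $d$, $k_1-l$, $-c$, where $l=\partial_{M_1}\mathcal{J}^{fis}_{21}$ and the other letters are as in Proposition~\ref{prop:prop1}. From the discriminant $(a+c-k+l)^2+4d(k_1-l)$ it concludes that the eigenvalues are real unless $l$ is large. That is exactly your condition $\partial_{M_1}\mathcal{J}^{fus}_{12}\ge\partial_{M_1}R$.

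One slip needs fixing. You define $R=\mathcal{J}^{fus}_{21}-\mathcal{J}^{fis}_{21}$, but your decomposition and Jacobian entries are only correct for $R=\mathcal{J}^{fus}_{21}=\mathcal{J}^{fis}_{21}$. Under your own mass-action identification the difference vanishes identically. State that identification explicitly, as the paper does.

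Your plan for the exceptional set needs repair. A numerical check that no Hopf point occurs does not test the statement. A focus with $\mathrm{Tr}\,J\neq 0$ has complex eigenvalues without any Hopf crossing, so you would have to monitor the sign of the discriminant itself.

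In fact no numerics are needed. The common retrograde flux contributes the same derivative to $J_{12}$ and to $-J_{22}$, so $c=d+e$ with $e=\partial_{M_2}J_{ex}$. The discriminant then rewrites exactly as
\[
(a-k+l+c)^2+4d(k_1-l)=(a-k+e+l-d)^2+4d\,(a+e+k_1-k).
\]
With $d>0$ this is nonnegative whenever $k\le a+e+k_1$, for every value of $l$. It is negative precisely when $\delta:=k-k_1-a-e>0$ and $k_1+(\sqrt{d}-\sqrt{\delta})^2<l<k_1+(\sqrt{d}+\sqrt{\delta})^2$.

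So your condition $l>k_1$ (roughly, $d_{21}$ comparable to $d_{12}$) is necessary but not sufficient. Two further points follow from the identity:
\begin{itemize}
\item You also need $\delta>0$. Let the leak be the part of the outflux of cisterna~1 that does not reach cisterna~2, so $k-k_1=\partial_{M_1}J_{leak}$. Then $\delta>0$ requires the leak's slope to exceed $\partial_{M_1}J_{in}+\partial_{M_2}J_{ex}$, a condition that does not involve $d_{12}$ or $d_{21}$ at all.
\item Very large $l$ makes the eigenvalues real again. This also sharpens the paper's ``unless $l$ is very large''.
\end{itemize}
In particular, for conservative anterograde transfer ($k=k_1$) with nondecreasing $J_{in}$ and $J_{ex}$, the eigenvalues are always real. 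So ``generally'' can be made precise without any simulation.
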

\begin{proof}
Without loss of generality, we can assume that $\mathcal{J}^{fis}_{21}(M_1,M_2) = \mathcal{J}^{fus}_{21}(M_1,M_2)$. For the above assumption, the Jacobian matrix has the form, 
$$
J = \begin{pmatrix}
a-k + l & d \\
k_1 - l & -c
\end{pmatrix}\,,$$
where $l = \partial \mathcal{J}^{fis}_{21}/\partial M_1$ and all other entries are as above. The discriminant for the characteristic polynomial for this Jacobian is  $(a + c - k + l)^2  + 4 \,d \, (k_1 - l) $, which is mostly positive, unless $l$ is very large. Hence, this system generally has real eigenvalues.
\end{proof}
\subsection{Nullclines and bifurcations \label{sec:nullc2}}
The changes in the solution structure of one cisterna dynamical system can be captured completely by \textit{saddle-node bifurcations}. However, as the dimensionality of the system increases, other possibilities also arise. For instance, the stability of a fixed point might change via \textit{Hopf bifurcation}, and the system can admit closed orbit solutions \cite{strogatz}.

\begin{figure}[t!]
\centering
 \includegraphics[width=\textwidth]{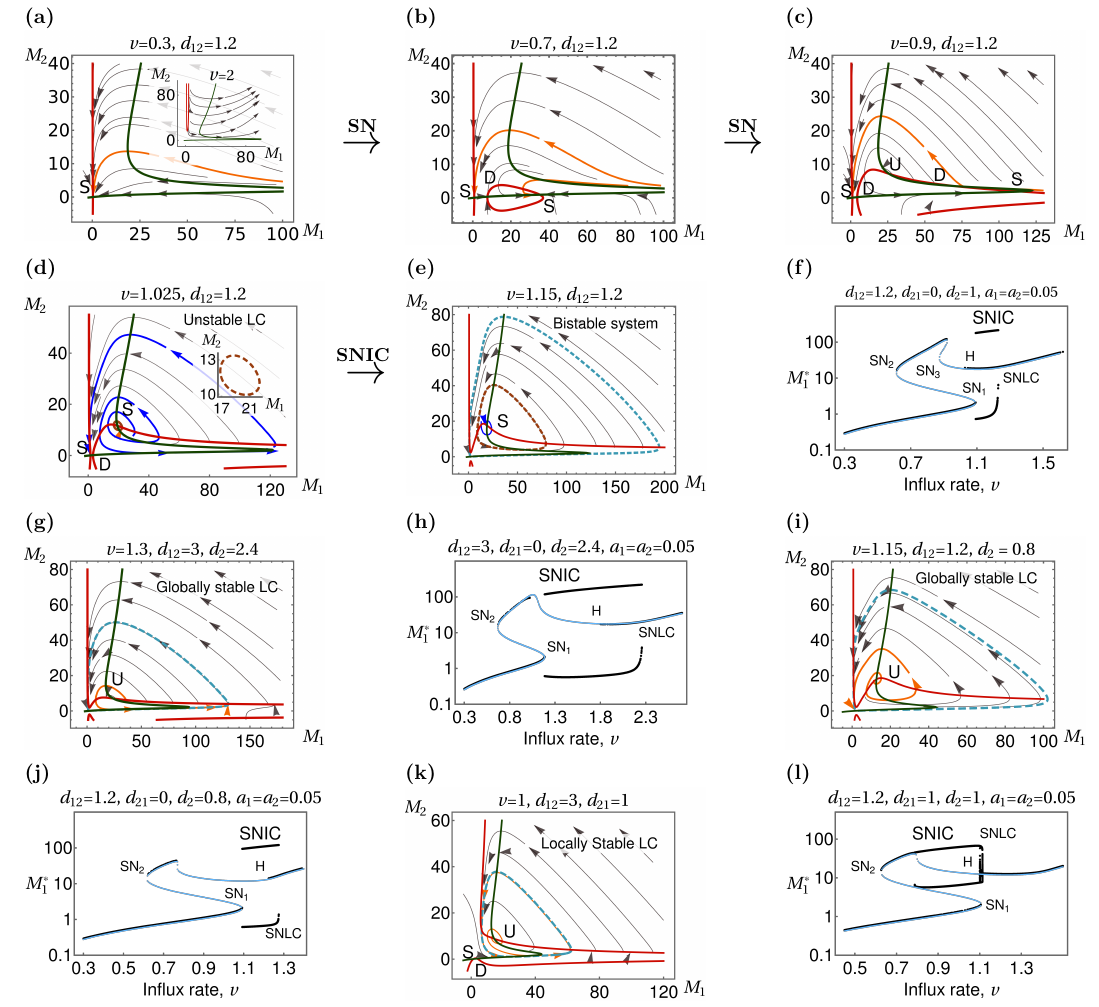}
\caption{{\bf Solution structure for two cisternae}. Solution structure for $2$-cisternae dynamical system Eqs.\,\eqref{eq:retrograde1_si},\eqref{eq:retrograde2_si} is characterized by the stability of the fixed points -- stable fixed point (S), saddle fixed point (D) and unstable fixed point(U) as well as via bifurcations -- saddle-node bifurcations (SN$_i$'s), Hopf bifurcation (H), SNIC bifurcation and SNLC bifurcation (see Glossary\,\ref{box:definitions} for details). As influx rate is increased from (a) to (c), the system undergoes a series of saddle-node bifurcations, thereby altering the number of fixed points. (a-inset) For high influx rate ($v= 2$), system has no fixed points and it grows unboundedly. (d) A fixed point can change its stability via Hopf bifurcation, and an unstable local limit cycle arises. (e)  As $v$ increases, a saddle-node pair near zero annihilate each other and since they lie on a close loop, it gives rise to an outer limit cycle solution via SNIC (light blue, dashed). Further increasing $v$, the inner unstable limit cycle (brown, dashed) grows in size and annihilates with the outer stable limit cycle via SNLC and the system is left with a stable fixed point. (f) Bifurcation diagram depicting the series of events (a-e). Other parameters are kept fixed at, $C_{11} = C_{21} = 100$, $C_{12} = C_{22} = 20$, $a_1 = a_2  = 0.05$, $d_{2} = 1$, $d_{12} = 1.2$, $d_{21} = 0$. (g) For a different set of parameters, $d_{12} = 3$, $d_{2} = 2.4$, the generated limit cycle via SNIC bifurcation is globally stable, as the enclosed fixed point is unstable, as observed in the bifurcation diagram (h). The corresponding phases are shown in the phase diagram Fig.\ref{fig:overlap}(a). (i,j) A globally stable limit cycle via SNIC can also be obtained by lowering the value of exit rate, $d_{2} = 0.8$. (k,l) Adding a $M_1$-independent retrograde flux from cisterna $2$ to $1$ ($d_{21}=0.6$) modifies the nullclines and a locally stable limit cycle appears via SNIC.
}
\label{fig:2cist_roots}
\end{figure}

Unlike the one-cisterna case, finding the root structure of high-dimensional systems requires quite advanced mathematical tools (see \textit{Intersection theory}\cite{fulton}, for instance), and we will rely on graphical and numerical techniques. In this regard, we plot the \textit{nullclines}, which are the curves of zero dynamics of different state variables in the state space. The intersections of these curves yield the system's fixed points, as all the rates of change are simultaneously zero at those locations. We draw the nullclines for the system Eqs.\,\eqref{eq:retrograde1_si},\eqref{eq:retrograde2_si} (or Eqs.\,\eqref{eq:retrograde1},\eqref{eq:retrograde2} in the main text) in Fig.\,\ref{fig:2cist_roots} and  analyse how it changes as we vary the system parameters.

Let us discuss the detailed root structure (and the phases) for the $2$-cisternae dynamical system. Suppose we fix the values of the parameters at, $C_{11} = C_{21} = 100, C_{12} = C_{22} = 20 $ $ , a_1 = a_2  = 0.05, d_{2} = 1$, $d_{12} = 1.2$, $d_{21} = 0$ and draw the flows and nullclines as we increase the influx rate $v$, see Fig.\,\ref{fig:2cist_roots} (a-f). With these parameter values and small influx rate, $v$, there is only one stable fixed point near zero, i.e., there is no cisterna formation. As the influx rate increases, a pair of fixed points appear via \textit{saddle node bifurcation}, and the number of fixed points goes from $1\rightarrow 3 \rightarrow 5$ in Fig.\ref{fig:2cist_roots} (a-c). As $v$ increases further, the second cisterna starts to grow in size and the fixed points disappear via another \textit{saddle node bifurcation}. At this point, locally unstable limit cycle solution appear via \textit{Hopf bifurcation} (Fig.\ref{fig:2cist_roots} (d), shown as brown dashed line). Thereafter, increments in $v$, leads to disappearance of fixed points via \textit{saddle node bifurcation} near zero, and if the saddle-node pair happens to lie on a closed loop (facilitated by the positivity and nonlinearity of flux kernels), their disappearance leads to the rise of limit cycles via \textit{SNIC bifurcation} in  Fig.\,\ref{fig:2cist_roots}(e) (shown as light blue dashed line) and we have a nested bistable limit cycle solution with stable central fixed point. On further increasing $v$, the inner unstable circle Fig.\,\ref{fig:2cist_roots}(e) keeps on growing in size, eventually coalescing and annihilating with the outer stable limit cycle via {saddle-node bifurcation of limit cycles} (SNLC) and the system is left with a stable fixed point. The series of events (a-e) is depicted in a comprehensive way in  the bifurcation diagram, Fig.\,\ref{fig:2cist_roots}(f).

Note that for the system Eqs.\,\eqref{eq:retrograde1_si},\eqref{eq:retrograde2_si}, the stability of the limit cycle solution generated via SNIC bifurcation depends upon the fixed point structure remaining after the saddle-node annihilation. Crucially, if only one enclosed fixed point remains after the bifurcation, the stability of the limit cycle generated via SNIC can be determined by the stability of the enclosed fixed point and the global flows in the system. The stability of the obtained limit cycle solutions changes if other parameters in the system are varied. For example, if we set $d_{12}=3, d_2=2.4$ (Fig.\,\ref{fig:2cist_roots}(g,h), then the SNIC bifurcation precedes the Hopf bifurcation as we increase the influx rate, the enclosed fixed point is unstable and the obtained limit cycle is globally stable, as shown in the bifurcation diagram, Fig.\,\ref{fig:2cist_roots}(h). A globally stable limit cycle via SNIC bifurcation is also achieved by lowering the value of the exit rate $d_2$, Fig.\,\ref{fig:2cist_roots}(i,j).
Futhermore, adding a $M_1$-independent retrograde flux from cisterna $2$ to $1$ can lead to the formation of locally stable limit cycles via SNIC bifurcation, Fig.\ref{fig:2cist_roots}(k,l). 

Based on the insight above, in the main text, we have explored the following solution classes and the bifurcations leading to the transitions between them \cite{strogatz}:

\begin{enumerate}[label=(\alph*)]
\item Appearance or annihilation of a pair of fixed points via saddle node bifurcation.
\item Change in the  stability of a fixed point via Hopf bifurcation: This might lead to appearance of local oscillations.
\item Appearance of limit cycle via SNIC bifurcation: This might lead to the appearance of locally stable, bistable or globally stable oscillations. SNIC bifurcations are identified by a saddle-node (SN) ghost \cite{strogatz} near the point of saddle-node annihilation, where the system spends a significant amount of time compared to the rest of the trajectory, contributing significantly to the time period of the oscillation.
\item Coalescence via SNLC: The \textit{Saddle-Node Bifurcation of Limit Cycles} (SNLC) involves the collision and mutual annihilation of two nested limit cycles. Typically, an inner unstable cycle expands to merge with an enclosing stable cycle; upon coalescence, both periodic solutions vanish.  Analogous to the saddle-node bifurcation of fixed points, this transition is accompanied by a \textit{ghost delay} where trajectories linger in the vicinity of the "phantom" orbits before eventually drifting away.
\item Note that when the system has multistability, i.e., the  solutions are only locally stable, the asymptotic behaviour of the system depends on the initial conditions (Fig.\ref{fig:overlap}(d,e,f)), resulting in overlapping regions in the phase diagram over the parameter space of the system.
\end{enumerate}

\begin{figure}[t!]
\centering
 \includegraphics[width=\textwidth]{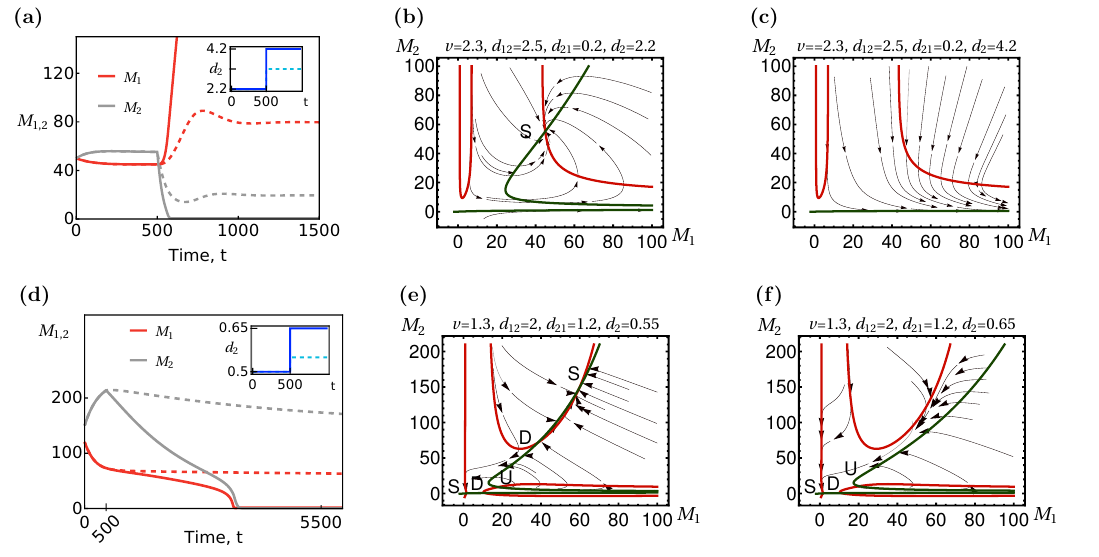}
\caption{{\bf Effect of retrograde flux on the response to systematic perturbations of the $2$-cisternae system.} (a) For the $2$-cisternae system with low retrograde rate $d_{21}$ compared to the anterograde rate $d_{12}$ (parameter values shown in (b,c)), if the exit rate at the second cisterna, $d_2$ is increased ($2.2 \rightarrow 4.2$)), the second cisterna is depleted. As the anterograde flux from the first cisterna increases with the size of the second cisterna (see Eqs.\,\eqref{eq:retrograde1_si},\eqref{eq:retrograde2_si}), depletion of the second cisterna causes the influx rate dominating the outflux rate at the first cisterna, leading to the increase in the size of the first cisterna and eventual unbounded growth. Solutions before (dashed line) and after (thick line) the perturbation is applied, are shown. The corresponding nullclines and flows are plotted, (b) $\rightarrow$ (c). (d) If the retrograde rate $d_{21}$ is comparable to the anterograde rate $d_{12}$  (parameter values shown in (c,d)), increasing the exit rate, $d_2$ leads to depletion of the second cisterna, as before. However, in this regime, even when the second cisterna is depleted by increasing $d_2$ ($0.55 \rightarrow 0.65$), outflux flux ($=$ anterograde flux $+$ leak flux) at the first cisterna can balance the influx rate, $v$ supporting a stable vesicle phase (see Eqs.\,\eqref{eq:retrograde1_si},\eqref{eq:retrograde2_si}). Solutions before (dashed line) and after (thick line) the perturbation is applied, are shown. The corresponding nullclines and flows are plotted, (e) $\rightarrow$ (f). Other parameters are kept fixed at $C_{11} = C_{21} = 100$, $C_{22} = 20$, $C_{12} = 20$, $d_{1} = 1$.}
\label{fig:2cist_retro_fig}
\end{figure}

\subsection{Effects of retrograde flux on the solution space of the two-cisternae system\label{sec:2cist_retro}}
To understand the effects of the retrograde flux on the solution space of $2$-cisternae system, let us look at the system of equation,  Eqs.\,\eqref{eq:retrograde1_si},\eqref{eq:retrograde2_si}. It can be written as,
\begin{small}
\begin{eqnarray}
\dot{M_1} &=& F(M_1) - H_1(M_1,M_2) + H_2(M_1,M_2)  \nonumber \\
\dot{M_2} &=&  H_1(M_1,M_2) - H_2(M_1,M_2)  - G(M_2) \,. \nonumber 
\end{eqnarray}
\end{small}
the Jacobian matrix takes the form,
\begin{eqnarray}
J = \begin{pmatrix}
 \frac{\partial F}{\partial M_1}-\frac{\partial H_1}{\partial M_1} + \frac{\partial H_2}{\partial M_1} & -\frac{\partial H_1}{\partial M_2} + \frac{\partial H_2}{\partial M_2} \\ \\
 \frac{\partial H_1}{\partial M_1} - \frac{\partial H_2}{\partial M_1}&  \frac{\partial H_1}{\partial M_2} - \frac{\partial H_2}{\partial M_2}-\frac{\partial G}{\partial M_2}
\end{pmatrix}\,.
\label{eq:Jacob_M1M2_retro2}
\end{eqnarray}
Note that in Eqs.\,\eqref{eq:retrograde1_si},\eqref{eq:retrograde2_si}, the retrograde flux $H_2(M_1,M_2) $ is an increasing function of the sizes of the first and the second cisterna, $M_1$ and $M_2$, respectively. We can see that adding retrograde flux might cause the off-diagonal elements of the Jacobian, Eq.\,\eqref{eq:Jacob_M1M2_retro} to have the same sign and therefore restricts the existence of limit cycle solutions, see Fig.\,\ref{fig:overlap}(c).

We have seen before that adding a $M_1$-independent retrograde flux from cisterna $2$ to $1$ can lead to the formation of a locally stable limit cycles via SNIC bifurcation (Fig.\,\ref{fig:2cist_roots}). Furthermore, adding a retrograde flux that is dependent on both $M_1,M_2$ adds to the nonlinearities in the system and in general, would lead to an increase in the number of fixed points. We discuss a consequence of this below.

In the main text, we have discussed the response of the $2$-cisternae system to systematic perturbations that involves the dissolution of the second cisterna by increasing the exit rate, $d_2$ at the second cisterna. We observed that the $2$-cisternae system has a different response based on the retrograde rate. We explain these solutions by analysing the nullclines and flows (Fig.\,\ref{fig:2cist_retro_fig}) in these two parameter regimes.
\begin{enumerate}[label=(\roman*)]
\item At first, let us consider the case with small retrograte rate. In this case, if the exit rate $d_2$ is large (keeping the influx rate, $v$ at a constant value), the second cisterna is depleted. However, as a result of this, not only the retrograde flux decreases, but the anterograde flux decreases as well, see Eqs.\,\eqref{eq:retrograde1_si},\eqref{eq:retrograde2_si}. This leads to the first cisterna becoming unstable, see Fig.\,\ref{fig:2cist_retro_fig}(a-c).
\item Now, consider the case with large retrograte rate. In this case, if the exit rate $d_2$ is large (keeping the influx rate, $v$ at a constant value), the second cisterna is depleted, same as case (i). However, stable phases for the $2$-cisternae system with large retrograde flux occur at smaller values of the influx rate, $v$. In this regime, if the retrograde flux is reduced by depleting the second cisterna, the anterograde flux and the leak flux at the first cisterna (see Eqs.\,\eqref{eq:retrograde1_si},\eqref{eq:retrograde2_si}) are large enough to support a stable vesicle phase. This effectively leads to depletion of the first cisterna, see Fig.\,\ref{fig:2cist_retro_fig}(d-f).
\end{enumerate}

\subsection{Arriving at the phase diagram and phase boundaries for the two-cisternae system\label{sec:RH}} 

\begin{figure}[t!]
\centering
\includegraphics[width=\textwidth]{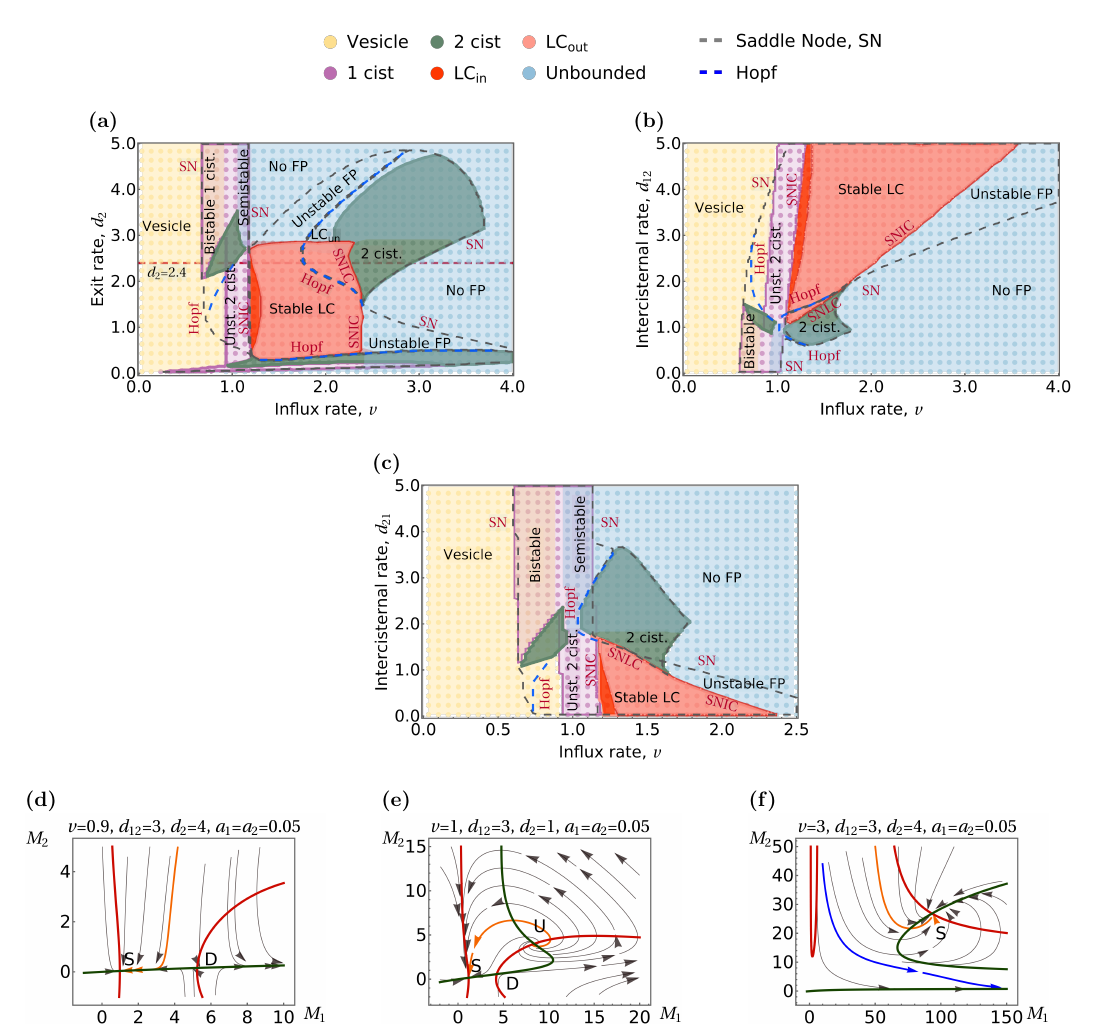}
\caption{\textbf{Phase diagram for two cisternae.}  (a) Phase diagram for two cisternae system in (influx rate $v$, exit rate $d_2$) plane. The horizontal line (dark red, dashed) corresponds to the bifurcation diagram in Fig.\ref{fig:2cist_roots}(h). Phase boundaries, or bifurcation lines, are constructed 
with saddle-node (SN) bifurcation indicated with gray dashed lines and 
Hopf bifurcation indicated with  blue dashed lines, see Table \ref{tab:bifurcation_criteria} \cite{strogatz, GH}. (b) Phase diagram for two cisternae system in (influx rate $v$, anterograde intercisternal flux rate $d_{12}$) plane. (c) Phase diagram for two cisternae system in (influx rate $v$, retrograde intercisternal flux rate $d_{21}$) plane. For the flux kernels used in Eqs.\,\eqref{eq:retrograde1_si},\eqref{eq:retrograde2_si} (and equivalently, Eqs.\,\eqref{eq:retrograde1},\eqref{eq:retrograde2}), oscillatory solutions are observed for small retrograde rate, $d_{21}$. For (a,b,c), default values of the parameters are,  $C_{11} = C_{21} = 100$, $C_{12} = C_{22} = 20$, $a_1 = a_2  = 0.05, d_{2} = 1$, $d_{12} = 3$, $d_{21} = 0$. (d,e,f) Nullclines and flows showing multistability, with mixed phases where system converges to one phase or the other depending on the initial condition. (d) semistable single cisterna phase, with $M_1>1, M_2<1$ (e) unstable two cisterna phase with stable fixed point lying at  $M_1>1, M_2<1$. (f) System grows unboundedly (blue) or goes to a stable fixed point (orange) depending on the initial condition. These cases give rise to mixed (overlapping) phases in the phase diagram (a,b,c) over the parameter space of the system. 
Note that in the phase  diagram (a,b,c), locally stable $2$-cisternae phase (green-blue overlap regions) as well as  globally stable $2$-cisternae phase are shown (dark green).}
\label{fig:overlap}
\end{figure}

As previously noted, this work focuses on both local (saddle-node, Hopf) and global (SNIC, SNLC) bifurcations. In this section, we present a method for predicting these transitions based on the eigenvalue analysis of the fixed points. Although local eigenvalue analysis alone cannot definitively confirm a global bifurcation, tracking the stability of these fixed points across the parameter space provides critical clues. Specifically, the manner in which fixed points change stability often indicates the underlying global bifurcation responsible for shifts in the solution structure.


We first establish the eigenvalue analysis for predicting saddle-node and Hopf bifurcations. For the $2$-cisternae dynamical system, $\dot{M_1} =f(M_1,M_2),\, \dot{M_2} = g(M_1,M_2)$, the Jacobian matrix around a fixed point ($M_1^*,M_2^*$) is given by
\begin{eqnarray}
J = \begin{pmatrix}
   \frac{df}{dM_1} &
   \frac{df}{dM_2}  \\
   \frac{dg}{dM_1} &
   \frac{dg}{dM_2} 
   \end{pmatrix} \Bigg|_{(M_1^*,M_2^*)} \,,
   \label{eq:jacob_bn}
 \end{eqnarray}  
which yields the characteristic equation $\lambda^2 - \text{Tr}(J) \lambda + \text{Det}(J)=0$, where $\text{Tr}(J)$ and $\text{Det}(J)$ denote the  trace and determinant of $J$, respectively. For the fixed point to be linearly stable (i.e., both eigenvalues have negative real parts), the system must satisfy $\text{Tr}(J) <0$ and $\text{Det}(J)>0$. A saddle-node bifurcation occurs when $\det(J) = 0$ provided $\text{Tr}(J) \neq 0$. Conversely, the steady state loses stability when the trace changes sign from negative to positive while $\text{Det}(J) >0$; in this regime, periodic solutions typically emerge via a Hopf bifurcation \cite{tyson}.

Mapping these changes in the number and stability of fixed points across parameter space defines the bifurcation boundaries. For a 2D system, these conditions are straightforward \cite{liu} and are summarized in Table \ref{tab:bifurcation_criteria}. Note that these stability requirements correspond to the two-dimensional Routh–Hurwitz criterion \cite{DorfBishop2017}. In higher-order systems, this criterion imposes conditions on the coefficients of the characteristic polynomial to detect the emergence of purely imaginary roots \cite{hairer}, providing a generalized tool for mapping stability boundaries and the onset of limit cycles.

\begin{table}[t!]
\centering
\renewcommand{\arraystretch}{1.5}
\begin{tabular}{|c|c|c|}
\hline
\textbf{Bifurcation Type} & \textbf{Eigenvalue Condition} ($\lambda$) & \textbf{Jacobian Criteria (2D)} \\ 
\hline
\textbf{Saddle-Node (SN)} & 
\parbox{5cm}{\vspace*{0.2cm}One simple real zero eigenvalue. \\ $\lambda_1 = 0, \lambda_2 \neq 0$\vspace*{0.2cm}} & 
\parbox{5cm}{\vspace*{0.2cm}$\det(J) = 0$ \\ $\text{Tr}(J) \neq 0$\vspace*{0.2cm}} \\ 
\hline
\textbf{Hopf} & 
\parbox{5.5cm}{\vspace*{0.2cm}A pair of purely imaginary eigenvalues. \\ $\lambda_{1,2} = \pm i\omega \quad (\omega \neq 0)$ \\ \textit{i.e.,} $\text{Re}(\lambda) = 0, \, \text{Im}(\lambda) \neq 0$\vspace*{0.2cm}} & 
\parbox{6cm}{\vspace*{0.2cm}$\text{Tr}(J) = 0, \, \det(J) > 0$ \\ \textit{Transversality:} $\frac{d(\text{Tr}(J))}{d \alpha_i} \neq 0$ \\ $\alpha_i$'s: bifurcation parameters \\ such as influx rate, $v$\vspace*{0.2cm}} \\ 
\hline
\end{tabular}
\caption{Detection criteria for saddle-node and Hopf bifurcations based on the Jacobian matrix $J$, Eq.\eqref{eq:jacob_bn}. The transversality condition, $\frac{d(\text{Tr}(J))}{d \alpha_i} \neq 0 \implies \frac{d\,\text{Re}(\lambda)}{d\alpha_i} \neq 0$, ensures $\text{Re}(\lambda)$ genuinely crosses zero, guaranteeing the birth of an isolated closed orbit (limit cycle) with definite local stability (attracting or repelling).}
\label{tab:bifurcation_criteria}
\end{table}

As mentioned before, global bifurcations such as SNIC cannot be strictly predicted using only local stability analysis. However, for the $2$-cisternae system considered here, the nullcline structure and bifurcation diagrams (Fig.\,\ref{fig:2cist_roots}) reveal that the stability of the limit cycle generated via SNIC bifurcation is topologically linked to the fixed point it encloses. Specifically, a stable limit cycle arising via a SNIC bifurcation encloses an unstable fixed point, as shown in Fig.\,\ref{fig:2cist_roots}(g,h). Conversely, a central stable fixed point can generate a concentric bistable solution, characterized by an inner unstable limit cycle and an outer stable limit cycle (Fig.\,\ref{fig:2cist_roots}(e,f)). Consequently, the local instability of the central fixed point serves as a necessary, though not sufficient, precursor for the onset of stable oscillations via this global mechanism, with sufficiency ultimately dictated by the global nonlinear flow.

Fig.\ref{fig:overlap}(a,b,c), depicts the boundaries of saddle-node (shown as grey, dashed lines) and Hopf bifurcation (shown as blue, dashed lines) in the parameter space. However, the eigenvalue analysis does not predict the \textit{basin of attraction} \cite{strogatz} of a fixed point or a limit cycle, and  the complete numerical phase diagram has to be computed numerically.  We use the following numerical scheme for finding these phases:
\begin{enumerate}
\item Solve the differential equations Eqs.\,\eqref{eq:retrograde1_si},\eqref{eq:retrograde2_si} (Eqs.\,\eqref{eq:retrograde1},\eqref{eq:retrograde2} in the main text) for $(M_1(t),M_2(t))$  for given values of parameters and initial conditions.
\item Choose, say, five time points $\tau_a < \tau_b < \tau_c < \tau_d< \tau_e$. These time points are chosen to be large enough numbers to make sure that the system has arrived at a steady-state. We compute the following quantities : $\lvert M_1(\tau_a)-M_1(\tau_c) \rvert, \, \lvert M_1(\tau_b)-M_1(\tau_d) \rvert , \,  \lvert M_2(\tau_a)-M_2(\tau_c) \rvert, \, \lvert M_2(\tau_c)-M_2(\tau_d) \rvert ,\, M_1(\tau_e), M_2(\tau_e)$, $\lvert \rvert$ denotes the absolute value.
\item The steady-state is a fixed point solution if $0<\lvert M_1(\tau_a)-M_1(\tau_c) \rvert<\delta_1$, $0<\lvert M_1(\tau_b)-M_1(\tau_d) \rvert<\delta_1$, $0<\lvert M_2(\tau_a)-M_2(\tau_c) \rvert<\delta_1$, $0<\lvert M_2(\tau_b)-M_2(\tau_d) \rvert<\delta_1$, $M_1(\tau_e) < M^{high}_{th}$ and $M_2(\tau_e) < M^{high}_{th}$, $\delta_1>0$ is a chosen tolerance and $M^{high}_{th} = 300$ is the threshold size assumed for the stable state.



\item The steady-state is a limit cycle solution if $\delta_1\leq\lvert M_1(\tau_a)-M_1(\tau_c) \rvert<M^{high}_{th}$, $\delta_1\leq\lvert M_1(\tau_b)-M_1(\tau_d) \rvert<M^{high}_{th}$, $\delta_1\leq\lvert M_2(\tau_a)-M_2(\tau_c) \rvert< M^{high}_{th}$, $\delta_1\leq\lvert M_2(\tau_b)-M_2(\tau_d) \rvert < M^{high}_{th}$,  $M_1(\tau_e) < M^{high}_{th}$ and $M_2(\tau_e) < M^{high}_{th}$.

\item The system is unbounded if $M_1(\tau_e) \geq M^{high}_{th}$ or $M_2(\tau_e) \geq M^{high}_{th}$. 


\item While the numerical scheme presented above is adequate for the present purposes, it is subject to the following limitations -- (i) it would not detect limit cycle solutions with a radius, $r_\tau<\delta_1$ as well as $r_\tau>M_{th}$ -- and we would assume them to be fixed point and unbounded solutions, respectively. (ii) Fixed point solutions that reach the asymptotic value at a time much larger than $\tau_e$, will be categorized as unbounded solutions. 

\end{enumerate}

\noindent 
\textbf{Computing the phase difference:} \\
Since the limit cycle solutions of the above dynamical system  Eqs.\,\eqref{eq:retrograde1_si},\eqref{eq:retrograde2_si} have nonlinear waveform, and frequency of the system is amplitude dependent, computing the phase difference between the $(M_1(t),M_2(t))$ time-series is not straightforward and the usual methods such as Fourier or Hilbert transform  might give inaccurate results (Bedrosian theorem might not hold \cite{bedrosian1962}). Fortuitously, If the system has well-defined peaks, we can use peak to peak analysis to compute the phase difference using the following procedure --  For each time-series, detect consecutive peaks and assign phase linearly between them $(0,2\pi)$. The phase difference between $(M_1(t),M_2(t))$ can be computed as, $\Delta \phi(t) = 2 \pi$ (time between peaks/time period).

We use a variant of this method -- we compute the time period, $\tau$ from the numerical solution and fit a sinusoidal function $A_1 + A_2 \, sin(2 \pi t /\tau + c)$ to the  solutions obtained for $M_1$ and $M_2$ from the differential equations Eqs.\,\eqref{eq:retrograde1_si},\eqref{eq:retrograde2_si}. The phase difference can then be derived from the correlation of the fitted functions. Note that the sinusoidal fitting extracts phase from the entire waveform's distribution and is a better representative of the true temporal relationship between broad or asymmetric signals than peak to peak's single-point estimate. It essentially captures the ``mean behaviour'' or the macroscopic phase shift of the oscillators.
\section{Numerical computation of structural stability of the 2-cisternae dynamical system \label{sec:cist_struct}}
In this section, we prove the structural stability for $2$-cisternae system Eqs.\,\eqref{eq:retrograde1_si},\eqref{eq:retrograde2_si}. To compute structural stability, we follow the numerical scheme given below. For clarity, we explain these steps when applied to cubic polynomial/cusp singularity (which is the single cisterna system Eqs.\,\eqref{eq:sveq_H1}). Readers interested in further details can consult \cite{thom,golubitzky,Cox2015}.
\begin{enumerate}
    \item Compute the Gröbner basis from the system of equations/vector fields (e.g., $x^3+ b\,x +c$ for  cusp singularity \cite{golubitzky}, where x is the state variable and b and c are constant parameters). We have used \textit{Mathematica} for this (which uses Buchberger algorithm and DegreeReverseLexicographic order, see \cite{Cox2015} for details).
    \item Extract the leading monomials from each polynomial in the Gröbner basis (e.g., $\{x^3\}$). 
    \item Generate all monomials up to a chosen maximum polynomial Degree: $\{1, x, x^2, x^3, x^4, x^5, \ldots\}$.
    \item Reject all monomials divisible by any leading monomial (e.g., reject $x^3, x^4, x^5, \ldots$).
    \item The surviving monomials form the standard monomial basis (e.g., $\{1, x, x^2\}$).
    \item Algebraic capacity = dimension of standard monomial basis (e.g., capacity = 3).
    \item Choose specific numerical values for the parameters and numerically solve the system to count the number of distinct complex roots (geometric roots).
    \item Compare: if geometric roots = algebraic capacity, the system is structurally stable.
    \item If geometric roots $<$ algebraic capacity, the system is structurally unstable.
\end{enumerate}
The above steps for the $2$-cisternae system are tabulated in Table \ref{tab:algebraic_roots}. The number of distinct complex roots and the algebraic capacity are plotted in Fig.\,\ref{fig:robust_algebraic}.

\begin{table}[htpb]
\centering
\renewcommand{\arraystretch}{1.5}
\begin{tabular}{|c|c|}
\hline
\parbox{4.5cm}{\centering \vspace*{0.2cm} \textbf{Vector field} \vspace*{0.2cm}} & 
\parbox{13cm}{\vspace*{0.2cm} 
$\displaystyle \left\{ v \left(a_1 + \frac{ M_1^2}{C_{11} + M_1^2}\right) -\frac{ d_{1} \, M_1}{C_{12} + M_1} - \frac{d_{12} \, M_1}{C_{12} + M_1}\left(a_2 + \frac{ M_2^2}{C_{21}+M_2^2}\right) + \frac{d_{21}\,M_2}{C_{22} + M_2}\left(a_1 + \frac{ M_1^2}{C_{11} + M_1^2}\right) \right.,$ \\[0.4cm] 
$\displaystyle \left. \frac{d_{12} \,M_1}{C_{12} + M_1}\left(a_2 + \frac{M_2^2}{C_{21}+M_2^2}\right) -\frac{ d_{21} \, M_2}{C_{22} + M_2}\left(a_1 + \frac{ M_1^2}{C_{11} + M_1^2}\right) - \frac{ d_{2} \, M_2}{C_{22} + M_2} \right\}$ \vspace*{0.2cm}} \\ 
\hline

\parbox{4.5cm}{\centering \vspace*{0.2cm} \textbf{Leading monomials} \vspace*{0.2cm}} & 
\parbox{13cm}{\centering \vspace*{0.2cm} $\displaystyle \left\{(M_1)^3 (M_2)^2,\,(M_1)^4 (M_2),\,(M_1)^5,(M_2)^6,\,(M_1) (M_2)^5,\,(M_1)^2 (M_2)^4\right\}$ \vspace*{0.2cm}} \\ 
\hline

\parbox{4.5cm}{\centering \vspace*{0.2cm} \textbf{Standard monomial basis} \vspace*{0.2cm}} & 
\parbox{13cm}{\vspace*{0.2cm} 
$\displaystyle \left\{1,\,(M_1),\,(M_2),\,(M_1)^2,2 \,(M_1) \,(M_2),\,(M_2)^2,\,(M_1)^3,3 \,(M_1)^2 \,(M_2) \right.$, \\[0.3cm] 
$\displaystyle 3 \, (M_1) \, (M_2)^2,(M_2)^3, (M_1)^4,4 \,(M_1)^3 \,(M_2),6\, (M_1)^2 \,(M_2)^2,4 \,\,(M_1)\, (M_2)^3,(M_2)^4$, \\[0.3cm] 
$\displaystyle \left. 10\,(M_1)^2\, (M_2)^3,\,5 \,(M_1) \,(M_2)^4,\,(M_2)^5 \right\} $ \vspace*{0.2cm}} \\ 
\hline

\parbox{4.5cm}{\centering \vspace*{0.2cm} \textbf{Algebraic capacity} \vspace*{0.2cm}} & 
\parbox{13cm}{\centering \vspace*{0.2cm} $18$ \vspace*{0.2cm}} \\ 
\hline

\parbox{4.5cm}{\centering \vspace*{0.2cm} \textbf{Number of distinct complex roots} \vspace*{0.2cm}} & 
\parbox{13cm}{\centering \vspace*{0.2cm} $18$ \vspace*{0.2cm}} \\ 
\hline

\parbox{4.5cm}{\centering \vspace*{0.2cm} \textbf{Structural stability} \vspace*{0.2cm}} & 
\parbox{13cm}{\centering \vspace*{0.2cm} \textbf{True} \vspace*{0.2cm}} \\ 
\hline
\end{tabular}
\caption{{\bf Establishing structural stability of two-cisternae dynamical system.} Given a dynamical system, its fixed points are determined by the components of its vector field. If these equations for fixed points can be written as polynomial equations, one can compute a Gröbner basis \cite{Cox2015} for the system to extract the leading monomials and construct the standard monomial basis. The cardinality of this standard basis yields the algebraic capacity of the system. The fixed points of the dynamical system Eqs.\,\eqref{eq:retrograde1_si},\eqref{eq:retrograde2_si} (Eqs.\,\eqref{eq:retrograde1},\eqref{eq:retrograde2} in the main text) can also be computed numerically. Parameter values, $C_{11}=C_{21}=100,C_{12} = C_{22} =20, v=1, d_{12} = 3$, $a_1=a_2=0.05$, $d_1=d_2=1$, $d_{21}=1$. The system's fixed points are considered structurally stable if the number of distinct complex roots it admits exactly matches the algebraic capacity, ensuring that all roots are simple and non-degenerate. This implies that system is locally structurally stable \cite{kuznetsov,golubitzky} which ensures neighborhood stability, i.e., the fixed point point structure of the system remains invariant under small changes of parameters. The number of distinct roots and the algebraic capacity are plotted in Fig.\,\ref{fig:robust_algebraic}.}
\label{tab:algebraic_roots}
\end{table}

\begin{figure}[t!]
\centering
\includegraphics[scale=.5]{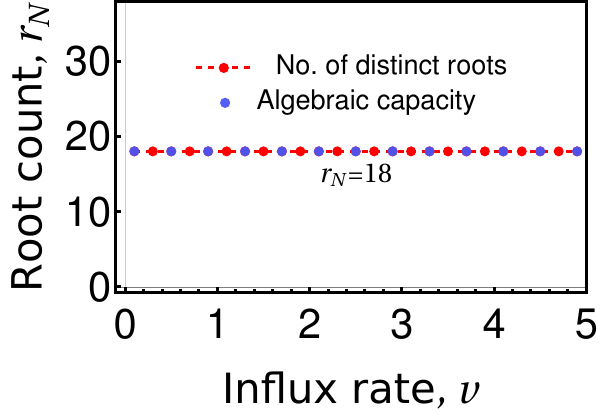}
\caption{\textbf{Structural stability of two-cisternae dynamical system}.
The number of distinct complex roots (red dots) and the algebraic capacity (blue dots) for the dynamical system Eqs.\,\eqref{eq:retrograde1_si},\eqref{eq:retrograde2_si} are plotted and they exactly match over a wide range of influx rate, $v$. This ensures local structural stability (see~\ref{sec:cist_struct} for details), and the change in the system's qualitative behaviour (solution class) is given by bifurcation boundaries (see Fig.\,\ref{fig:overlap}(a,b,c)). Parameter values, $C_{11} = C_{21} = 100$, $C_{12} = C_{22} = 20$, $a_1 = a_2  = 0.05$, $d_{2} = 1$, $d_{12} = 3$, $d_{21} = 1$.}
\label{fig:robust_algebraic}
\end{figure}
\section{Robustness of phases to extrinsic and intrinsic noise  \label{sec:robust_si}}

\subsection{Robustness to extrinsic noise}
As suggested in the above sections, \ref{sec:cist_struct} and Sect.\ref{sec:single_cist}, the dynamical system Eq.\,\eqref{eq:veq_Hd} corresponds to a family of functions with the root structure, robust under small changes in parameters. The structural stability of the system further ensures that the root structure of the dynamical system is robust to intrinsic and extrinsic noise.  
In the following calculations, we subject the dynamical system corresponding to single and multiple cisternae to extrinsic and intrinsic noise and look for bounds on the noise strength for the solutions to remain in the same solution class. To quantify the robustness to extrinsic noise, we assume that each cisterna $i$ with size $M_i$ is driven by an extrinsic multiplicative noise with noise strength $\xi_i(M_i)$, where the multiplicative noise strength at the cisterna $i$ is a function of size of the cisterna $i$. This implicitly assumes that extrinsic fluctuations in the size of a cisterna is only due to local factors, such as local availability of vesicles and fisogens-fusogens or local hydrodynamic interactions \cite{Shapiro,kampen1,kardar}. With this, the dynamical system for a system of cisternae driven by extrinsic multiplicative noise is given by,
\begin{small}
\begin{eqnarray}
\frac{dM_i}{dt} &&=   {\cal K}^{fus} (M_i) - {\cal K}^{fis} (M_i) - {\cal K}^{tr}_{a} (M_i) +{\cal K}^{tr}_{r} (M_i) + \sqrt{\xi_i(M_i)} \eta_i(t) \nonumber \\ &&= f_i(\boldsymbol{M})  + \sqrt{\xi_i(M_i)} \eta_i(t)\,,
\label{eq:robust_langevin_si}
\end{eqnarray}
\end{small}
where $ {\cal K}^{tr}_{a} (M_i) , {\cal K}^{tr}_{r} (M_i)$ are anterograde and retrograde fluxes and together with $ {\cal K}^{fus},{\cal K}^{fis}$ constitute the deterministic term $f_i(\boldsymbol{M})$ for the dynamics of $i^{th}$ cisterna, $\boldsymbol{M} = (M_1,M_2,\cdots)$, $\eta_i(t)$ is Gaussian white noise and, as mentioned above, the noise strength $\xi_i(M_i)$ at the cisternae $i$  is a positive and smooth function of $M_i$  solely dependent on the size of the cisterna $i$. For Eq.\,\eqref{eq:robust_langevin_si}, the Fokker-Planck equations reads (It\^{o} convention \cite{kampen})
\begin{small}
\begin{eqnarray}
\partial_t P(\boldsymbol{M}) = \partial_{M_i} \left( {f_i} \, P( \boldsymbol{M}) \right) + \frac{1}{2} \partial_{M_i} \partial_{M_j} \left({B}_{ij} \, P(\boldsymbol{M})\right)\,,
\label{eq:Fokker_P}
\end{eqnarray}
\end{small}
where ${B}_{ij}$ is diagonal matrix with entries $\xi_i$ and repeated indices are summed over. 

Now, consider Eq.\,\eqref{eq:sveq_H1} (or Eq.\,\eqref{eq:veq_Hd} in the main text) for single cisterna with multiplicative extrinsic noise,
\begin{small}
\begin{eqnarray}
\dot{M} &&=   v \, \left(a_0 +  \frac{M^2}{C_1 + M^2} \right)- \left(\frac{M}{C_2 + M}\right) + \sqrt{\xi(M)} \, \eta(t) \,,
\label{eq:veq_noise_si}
\end{eqnarray}
\end{small}
$\eta$ is Gaussian white noise and $\xi(M) = \epsilon \, M$, $\epsilon$ a constant, i.e., stochastic addition and removal of vesicles is proportional to the cisterna size.

\noindent \textit{Steady state distribution from Fokker-Planck system}: 
Consider the above equation Eq.\,\eqref{eq:veq_noise_si} for single cisterna with added extrinsic multiplicative noise.
With It\^{o} convention, above equation can be converted into a Fokker-Planck system \cite{kampen},
\begin{small}
\begin{equation}
\frac{\partial P(M)}{\partial t} =   -\frac{\partial \left(f(M) \, P(M)\right)}{\partial M} + \frac{1}{2} \frac{\partial^2 \left(\xi(M) P(M)\right)}{\partial M^2} \,,
\label{eq:veq_noise_fp}
\end{equation}
\end{small}
where $f(M)$ is the deterministic part in Eq.\,\eqref{eq:veq_noise_si}. It corresponds to steady state distribution,
\begin{small}
\begin{equation}
P^*(M) = \mathcal{N} \, \exp \left[2 \, \int_0^M \frac{\left(f(s) - \frac{1}{2} \xi'(s)\right)}{\xi(s)}\, ds \right] \implies \frac{d \log(P^*(M))}{dM} = \frac{\left(f(s) - \frac{1}{2} \xi'(s)\right)}{\xi(s)}\,,
\label{eq:catastrophe_P}
\end{equation}
\end{small}
where $\mathcal{N} $ is the normalization factor.
\begin{figure}[t!]
\centering
\includegraphics[width=\textwidth]{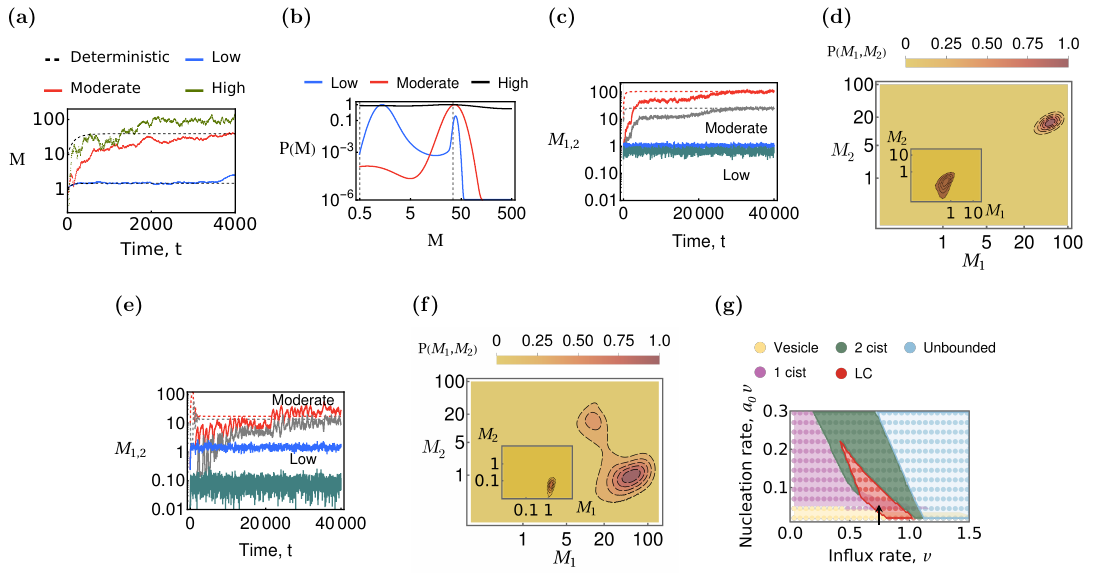}
\caption{{\bf  Robustness of stable cisternal phases under extrinsic noise}. (a) Stochastic trajectories for single cisterna system in the bistable regime with initial condition $M=0$, under extrinsic noise with strength $\epsilon M$, $\epsilon$ a constant (see Eq.\,\eqref{eq:robust_langevin_si}). The two stable fixed points corresponding to the deterministic system are shown as black dotted line. 
Individual trajectories corresponding to low noise ($\epsilon < v \, a_0$, blue), moderate noise ($v \, a_0 < \epsilon < v \, (a_0+1)$, red) and high noise ($\epsilon > v \, (a_0+1)$, green) indicating the robustness of vesicle, cisterna and the unbounded regimes, respectively.  Parameter values, $v = 0.5$, $C_{1}= 100$, $C_{2} = 40$, $a_{0}=0.05$.
(b) Steady state probability distribution of the Fokker-Planck equation Eq.\,\eqref{eq:Fokker_P} for corresponding trajectories in (a) showing bimodal (blue), unimodal (red) and unregulated system (black). Steady state cisterna sizes for the bistable deterministic system are shown as black dotted line.
(c) Stochastic trajectories of ($M_1,M_2$) for a multistable $2$-cisternae reduced (real) system (see proposition \ref{prop:prop}).
($M_1,M_2$) transitions from vesicle to cisterna phase as the noise strength is increased (see Eq.\,\eqref{eq:robust_langevin2}). Parameter values, $v=1.45$, $C_{11} = C_{21} = 100$, $C_{22} = 20$, $C_{12} = 20$, $d_{12} = 1$, $d_{21}=0.3$, $d_{1}=1$, $d_{2}=1.5$, 
$a_{1}=a_{2}=0.05$.
(d) Contour plot showing  steady state probability distribution $P(M_1,M_2)$  of the Fokker-Planck equation corresponding to (c) showing probability distribution for cisterna phase  for moderate noise and vesicle phase in low noise limit (inset). 
(e) Stochastic trajectories of ($M_1,M_2$) for a multistable $2$-cisternae system (with $J_{12}$, $J_{21}$ functions of both $M_1$ and $M_2$, see proposition \ref{prop:prop}) with initial condition ($M_1=0,M_2=0$), showing transition from vesicle phase to the limit cycle phase as the noise strength is increased. Parameter values, $v=1$, $C_{11} = C_{21} = 100$, $C_{22} = 20$, $C_{12} = 20$, $d_{12} = 1.67$, $d_{21}=0.5$, $d_{2}=0.83$, $a_{1}=a_{2}=0.05$.
(f) Contour plot showing steady state probability distribution $P(M_1,M_2)$ for Fokker-Planck system corresponding to (e).  The steady state probability distribution shows limit cycle phase for moderate noise (inferred from two maxima in $P(M_1,M_2)$) and vesicle phase in low noise limit (inset).
(g) Phase diagram in ($a_0 v,v$) plane, where $v$ is the influx rate and $a_0$ is the nucleation constant depicting vesicle phase (yellow), stable single cistena phase (violet), stable $2$-cisternae phase (green), limit cycle phase (red) and unbounded growth (blue). varying nucleation rate can change the solution class from vesicle to limit cycle phase, shown as black arrow. This explains the response to extrinsic noise observed in (e,f), as adding a multiplicative noise with strength, $\sqrt{\epsilon_i M_i}$, $\epsilon_i$ a constant, effectively tweaks the the nucleation rate, see Eq.\,\eqref{eq:shape1_si}.
Parameter values, $C_{11} = C_{21} = 100$, $C_{22} = 20$, $C_{12} = 20$, $d_{12} = 2.5$, $d_{21}=0.85$, $d_{2}=0.4$, $a_{1}=a_{2}=0.05$. Here, the retrograde flux, parametrized by $d_{21}$ is taken to be $M_1$-independent.}
\label{fig:robust_crit}
\end{figure}
Consequently, while the fixed points of the deterministic system are given by the roots of $f(M)$, the extrema of the distribution $P(M)$ are given by the roots of  $\left(f(M) - \frac{1}{2} \xi'(M)\right)$, defined as the shape function in \cite{cobb}. Hence, the extrema of the probability distribution corresponding to the stochastic system Eq.\,\eqref{eq:veq_noise_si} is given by,
\begin{eqnarray}
(-1 + v + a_0 \, v - \epsilon) M^3 +  (v + a_0 \,v - \epsilon) C_2 \, M^2 + C_1 (-1 + a_0 \, v - \epsilon) M + C_1 \, C_2 (a_0 \, v - \epsilon) = 0\,.
\label{eq:shape1_si}
\end{eqnarray}

It is clear that adding multiplicative noise of this functional form effectively leads to a modified version of Eq.\,\eqref{eq:cubicpoly} with nucleation rate $v \, a_0 \rightarrow (v \, a_0-\epsilon)$ and for $\epsilon > a_0 \, v$, sign signature of the sequence of coefficients for the above polynomial Eq.\,\eqref{eq:shape1_si} changes, leading to the change in the number of fixed points. For instance, consider the bistable regime for the deterministic system corresponding to Eq.\,\eqref{eq:veq_noise_si}. In this regime the sign signature for the sequence of coefficients is ($-,+,-,+$), see Table \ref{Tab:Tcr}. As the noise strength is increased,  $\epsilon > a_0 v $, the sign signature changes to ($-,+,-,-$) and the fixed points near zero are destroyed (Fig.\,\ref{fig:robust_crit}(a)) and the system has just one positive fixed point left.  On further increasing the noise strength, $\epsilon > (a_0 +1)v$, sign signature changes to ($-,-,-,-$) , the fixed point away from zero is destroyed as well and the system has no remaining fixed points.
The same can be seen from the corresponding Fokker-Planck equation. As we see in Fig.\,\ref{fig:robust_crit}(b), steady state distribution of the Fokker-Planck equation for one cisterna is  bimodal for low noise ($\epsilon << a_0 \; v$), it becomes unimodal for moderate noise ($v \, a_0 < \epsilon < v \, (a_0+1)$), and the system is without any maxima for $\epsilon > (a_0 +1)v$.  

The extension of Eq.\,\eqref{eq:catastrophe_P} to two cisternae, Eqs.\,\eqref{eq:retrograde1_si},\eqref{eq:retrograde2_si}  (Eqs.\,\eqref{eq:retrograde1},\eqref{eq:retrograde2} in the main text) is not always possible as it requires integrability (potential) conditions~\cite{kampen,gardiner,Kloeden1992}, not satisfied by the dynamical system  Eqs.\,\eqref{eq:retrograde1_si},\eqref{eq:retrograde2_si}. With the potential condition (zero nonequilibrium current \cite{gardiner}), robustness bounds for the root structure of the two cisternae case are similar to  robustness bounds for the one cisterna case. 
As before, if we assume that extrinsic fluctuations in the size of a cisterna is only due to local factors, addition of cisternal size dependent extrinsic multiplicative noise for two cisternae case gives,
\begin{small}
\begin{eqnarray}
\dot{M}_1 = f_1(M_1,M_2) + \sqrt{\xi_1(M_1)} \eta_1   \hspace{1.5cm} \dot{M}_2 = f_2(M_1,M_2) + \sqrt{\xi_2(M_2)} \eta_2\, \,,
\label{eq:robust_langevin2}
\end{eqnarray}
\end{small}

where $f_1,f_2$  are given by deterministic dynamics in Eqs.\,\eqref{eq:retrograde1_si}, \eqref{eq:retrograde2_si} and $\eta_1,\eta_2$ are Gaussian white noises. Furthermore, we analyse the change in root structure for $\xi_1(M_1) = \epsilon_1 \, M_1, \, \xi_2(M_2) = \epsilon_2 \, M_2$, i.e., stochastic addition and removal of vesicles is proportional to the cisterna size, as for single cisterna case. For $2$-cisternae system, if one takes $J_{12} = d_{12} M_1/(C_{12}+M_1)$ and  $J_{21}=d_{21} M_2/(C_{22}+M_2)$ (see Eqs.\,\eqref{eq:retrograde1_si}, \eqref{eq:retrograde2_si}), i.e., if $J_{12}$ is a function of $M_1$ and $J_{21}$ is function of $M_2$ alone, then from proposition \ref{prop:prop}, the system only admits real eigenvalues and has only fixed point solutions. For such a system, ($M_1,M_2$) can go from vesicle phase to cisterna phase as noise strength is increased as can be seen in the stochastic trajectory, Fig.\,\ref{fig:robust_crit}(c) as well as with the steady state distribution of the corresponding Fokker-Planck system, Fig.\ref{fig:robust_crit}(d). Following steps similar to Eq.\,\eqref{eq:catastrophe_P}, it gives that the system remains in vesicle phase for small noise ($\epsilon_1/2 + \epsilon_2/2 < v\, a_1$, see Eqs.\,\eqref{eq:retrograde1_si},\eqref{eq:retrograde2_si},\eqref{eq:robust_langevin2}) but goes to the cisternae phase as the noise strength is increased, as the roots (fixed points) near zero are destroyed. 

Furthermore, for the $2$-cisternae system, if $J_{12}$, $J_{21}$ are functions of both $M_1$ and $M_2$, the system can admit oscillatory solutions, thereby enabling a noise-induced transition  from the vesicle/cisternae phase to the limit cycle (CP) phase (Fig.\ref{fig:robust_crit}(e,f)).
This happens as the multiplicative noise with the strength, $\epsilon_i M_i$ modifies the nucleation rate as mentioned before, and therefore changes the effective bifurcation parameters for the system to enter the oscillatory regime, see the phase diagram, Fig.\,\ref{fig:robust_crit}(g) in the $(v,a_0\,v)$ plane. Crucially, multiplicative noise can cause a noisy system to exhibit oscillations by changing the effective size dependence of the fluxes and consequently, its admissible solution class.

The above analysis can in principle be extended to a general class of noises which are smooth positive functions of cisternal size, but this is beyond the scope of this work. Furthermore, in the bistable parameter regime, system can be in one of the states with escape rate given by Kramers' barrier \cite{gardiner}. his can be used to devise a safety mechanism that yields a robust control mechanism to mitigate the spurt of material flux from ER, which is a subject of future work.
\subsection{Robustness to intrinsic noise}
\label{fig:robust_noise_intrinsic}
We check for robustness of phases under intrinsic noise via  Stochastic simulations (Gillespie algorithm \cite{gillespieE})  using macroscopic kernels -- ${\cal K}^{fus},{\cal K}^{fis}$, and intercisternal rates as propensities. For multiple cisterna, different propensities are given by Eqs.\,\eqref{eq:fusi},\eqref{eq:fisi},\eqref{eq:trai},\eqref{eq:trri}, 
\begin{eqnarray*}
 {\cal K}^{fus} (M_i) &&= \left(a_{i}+\frac{M_i^2}{C_{i1} + M_i^2}\right) \hspace{3.6cm} {\cal K}^{fis} (M_i) = \frac{d_{i} \, M_i}{C_{i2} + M_i} \\ {\cal K}^{tr}_{a} (M_i) &&= \frac{d_{i2} \, M_i}{C_{i2} + M_i}\left(a_{i+1}+\frac{M_{i+1}^2}{C_{(i+1)1} + M_{i+1}^2}\right) \hspace{1.5cm}  {\cal K}^{tr}_{r} (M_i) = \frac{d_{(i+1)2} \, M_{i+1}}{C_{(i+1)2} + M_{i+1}}\left(a_{i}+\frac{M_i^2}{C_{i1} + M_i^2}\right).
\end{eqnarray*}
The Gillespie algorithm is used to generate the stochastic trajectories with the above propensities from which the asymptotic mean size and variation about the mean for one or multiple cisternae system can be computed for a given range of parameters. As in the the deterministic case, the mean value can be used to generate a phase diagram that accounts for intrinsic noise. 

For single cisterna, the modified phase diagrams (Fig.\,\ref{fig:robust_crit_in}(a)) in the ($v,C_2$) plane, where $v$ is influx rate  and $C_2$ is Hill saturation constant for fission, are similar to those of the deterministic case (Fig.\,\ref{fig:p_roots1}(b)), but differs in some regions, especially near the vesicle-cisterna phase boundary. However, when one computes the coefficient of variation (CV, i.e., the ratio of the standard deviation to the mean), one observes these anomalous regions have, CV $>1$ (Fig.\,\ref{fig:robust_crit_in}(b)), suggesting the behaviour is dominated by copy number fluctuations in these regions and is not captured by mean value analysis. A similar trend is observed for the phase diagram with intrinsic noise for the $2$-cisternae case (Fig.\ref{fig:robust_crit_in}(c,d)), compared to the deterministic phase diagram Fig.\ref{fig:overlap}(a).
\begin{figure}[t!]
\centering
\includegraphics[width=\textwidth]{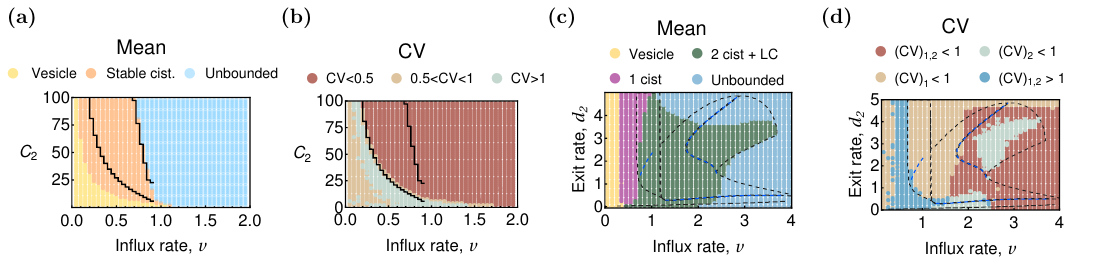}
\caption{{\bf  Robustness of stable cisternal phases under intrinsic noise.}
 (a) Phase diagram for mean cisternal size under intrinsic noise for single cisterna system in ($v,C_2$) plane where $v$ is the influx rate  and $C_2$ is the Hill saturation constant for fission depicting vesicle phase (yellow), stable cisterna phase (orange) and the unbounded growth (blue).
 The vesicle-cisterna and the cisterna-unbounded growth boundaries for the deterministic system are shown as black solid lines, for comparison.
 (b) The coefficient of variation ({\bf CV}, i.e., the ratio of the standard deviation to the mean) in ($v,C_2$) plane is high at the vesicle-cisterna boundary suggesting the system is dominated by copy number fluctuations. (a,b) Parameter values, $C_{1}= 100$, $a_{0}=0.05$ (see Eq.\,\eqref{eq:veq_Hd}).
 (c,d) Phase diagram for mean cisternal size and coefficient of variation for $2$-cisternae system. Phases for two cisterna dynamics under intrinsic noise also suffers from copy number fluctuations near phase/bifurcation boundaries. Saddle-node bifurcation boundaries are shown in grey and Hopf bifurcation boundaries are shown in blue, see Fig.\,\ref{fig:overlap}(a) for annotated bifurcation boundaries. (c,d) Parameter values, $C_{11} = C_{21} = 100$, $C_{22} = 20$, $C_{12} = 20$, $d_{12} = 3$, $d_{21}=0$, $a_{1}=a_{2}=0.05$ (see Eqs.\,\eqref{eq:retrograde1},\eqref{eq:retrograde2}).}
\label{fig:robust_crit_in}
\end{figure}

\section{De novo cisternal assembly time \label{sec:denovo_tau}}
\subsection{De novo assembly of single cisterna \label{sec:1cist_tau}}
In the main text (Sect.\,\ref{sec:single_cist}), we have used the functional form of the de novo formation time for the single cisterna. In this section, we sketch a non-perturbative method for estimating this, which involves computation of time taken by a general nonlinear system to escape a saddle-node ghost and reach the asymptotic solution. Let us consider Eq.\,\eqref{eq:sveq_H1} (Eq.\,\eqref{eq:veq_Hd} in the main text) again,
\begin{small}
\begin{equation}
\frac{dM}{dt} =  v \, \left(a_0 +  \frac{M^2}{C_1 + M^2} \right)- \frac{d \, M}{C_2 + M}\,.
\label{eq:main}
\end{equation}
\end{small}
By integrating the above equation, we can compute the time taken by the above system to reach a desired steady-state cisterna size $M^*$,
\begin{small}
\begin{equation}
\int_0^{\tau_{denovo}} dt=  \int_0^{M^*} dM \, \left[ v \, \left(a_0 +  \frac{M^2}{C_1 + M^2} \right)-\frac{d \, M}{C_2 + M} \right]^{-1}\,,
\label{eq:struct}
\end{equation}
\end{small}
which in principle should give the functional dependence of the cisterna de novo formation time, $\tau_{denovo}$ on the system parameters. However, above integral diverges logarithmically at the roots of Eq.\,\eqref{eq:main}. One way around this is to work in a regime where the system can have at most one fixed point $M^*$ (i.e., cubic discriminant $\Delta<0$), and then analyse $\tau_{denovo}$ as we take different limits, such as  $\Delta \rightarrow 0$. For the above integral Eq.\,\eqref{eq:struct}, this involves tedious expressions for roots of a cubic equation \cite{katz}. To tackle this, we proceed with computation of the cisterna formation time in the following simpler models:\\
\textbf{Saturated fission kernel:} We take a piecewise continuous approximation for fusion kernel and a linear approximation for fission kernel, i.e.,
\begin{small}
\begin{eqnarray}
\label{eq:Rfussimp}
{\cal K}^{fus} = v(a_0 + \Theta(M-C_a)) \hspace{0.8cm}
{\cal K}^{fis} = M/C_b \,,
\label{eq:Rfissimp}
\end{eqnarray}
\end{small}
$\Theta$ being the Heaviside function. It has two stable steady states given by $M^* = \{a_0\, C_b \,v, (1 + a_0) \, C_b \,v$\}. We compute the integration time up to $q\, M^*\,( 0<q<1)$, $M^*$ being the larger steady state.
\begin{small}
\begin{eqnarray}
{\tau_{denovo}} &&=  \int_0^{C_a} dM \, \left[ v(a_0 + \Theta(M-C_a) - M/C_b\right]^{-1} +  \int_{C_a}^{q M^*} dM \, \left[ v(a_0 + \Theta(M-C_a) - M/C_b \right]^{-1} \nonumber  \\ 
&&= \underbrace{- \,C_b \log(a_0 C_b v-C_a) \,+\, C_b \log (a_0 C_b v)}_{\textrm{Saddle-node } 1} \, \underbrace{ +\, C_b \log \left(C_b v (1+a_0) - C_a\right) }_{\textrm{Saddle-node } 2} \, - \, C_b \log \left(C_b v (1+a_0) (1- q)\right) \,.
\label{eq:tdenovo}
\end{eqnarray}
\end{small}
In the above expression, the arguments of the two clubbed logarithmic expressions give the condition for saddle-node bifurcation, i.e, $C_a/(a_0 C_b v) = 1$ and  $C_a/((a_0+1) C_b v) = 1$. From Eq.\,\eqref{eq:tdenovo}, expression for saddle-node $1$ (Fig.\,\ref{fig:SNIC_1cist} (a)), can be rewritten as, 
\begin{small}
\begin{equation}
{\tau_{denovo}} \bigg|_0= - C_b \log \left(1-\frac{C_a}{a_0 C_b v}\right) = - C_b \log (1-x) = - \frac{C_a}{a_0 v} \frac{1}{x} \log(1-x) \sim \frac{C_a}{a_0 v} (1-x)^{-1/2}\,,
\label{eq:tdenovo_0}
\end{equation}
\end{small}
where $x= \frac{C_a}{a_0 C_b v} = \frac{v_{SN}}{v} $, is the bifurcation parameter and we have defined a critical influx rate $v_{SN}$ at which saddle-node bifurcation occurs. The above expression implies that as $v$ nears the critical bifurcation parameter $v_{SN}$, $\tau_{denovo}$ increases. Hence, the effect of the saddle-node bifurcation persists even after the fixed points have disappeared. As mentioned in the Glossary\,\ref{box:definitions}, this phenomenon, known as the saddle-node ghost, causes a critical slowing down where the system `remembers' the vanished fixed points, resulting in a time delay, that as seen in Eq.\,\eqref{eq:tdenovo_0}, scales as the inverse square root of the distance from the bifurcation \cite{strogatz}.
Furthermore, rest of the expression Eq.\,\eqref{eq:tdenovo_0}, including the expression for saddle-node $2$ (Fig.\ref{fig:SNIC_1cist} (a)), can be rewritten as,
\begin{small}
\begin{eqnarray}
\tau_{denovo}\bigg|_{M^{*}} &&= -C_b \log \left(C_b v (1+a_0) (1- 1/e)\right)+ C_b \log \left(C_b v (1+a_0) - C_a\right)
 \nonumber \\ &&= - C_b \log \left(1-q\right) -C_b \log \left(1-\frac{C_a}{(a_0+1) C_b v}\right) \,.
\label{eq:tdenovo_M}
\end{eqnarray}
\end{small}
Summing Eqs.\,\eqref{eq:tdenovo_0} and \eqref{eq:tdenovo_M}  yields the de novo cisterna formation time,
\begin{small}
\begin{equation}
\tau_{denovo} = \underbrace{-C_b \log \left(1-q\right) }_{\tau_f} \underbrace{- C_b \log \left(1-\frac{C_a}{a_0 C_b v}\right)}_{\tau_{SN}^1}\underbrace{-C_b \log \left(1-\frac{C_a}{(a_0+1) C_b v}\right)}_{\tau_{SN}^2}\,,
\label{eq:tdenovo_full}
\end{equation}
\end{small}
where $\tau_f$ is the characteristic time scale at the asymptotic fixed point $M^*$ and $\tau_{SN}^{(1,2)}$ are saddle-node delays. Note that, in lowest order of $C_b$, cisterna formation time scale $\tau_{denovo}\sim C_b \log(1-q)$, with higher order contributions from saddle-node ghosts delays. 
The formula Eq.\,\eqref{eq:tdenovo_full} can be extended to kernels admitting more number of saddle-node bifurcations, 
\begin{small}
\begin{equation}
\tau_{denovo} \approx -C_b \log \left(1-q\right) - C_b \sum_i\log \left(1-x_i\right)\,,
\label{eq:tdenovo_SNs}
\end{equation}
\end{small}
where $x_i$ are all the small parameters corresponding to different \textit{saddle node bifurcations} (e.g. $\frac{C_a}{a_0 C_b v}$ above). 
As we consider more complex flux kernels, the characteristic time scales are more complicated functions, as we will see below. 
\\
\textbf{Unsaturated fission kernel:} For the next level of approximation, we will compute the cisterna formation time for the unsaturated fission kernel, i.e., Hill-type function for fission kernel and Heaviside function for fusion kernel, 
\begin{small}
\begin{eqnarray}
{\tau_{denovo}}  &=& \int_0^{C_1} dM \, \left[ v\left(a_0 + \Theta(M-C_1)\right) -\frac{d_1 \, M}{C_2 + M}\right]^{-1} +  \int_{C_1}^{q\,M^*} dM \, \left[ v\left(a_0 + \Theta(M-C_1)\right) - \frac{d_1 \, M}{C_2 + M}\right]^{-1} 
\end{eqnarray}
\end{small}
\begin{small}
\begin{eqnarray}
&&= \frac{C_2 \, d_1 \log \left(\frac{a_0 \, C_2 \, v}{a_0 \, v (C_1+C_2)-C_1 d_1}\right)+C_1 \, (a_0\, v-d_1)}{(a_0 \,  v-d_1)^2}  + \frac{-(q\,M^*-C_1)\,(d_1-(a_0+1) v) +C_2 \, d_1 \log \left(\frac{-C_1\,  d_1 + (a_0+1) \,( C_1+ C_2)\, v}{-q\,M^*\,d_1 + (a_0+1) \, (q\,M^*+C_2)\,v} \right)}{(a_0\, v+v-d_1)^2}  \,,\nonumber \\ 
\label{eq:tdenovo_sat}
\end{eqnarray}
\end{small}
where $M^* = \, (1+a_0) \, C_2 \, v/(d_1-a_0-a_0\,v),\, 0<q<1$ and $M^*$ is the steady state cisterna size. In the above expression, $d_1>v+a_0\, v $ ensures the existence of a positive, stable fixed point. These conditions, combined with $a_0(C_1+C_2) \, v - C_1\,d_1>0$ and $q\,M^*>C_1$ guarantee the positivity of the above integral. In the lowest order in $q$, Eq.\,\eqref{eq:tdenovo_sat} can be approximated by
\begin{small}
\begin{eqnarray}
{\tau_{denovo}} &\approx& \underbrace{\frac{(C_1 + C_2 \,q)}{(d_1 - (1 + a_0) v)}}_{\tau_f} + \underbrace{\frac{C_2 d_1 }{(a_0 \,  v-d_1)^2} \log \left(\frac{a_0 \, C_2 \, v}{a_0 \, v (C_1+C_2)-C_1 d_1}\right)}_{\tau_{SN}^1} \nonumber + \underbrace{\frac{C_2 \, d_1}{(a_0\, v+v-d_1)^2}  \log \left(\frac{-C_1\,d_1+(a_0+1) v (C_1+C_2)}{(a_0+1) \, C_2 \, v}\right)}_{\tau_{SN}^2} \,.\nonumber \\
\label{eq:tdenovo_sat_full}
\end{eqnarray}
\end{small}
The arguments of the $\log$ functions above give the saddle-node bifurcation conditions, i.e.,  $a_0(C_1+C_2) \, v - C_1 d_1>0$ and  $(a_0+1)(C_1+C_2) \, v - C_1 d_1>0$.
As before, The characteristic time scales in front of the log terms in Eq.\,\eqref{eq:tdenovo_sat_full} are determined by the inverses of the eigenvalues, evaluated at the poles of the integrands in the first and second integrals above, respectively. We can see that the characteristic time scales have been modified compared to Eq.\,\eqref{eq:tdenovo_full}. This is due to the fission rate now depending on the cisternal size, which leads to the characteristic time scales in Eq.\,\eqref{eq:tdenovo_sat_full}, depending on the influx rate, $v$.

\begin{figure*}[t!]
\centering
\includegraphics[width=\textwidth]{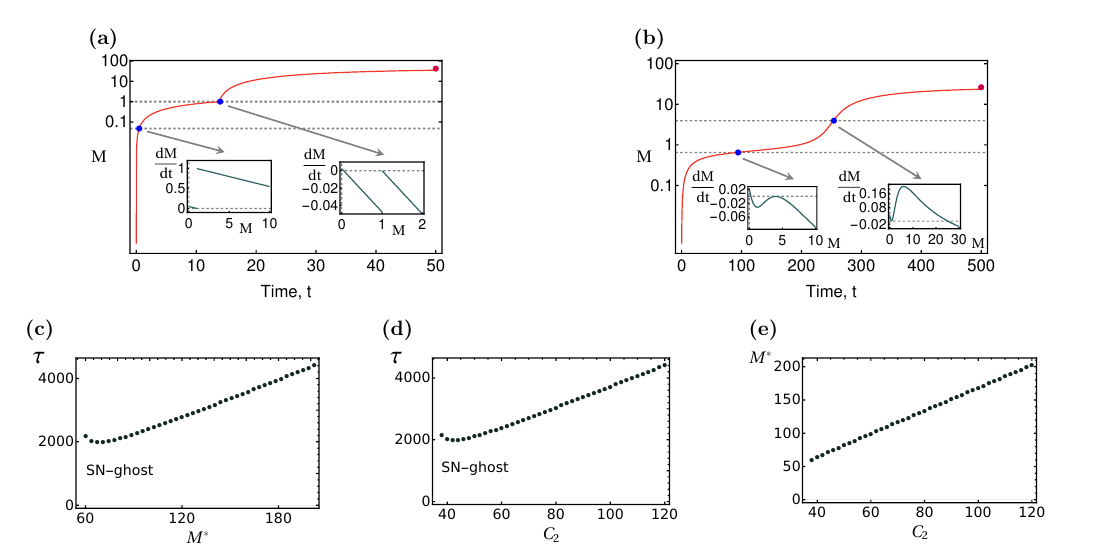}
\caption{{\bf Cisterna formation time for one cisterna.} (a) Cisterna growth over time for Heaviside fusion kernel and saturated fission kernel (see Eq.\,\eqref{eq:Rfissimp}) showing saddle-node bifurcations (blue dots, representing double root points) and the asymptotic cisterna size (red dot). (Inset) The rate of change of cisterna size at the two saddle-node bifurcation points leads to saddle-node delays. Parameter values, $v=1$, $C_a=1$, $C_b=20$, $a_0=0.05$. (b) Cisterna growth over time for Hill-type kernels (see Eq.\,\eqref{eq:veq_Hd}) showing saddle-node bifurcations (blue dots, representing double root points) and the asymptotic cisterna size (red dot). (Inset) rate of change of cisterna size at the two saddle-node bifurcation points. Parameter values, $v=0.7$, $C_1=10$, $C_2=10$, $a_0=0.05$. (c) The de novo cisterna formation time $\tau$ as a function of $M^*$ depicting non-monotonic dependence at lower values of $M^*$ in the proximity of saddle-node ghost (see Eq.\,\eqref{eq:veq_Hd}).
(d) Non-monotonic dependence of cisterna formation time $\tau$ as a function of Hill saturation constant for fission, $C_2$. (e) The steady state cisterna size $M^*$ increases with $C_2$. (c,d,e) Parameter values, $v=0.6$, $C_1=100$, $a_0=0.05$.}
\label{fig:SNIC_1cist}
\end{figure*}
\noindent\textbf{Saddle-node ghost:} Consider now the case where both fusion and fission rates are saturated. This case has the same functional form as the saddle-node bifurcation. The de novo cisterna formation time can be obtained from integration, as before
\begin{small}
\begin{eqnarray}
{\tau_{denovo}}  &=&   \int_0^{A_M^*} dM \, \left[ v\left(a_0 + \frac{M^2}{C_1}\right) -\frac{d_1 \, M}{C_2}\right]^{-1} = -\frac{2 C_1 C_2\left(\tan ^{-1}\left(\frac{2 A_M^* C_2 v-C_1 d_1}{\sqrt{C_1 \left(-C_1 d_1{^2}+4 a_0 C_2^2 v^2\right)}}\right)+\tan^{-1}\left(\frac{C_1 d_1}{\sqrt{C_1 \left(-C_1 d_1{^2}+4 a_0 C_2^2 v^2\right)}}\right)\right)}{\sqrt{C_1 \left(-C_1 d_1^{2}+4 a_0 C_2^{2} v^2\right)}} \,,\nonumber \\
\label{eq:tdenovo_sat_2}
\end{eqnarray}
\end{small}
where $A_M^*$ is a chosen upper limit (terminal state) for integration. Here, as before, the characteristic time in front is given by the inverse of the derivative computed at the pole of the integrand. The condition for saddle-node bifurcation is given by imposing reality condition on the integrand, i.e.,  $ -C_1 d_1{^2}+4 a_0 C_2^2 v^2 >0 $, which is also the discriminant of the polynomial given by the inverse of the integrand. This gives the criteria for small parameter for the saddle-node ghost, $\frac{C_1 d_1{^2}}{4 a_0 C_2^2 v^2} = x < 1$. We can approximate $\tau_{denovo}$ from Eq.\,\eqref{eq:tdenovo_sat_2} by taking $A_M^*$ to be the double root (it is always greater than or equal to the real part of the complex root) and the expression can be written in a simpler form in terms of small quantity, $x$ as
\begin{small}
\begin{eqnarray}
{\tau_{denovo}} = 2 K \frac{x}{\sqrt{1-x^2}} \tan ^{-1} \left( \frac{x}{\sqrt{1-x^2}}\right) \approx K \frac{x}{\sqrt{1-x^2}} \log(1-x) \approx K \frac{x^2}{\sqrt{1-x^2}} \left(\frac{1}{\sqrt{1-x}}\right) \,,
\label{eq:tdenovo_sat_3}
\end{eqnarray}
\end{small}
where $K =C_2/d_1$. This provides several equivalent formulae for computing the saddle-node delay. Moreover, beyond a saddle-node bifurcation point  (discriminant $\Delta < 0 $), indexed by $i$, we can look at the real and imaginary parts of the complex roots (fixed points) $R_i$, i.e., poles of the integrand in Eq.\,\eqref{eq:tdenovo_sat_3}. With straightforward algebraic manipulations, we can see that the small parameter for the $i^{th}$ saddle-node bifurcation,  $x_i= \left(\frac{C_1 d_1{^2}}{4 a_0 C_2^2 v^2}\right)_i$, which is the ratio of the two factors in the discriminant of a quadratic polynomial, can also be computed from the real and imaginary parts of the fixed point, $R_i$
\begin{eqnarray}
x_i = \sqrt{(Re[R_i])^2/\left((Re[R_i])^2 + (Im[R_i])^2\right)}\,.
\label{eq:bif_Ei}
\end{eqnarray}

Taken together, the results from the Eqs.\,\eqref{eq:tdenovo_SNs},\eqref{eq:tdenovo_sat_full},\eqref{eq:bif_Ei}, the de novo cisterna formation time can be computed using the following formula,
\begin{eqnarray}
\tau_{denovo} &\approx &  \tau_{f}(M^*) +  \sum_i  \tau_{SN}^i(E_i,x_i) \nonumber \\ & \hspace{-1.5cm} & i \in \textrm{SN bifurcations encompassed by the trajectory,}
\end{eqnarray}
where $\tau_f$ is the characteristic time scale computed at the steady state $M^*$ (eigenvalue at $M^*$), while the timescale corresponding to $i^{th}$ SN-ghost delay can be computed from the eigenvalue $E_i$ and the small parameter $x_i$. 

In Fig.\,\ref{fig:SNIC_1cist}(b), we have considered the dynamical system for the single cisterna, Eq.\,\eqref{eq:main}, and we have used the above formula to compute the de novo cisterna formation time. It shows the delay caused by the two SN-ghosts $\tau_{SN}$, and the characteristic time $\tau_f$ at the steady state cisternal size.

\subsection{De novo assembly of two cisternae \label{sec:2cist_tau}}

\begin{figure*}[t!]
\centering
\includegraphics[width=\textwidth]{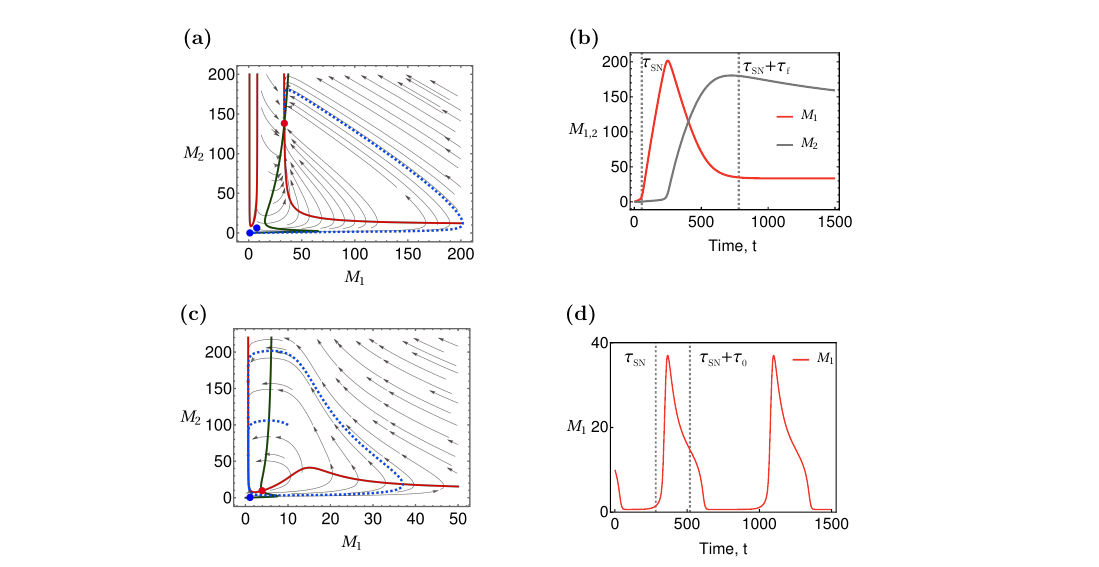}
\caption{{\bf Cisternae formation time for two cisternae.} (a) Trajectory in ($M_1,M_2$) plane for fixed point solutions is shown (blue dotted line). The real parts of the saddle-node ghosts are shown (blue dots), along with the asymptotic fixed point (red dot). Parameter values, $v=2$, $C_{11}= 100$, $C_{22} = 20$, $C_{12} = 20$, $d_{12} = 2$, $d_{21}=0$, $d_{1}=1$, $d_{2}=1.5$, $a_{1}=a_{2}=0.05$. (b) The saddle node delays ($\tau_{SN}$, left black dotted line) and the characteristic time ($\tau_f$) at the asymptotic fixed point in (a) contribute to the cisternae formation time ($\tau_{SN}$ + $\tau_f$, right black dotted line) for (a), see Eq.\,\eqref{eq:FP_taum}.  (c) The trajectory for limit cycle solutions in ($M_1,M_2$) plane is shown (blue dotted line). The real part of the saddle-node ghost is shown (blue dot), along with the central fixed point (red dot). (d) The saddle node delay ($\tau_{SN}$,  left black dotted line) and the linearized period $\tau_0$,
computed at the central fixed point in (c) contribute to the cisternae formation time ($\tau_{SN}$ + $\tau_0$, right black dotted line), see Eq.\,\eqref{eq:LC_taum}. For clarity, only $M_1$ is plotted. Parameter values, $v=2.3$, $C_{11}= 100$, $C_{22} = 20$, $C_{12} = 20$, $d_{12} = 3$, $d_{21}=0$, $d_{1}=1$, $d_{2}=0.8$, $a_{1}=a_{2}=0.05$.}
\label{fig:SNIC_2cist}
\end{figure*}

For the  dynamical system corresponding to two cisternae,
\begin{equation}
\dot{M}_1 = f(M_1,M_2) \hspace{2cm} \dot{M}_2 = g(M_1,M_2)\,,
\label{eq:2cist_ds}
\end{equation}
the time scales can be computed from the eigenvalues of the Jacobian $\nabla(f,g)$ evaluated at the fixed points of the system, given by $f=0,g=0$. As before, there are two time scales (a) $\tau_f$, the characteristic time scale at the final steady state $(M_1^*,M_2^*)$ (b) time delays due to saddle-node ghosts.
Now, up to a linear approximation, the solution of the above dynamical system near a fixed point $(M_1^*,M_2^*)$ is given by
\begin{eqnarray}
\mathbf{M}(t) = A_0 \exp(\lambda_1 t) \; \mathbf{e_1} + B_0 \exp(\lambda_2 t) \; \mathbf{e_2}\,,
\end{eqnarray}
where $\mathbf{M}(t) = (M_1(t),M_2(t))$, $A_0,B_0$ are set by initial conditions, and growth is along the two eigen directions $\mathbf{e}_1,\mathbf{e}_2$, with rates $\lambda_1, \lambda_2$.  Near saddle-node bifurcation, $(A_0,B_0,\lambda_1,\lambda_2)$ are complex-valued, and with basic algebraic manipulations, it can be seen 
that the maximum of imaginary parts of $\lambda_1,\lambda_2$ determines the time scale of escape from the saddle node ghost (it provides the fastest escape route and hence determines the delay time scale). As in the single cisterna case, we can compute the eigenvalues $E_i$'s, at the fixed points (complex roots) $R_i$'s and the bifurcation parameters $x_i^k = \sqrt{(Re[R_i^k])^2/\left((Re[R_i^k])^2 + (Im[R_i^k])^2\right)}; \; k= (1,2)$. These can be used to compute the saddle-node delays,
\begin{eqnarray*}
\tau_{SN}^i(E_i^1,E_i^2,x_i^1,x_i^2) =  \min_k \tau_{SN}^k(E_i^k,x_i^k) \hspace{2cm} \tau_{SN}^k(E_i^k,x_i^k)  = \norm{1/E_i^k} \log(1-x_i^k) \,.
\end{eqnarray*}
Hence, the total time of formation for the two cisternae can be approximated by
\begin{eqnarray}
\tau_{denovo} &\approx &  \tau_{f}(M_1^*,M_2^*)   +   \sum_i  \tau_{SN}^i(E_i^1,E_i^2,x_i^1,x_i^2) \nonumber \\ && i \in \textrm{SN bifurcations encompassed by the trajectory,}
\label{eq:FP_taum}
\end{eqnarray}
where $\tau_f$ is the characteristic time scale computed at the steady state $M^*$ (eigenvalue with the least negative real part at $M^*$), while the timescale corresponding to $i^{th}$ SN-ghost delay is determined by the eigenvalue $E_i = (E_i^1,E_i^2)$ and the bifurcation parameter $x_i=(x_i^1,x_i^2)$. \\

\noindent \textbf{Time scale for the limit cycle:} the above analysis can be extended to compute the time scale for a limit cycle by summing over all the saddle-node ghost delays encompassed by the limit cycle added to the linearized period $\tau_0$, derived from the imaginary part of the eigenvalues at the central fixed point (see Fig.\,\ref{fig:SNIC_2cist}(b)).
\begin{eqnarray}
\tau_{LC} &\approx & \tau_0 +   \sum_i \tau_{SN}^i(E_i^1,E_i^2,x_i^1,x_i^2) \nonumber \\  && i \in \textrm{SN bifurcations encompassed by the trajectory.}
\label{eq:LC_taum}
\end{eqnarray}

\section{Supplementary Movies}

The following video files are associated with this manuscript.

\begin{itemize}
    
    \item \textbf{Video 1 (\texttt{FPsolution.mp4}):} 
    Time evolution of $2$-cisternae system ($M_1(t),M_2(t)$), in the parameter regime where fixed point solutions obtain. It also shows $J_{in}$ -- influx from ER,  $J_{exit}$ -- exit flux from second cisterna,  and $J_{int}$ -- anterograde intercisternal flux. For visualization purposes, all the fluxes in the video have been scaled by a factor of 50.
Parameter values, $v = 2$, $d_1 = 1$, $d_2 = 1.6$, $d_{12} = 2.5$, $d_{21} = 0.2$, $C_{11} = C_{21} = 100$, $C_{22} = 20$, $C_{12} = 20$, $a_1 = a_2 = 0.05$.
See Eqs.\,\eqref{eq:retrograde1},\eqref{eq:retrograde2}  in the main text.
    
        \item \textbf{Video 2 (\texttt{OutphaseLC.mp4}):} 
    Time evolution of $2$-cisternae system ($M_1(t),M_2(t)$), in the parameter regime where out-of-phase limit cycle solutions obtain. It also shows $J_{in}$ -- influx from ER,  $J_{exit}$ -- exit flux from second cisterna,  and $J_{int}$ -- anterograde intercisternal flux. For visualization purposes, all the fluxes in the video have been scaled by a factor of 50.
Parameter values, $v = 1.7$, $d_1 = 1$, $d_2 = 1.5$, $d_{12} = 2.5$, $d_{21} =0.2$, $C_{11} = C_{21} = 100$, $C_{22} = 20$, $C_{12} = 20$, $a_1 = a_2 = 0.05$.
See Eqs.\,\eqref{eq:retrograde1},\eqref{eq:retrograde2} in the main text.

    \item \textbf{Video 3 (\texttt{InphaseLC.mp4}):} 
    Time evolution of $2$-cisternae system ($M_1(t),M_2(t)$), in the parameter regime where in-phase limit cycle solutions obtain. It also shows $J_{in}$ -- influx from ER,  $J_{exit}$ -- exit flux from second cisterna,  and $J_{int}$ -- anterograde intercisternal flux. For visualization purposes, all the fluxes in the video have been scaled by a factor of 50.
Parameter values, $v = 1.2$, $d_1 = 1$, $d_2 = 0.8$, $d_{12} = 3$, $d_{21} =0$, $C_{11} = C_{21} = 100$, $C_{22} = 20$, $C_{12} = 20$, $a_1 = a_2 = 0.05$.
See Eqs.\,\eqref{eq:retrograde1},\eqref{eq:retrograde2} in the main text.
\end{itemize}


\medskip

\end{document}